\providecommand{\tabularnewline}{\\}
\numberwithin{equation}{section}
\numberwithin{figure}{section}
\numberwithin{table}{section}
\theoremstyle{remark}
\newtheorem*{notation*}{\protect\notationname}
\theoremstyle{plain}
\newtheorem{assumption}{\protect\assumptionname}
\theoremstyle{plain}
\newtheorem{thm}{\protect\theoremname}[section]
\theoremstyle{definition}
\newtheorem*{defn*}{\protect\definitionname}
\theoremstyle{remark}
\newtheorem{rem}{\protect\remarkname}[section]
\theoremstyle{plain}
\newtheorem{cor}{\protect\corollaryname}[section]
\theoremstyle{plain}
\newtheorem{lem}{\protect\lemmaname}[section]
\theoremstyle{definition}
\newtheorem*{urtest}{\protect\urtestname}
\def\@secnumfont{\bfseries}
\def\section{\@startsection{section}{1}%
\z@{1.0\linespacing\@plus\linespacing}{0.5\linespacing}%
{\normalfont\bfseries\centering}}
\newtheorem{example}{Example}
\newcommand{\ass}[1]{{\upshape{\scalefont{0.76}{#1}}}}
\setlist[enumerate,1]{label=\upshape{(\roman*)}, ref=(\roman*)}
\setlist[enumerate,2]{label=\upshape{(\alph*)}, ref=(\alph*)}
\setlist[enumerate,3]{label=\upshape{\roman*.}, ref=\roman*}
\providecommand{\assumptionname}{Assumption}
\providecommand{\corollaryname}{Corollary}
\providecommand{\definitionname}{Definition}
\providecommand{\lemmaname}{Lemma}
\providecommand{\notationname}{Notation}
\providecommand{\remarkname}{Remark}
\providecommand{\theoremname}{Theorem}
\providecommand{\assumptionname}{Assumption}
\providecommand{\corollaryname}{Corollary}
\providecommand{\definitionname}{Definition}
\providecommand{\lemmaname}{Lemma}
\providecommand{\notationname}{Notation}
\providecommand{\remarkname}{Remark}
\providecommand{\theoremname}{Theorem}
\providecommand{\urtestname}{Unit root test procedure}
\begin{document}


\global\long\def\uwrite#1#2{\underset{#2}{\underbrace{#1}} }%

\global\long\def\blw#1{\ensuremath{\underline{#1}}}%

\global\long\def\abv#1{\ensuremath{\overline{#1}}}%

\global\long\def\vect#1{\mathbf{#1}}%


\global\long\def\smlseq#1{\{#1\} }%

\global\long\def\seq#1{\left\{  #1\right\}  }%

\global\long\def\smlsetof#1#2{\{#1\mid#2\} }%

\global\long\def\setof#1#2{\left\{  #1\mid#2\right\}  }%


\global\long\def\goesto{\ensuremath{\rightarrow}}%

\global\long\def\ngoesto{\ensuremath{\nrightarrow}}%

\global\long\def\uto{\ensuremath{\uparrow}}%

\global\long\def\dto{\ensuremath{\downarrow}}%

\global\long\def\uuto{\ensuremath{\upuparrows}}%

\global\long\def\ddto{\ensuremath{\downdownarrows}}%

\global\long\def\ulrto{\ensuremath{\nearrow}}%

\global\long\def\dlrto{\ensuremath{\searrow}}%


\global\long\def\setmap{\ensuremath{\rightarrow}}%

\global\long\def\elmap{\ensuremath{\mapsto}}%

\global\long\def\compose{\ensuremath{\circ}}%

\global\long\def\cont{C}%

\global\long\def\cadlag{D}%

\global\long\def\Ellp#1{\ensuremath{\mathcal{L}^{#1}}}%


\global\long\def\naturals{\ensuremath{\mathbb{N}}}%

\global\long\def\reals{\mathbb{R}}%

\global\long\def\complex{\mathbb{C}}%

\global\long\def\rationals{\mathbb{Q}}%

\global\long\def\integers{\mathbb{Z}}%


\global\long\def\abs#1{\ensuremath{\left|#1\right|}}%

\global\long\def\smlabs#1{\ensuremath{\lvert#1\rvert}}%
 
\global\long\def\bigabs#1{\ensuremath{\bigl|#1\bigr|}}%
 
\global\long\def\Bigabs#1{\ensuremath{\Bigl|#1\Bigr|}}%
 
\global\long\def\biggabs#1{\ensuremath{\biggl|#1\biggr|}}%

\global\long\def\norm#1{\ensuremath{\left\Vert #1\right\Vert }}%

\global\long\def\smlnorm#1{\ensuremath{\lVert#1\rVert}}%
 
\global\long\def\bignorm#1{\ensuremath{\bigl\|#1\bigr\|}}%
 
\global\long\def\Bignorm#1{\ensuremath{\Bigl\|#1\Bigr\|}}%
 
\global\long\def\biggnorm#1{\ensuremath{\biggl\|#1\biggr\|}}%

\global\long\def\floor#1{\left\lfloor #1\right\rfloor }%
\global\long\def\smlfloor#1{\lfloor#1\rfloor}%

\global\long\def\ceil#1{\left\lceil #1\right\rceil }%
\global\long\def\smlceil#1{\lceil#1\rceil}%


\global\long\def\Union{\ensuremath{\bigcup}}%

\global\long\def\Intsect{\ensuremath{\bigcap}}%

\global\long\def\union{\ensuremath{\cup}}%

\global\long\def\intsect{\ensuremath{\cap}}%

\global\long\def\pset{\ensuremath{\mathcal{P}}}%

\global\long\def\clsr#1{\ensuremath{\overline{#1}}}%

\global\long\def\symd{\ensuremath{\Delta}}%

\global\long\def\intr{\operatorname{int}}%

\global\long\def\cprod{\otimes}%

\global\long\def\Cprod{\bigotimes}%


\global\long\def\smlinprd#1#2{\ensuremath{\langle#1,#2\rangle}}%

\global\long\def\inprd#1#2{\ensuremath{\left\langle #1,#2\right\rangle }}%

\global\long\def\orthog{\ensuremath{\perp}}%

\global\long\def\dirsum{\ensuremath{\oplus}}%


\global\long\def\spn{\operatorname{sp}}%

\global\long\def\rank{\operatorname{rk}}%

\global\long\def\proj{\operatorname{proj}}%

\global\long\def\tr{\operatorname{tr}}%

\global\long\def\vek{\operatorname{vec}}%

\global\long\def\diag{\operatorname{diag}}%

\global\long\def\col{\operatorname{col}}%


\global\long\def\smpl{\ensuremath{\Omega}}%

\global\long\def\elsmp{\ensuremath{\omega}}%

\global\long\def\sigf#1{\mathcal{#1}}%

\global\long\def\sigfield{\ensuremath{\mathcal{F}}}%
\global\long\def\sigfieldg{\ensuremath{\mathcal{G}}}%

\global\long\def\flt#1{\mathcal{#1}}%

\global\long\def\filt{\mathcal{F}}%
\global\long\def\filtg{\mathcal{G}}%

\global\long\def\Borel{\ensuremath{\mathcal{B}}}%

\global\long\def\cyl{\ensuremath{\mathcal{C}}}%

\global\long\def\nulls{\ensuremath{\mathcal{N}}}%

\global\long\def\gauss{\mathfrak{g}}%

\global\long\def\leb{\mathfrak{m}}%


\global\long\def\prob{\ensuremath{\mathbb{P}}}%

\global\long\def\Prob{\ensuremath{\mathbb{P}}}%

\global\long\def\Probs{\mathcal{P}}%

\global\long\def\PROBS{\mathcal{M}}%

\global\long\def\expect{\ensuremath{\mathbb{E}}}%

\global\long\def\Expect{\ensuremath{\mathbb{E}}}%

\global\long\def\probspc{\ensuremath{(\smpl,\filt,\Prob)}}%


\global\long\def\iid{\ensuremath{\textnormal{i.i.d.}}}%

\global\long\def\as{\ensuremath{\textnormal{a.s.}}}%

\global\long\def\asp{\ensuremath{\textnormal{a.s.p.}}}%

\global\long\def\io{\ensuremath{\ensuremath{\textnormal{i.o.}}}}%

\newcommand\independent{\protect\mathpalette{\protect\independenT}{\perp}}
\def\independenT#1#2{\mathrel{\rlap{$#1#2$}\mkern2mu{#1#2}}}

\global\long\def\indep{\independent}%

\global\long\def\distrib{\ensuremath{\sim}}%

\global\long\def\distiid{\ensuremath{\sim_{\iid}}}%

\global\long\def\asydist{\ensuremath{\overset{a}{\distrib}}}%

\global\long\def\inprob{\ensuremath{\overset{p}{\goesto}}}%

\global\long\def\inprobu#1{\ensuremath{\overset{#1}{\goesto}}}%

\global\long\def\inas{\ensuremath{\overset{\as}{\goesto}}}%

\global\long\def\eqas{=_{\as}}%

\global\long\def\inLp#1{\ensuremath{\overset{\Ellp{#1}}{\goesto}}}%

\global\long\def\indist{\ensuremath{\overset{d}{\goesto}}}%

\global\long\def\eqdist{=_{d}}%

\global\long\def\wkc{\ensuremath{\rightsquigarrow}}%

\global\long\def\wkcu#1{\overset{#1}{\ensuremath{\rightsquigarrow}}}%

\global\long\def\plim{\operatorname*{plim}}%


\global\long\def\var{\operatorname{var}}%

\global\long\def\lrvar{\operatorname{lrvar}}%

\global\long\def\cov{\operatorname{cov}}%

\global\long\def\corr{\operatorname{corr}}%

\global\long\def\bias{\operatorname{bias}}%

\global\long\def\MSE{\operatorname{MSE}}%

\global\long\def\med{\operatorname{med}}%

\global\long\def\avar{\operatorname{avar}}%

\global\long\def\se{\operatorname{se}}%

\global\long\def\sd{\operatorname{sd}}%


\global\long\def\nullhyp{H_{0}}%

\global\long\def\althyp{H_{1}}%

\global\long\def\ci{\mathcal{C}}%


\global\long\def\simple{\mathcal{R}}%

\global\long\def\sring{\mathcal{A}}%

\global\long\def\sproc{\mathcal{H}}%

\global\long\def\Wiener{\ensuremath{\mathbb{W}}}%

\global\long\def\sint{\bullet}%

\global\long\def\cv#1{\left\langle #1\right\rangle }%

\global\long\def\smlcv#1{\langle#1\rangle}%

\global\long\def\qv#1{\left[#1\right]}%

\global\long\def\smlqv#1{[#1]}%


\global\long\def\trans{\mathsf{T}}%

\global\long\def\indic{\ensuremath{\mathbf{1}}}%

\global\long\def\Lagr{\mathcal{L}}%

\global\long\def\grad{\nabla}%

\global\long\def\pmin{\ensuremath{\wedge}}%
\global\long\def\Pmin{\ensuremath{\bigwedge}}%

\global\long\def\pmax{\ensuremath{\vee}}%
\global\long\def\Pmax{\ensuremath{\bigvee}}%

\global\long\def\sgn{\operatorname{sgn}}%

\global\long\def\argmin{\operatorname*{argmin}}%

\global\long\def\argmax{\operatorname*{argmax}}%

\global\long\def\Rp{\operatorname{Re}}%

\global\long\def\Ip{\operatorname{Im}}%

\global\long\def\deriv{\ensuremath{\mathrm{d}}}%

\global\long\def\diffnspc{\ensuremath{\deriv}}%

\global\long\def\diff{\ensuremath{\,\deriv}}%

\global\long\def\i{\ensuremath{\mathrm{i}}}%

\global\long\def\e{\mathrm{e}}%

\global\long\def\sep{,\ }%

\global\long\def\defeq{\coloneqq}%

\global\long\def\eqdef{\eqqcolon}%

\global\long\def\vec#1{\boldsymbol{#1}}%

\global\long\def\jsr{{\scriptstyle \mathrm{JSR}}}%

\author{Anna Bykhovskaya}
\address[Anna Bykhovskaya]{Duke University}
\email{anna.bykhovskaya@duke.edu}

\author{James A.\ Duffy}
\address[James A.\ Duffy]{University of Oxford}
\email{james.duffy@economics.ox.ac.uk}
\title{The Local to Unity Dynamic Tobit Model}
\begin{abstract}
This paper considers highly persistent time
series that are subject to nonlinearities in the form of censoring or
an occasionally binding constraint, such as are regularly encountered
in macroeconomics.  A tractable candidate model for such series is the dynamic Tobit
with a root local to unity. We show that this model generates a process that converges weakly to a non-standard
limiting process, that is constrained (regulated) to be positive.
Surprisingly, despite the presence of censoring,
the OLS estimators of the model parameters are consistent.
We show that this allows OLS-based inferences to be drawn on the overall persistence of the process (as measured by
the sum of the autoregressive coefficients), and for the null of a
unit root to be tested in the presence of censoring. Our simulations
illustrate that the conventional ADF test substantially over-rejects
when the data is generated by a dynamic Tobit with a unit root,
whereas our proposed test is correctly sized.
We provide an application of our methods to testing for a unit root in the Swiss franc / euro exchange rate,
during a period when this was subject to an occasionally binding lower bound.

\medskip{}
\textbf{Keywords:} non-negative time series, dynamic Tobit, local
unit root, unit root test.
\end{abstract}

\thanks{We thank T.\ Bollerslev, G.\ Cavaliere, V.\ Kleptsyn, S.\ Mavroeidis, and seminar participants at Cambridge, Cyprus, Duke, NES$30$ Conference, St~Andrews, and Oxford for their helpful comments and advice on earlier drafts of this work.}

\vspace*{-0.9cm}

\maketitle

\vspace*{-0.45cm}

\section{Introduction}

\label{sec:intro}

Since the 1950s nonlinear models have played an increasingly prominent
role in the analysis and prediction of time series data. In many cases,
as was noted in early work by \citet{Moran_lynx}, linear models are
unable to adequately match the features of observed time series. The
efforts to develop models that enjoy the flexibility afforded by nonlinearities,
while retaining the tractability of linear models, have subsequently
engendered an enormous literature (see e.g.\ \citealp{Fan_Yao_nonlinear_book};
\citealp{Gao07}; \citealp{Chan09}; and \citealp{TTG10}).

An important instance of nonlinearity arises when data is bounded
by, truncated at, or censored below some threshold, since such phenomena
cannot be adequately captured -- even approximately -- by a linear
model. Many observed series are bounded below by construction, and
may spend lengthy periods at or near their lower boundary, such as
unemployment rates, prices, gross sectoral trade flows, and nominal interest
rates. The non-negativity of interest rates, and the resulting constraints
that this may impose on the efficacy of monetary policy, has received
particular attention in recent years, as central bank policy rates
have remained at or near the zero lower bound for a significant portion
of the past two decades, across many economies (see e.g.\ \citealp{Mav21Ecta},
and the works cited therein).

A tractable model for such series, which generates both their characteristic
serial dependence and censoring, is the dynamic Tobit model. In its
static formulation, the model originates with \citet{Tobin_tobit}.
In its dynamic formulation, the model typically comes in one of two
varieties, which we refer to as the \emph{latent} and \emph{censored}
models. In the latent dynamic Tobit, an unobserved process $\{y_{t}^{\ast}\}$
follows a linear autoregression, with $y_{t}=\max\{y_{t}^{\ast},0\}$
being observed; whereas in the censored dynamic Tobit, $\{y_{t}\}$
is modelled as the positive part of a linear function of its own lags,
and an additive error (see e.g.\ \citealt[p.~186]{Maddala83}, or
\citealt[p.~419]{wei1999}). In both models the right hand side may
be augmented with other explanatory variables. Relative to the latent
model, the censored model has the advantage of being Markovian, which
greatly facilitates its use in forecasting. It has also been successfully
applied to a range of censored series, in both purely time series
and panel data settings, including: the open market operations of
the Federal Reserve (\citealp{Demiralp_Jorda2002}; \citealp{stat_paper});
household commodity purchases (\citealp{DSK12AE}); loan charge-off
rates (\citealp{LMS19}); credit default and overdue loan repayments
(\citealp{BMMV21JBF}); and sectoral bilateral trade flows \citep{bykh_JBES}.
Recently, \citet{Mav21Ecta} proposed the censored and kinked structural
VAR model to describe the operation of monetary policy during periods
when the zero lower bound may occasionally bind on the policy rate.
If only the actual interest rate (rather than some `shadow rate')
affects agents' decision making, as assumed in closely related work
by \citet{AMSV21JoE}, then the univariate counterpart of this model
is exactly the censored dynamic Tobit.

The present work is concerned with the censored, rather than the latent,
dynamic Tobit model. As discussed by \citet[][p.~229]{stat_paper}, the censored model
is arguably more appropriate in settings where $y_t = 0$ results
not from limits on the observability of some underlying $y_t^{\ast}$,
but from an economic constraint on the values taken by $y_t$.
For example, \citet[Supplementary Material, Appendix A]{bykh_JBES}
justifies the application of the dynamic Tobit with reference
to a game-theoretic model of network formation, in which the zeros
correspond to corner solutions of a constrained optimisation problem,
i.e.\ where zeros are systematically observed because of a non-negativity
constraint on agents' choices. In our illustrative empirical application,
to an exchange rate that is subject to a floor engineered by a central bank
(Section~\ref{sec:empirical}), the most recent values of the exchange rate
are taken as sufficient to describe the market equilibrium, and thus
the conditional distribution of future rates. Our focus on the censored
model is also motivated by relatively greater need for the
development of relevant econometric theory in this area. In the latent model, the dynamics are simply
those of the latent autoregression, and so are readily understood by standard methods;
whereas in the censored model, the censoring affects
the dynamics of $\{y_{t}\}$ in a non-trivial manner, making the analysis
rather more challenging. Indeed, establishing the stationarity or weak dependence of the censored
dynamic Tobit is far from trivial, as can be seen from
\citet{HK10EcLett}, \citet{stat_paper}, \citet{MdJ18EcLett}, and
\citet{bykh_JBES}. Henceforth, all references to the `dynamic Tobit'
are to the censored version of the model.\footnote{\citet{stat_paper} alternatively refer to this model as a `dynamic censored regression', but the term `dynamic Tobit' appears more commonly in the literature (see e.g.\ \citealp{HK10EcLett,MdJ18EcLett,bykh_JBES}).}

Motivated in part by recent work on modelling nominal interest rates
near the zero lower bound, our concern is with the application of
this model to series that are highly persistent, so that above the
censoring point they exhibit the random wandering that is characteristic
of integrated processes. The appropriate configuration of the dynamic
Tobit model for such series, in which the autoregressive polynomial
has a root local to unity, has not been considered in the literature
to date -- apart from the special case of a first-order model with
an exact unit root, as in \citet{cav_bounded} and \citet{bykh_JBES}.
Our results are thus entirely new to the literature.

Our principal technical contribution, within this setting, is to derive
the limiting distributions of both the standardised regressor process,
and the ordinary least squares (OLS) estimates of the parameters of
the dynamic Tobit, when that model has an autoregressive root local
to unity. The reader may find our focus on OLS surprising, as this method would
usually provide inconsistent estimates in the presence of censoring. However,
it turns out that in our setting consistency (for all model parameters)
is restored, and we obtain a usable limit theory for the estimated sum of the autoregressive coefficients,
which conventionally provides a measure of the overall persistence
of a process (cf.\ \citealp{AC94JBES}; \citealp{Mik07Ecta}).
While one may contemplate alternatively using maximum likelihood (ML) or
least absolute deviations (LAD) to estimate the model, OLS enjoys the advantages
of maintaining only weak distributional assumptions on the innovations (unlike ML),
and avoiding the numerical minimisation of a non-convex criterion function (unlike LAD).

Our asymptotics provide the basis for practical unit root tests for highly
persistent, censored time series. Motivated by our finding that OLS is consistent,
we consider a test based on the (constant only) augmented Dickey--Fuller (ADF) $t$ statistic,
but which employs critical values modified to reflect the censoring present
in the data generating process. We show, via Monte Carlo simulations,
that as our critical values
are larger than the conventional ADF critical values, their use eliminates the significant over-rejection
that may result from the naive application of the ADF test to censored
data. (This tendency to over-reject the null of a unit root appears typical
of models that incorporate unit roots and nonlinearities,
having been also found by e.g.\ \citealp{HT97EcLett}; \citealp{KLN02JoE};
and \citealp{WDJ13SASA}.) Strikingly, the distribution of $t$ statistic, under censoring, is stochastically dominated
by that obtained from the linear autoregressive model.\footnote{This holds only with respect to the \emph{distributions}:
it is not true that if one simulates a linear and a Tobit model with the same underlying innovations, then the former $t$ statistic will always be larger than the latter.}

Our work may be construed, more broadly, as extending the analysis
of highly persistent time series, and the associated machinery of
unit root testing, from a linear setting to a nonlinear setting appropriate
to time series that are subject to a lower bound. In doing so, we
complement the seminal work of \citet{cavaliere2005}, which similarly
sought to extend this machinery to the setting of bounded time series.
Our contribution is to effect this extension within a class of
nonlinear autoregressive models that have been widely applied to
censored time series (as evinced by the works cited
above), and which fall outside his framework.

On a technical level, the most closely related works to our own are
those of \citet{cav_bounded,cavaliere2005} and \citet{CX14JoE},
who develop the asymptotics of what they term `limited autoregressive
processes' with a near-unit root, which are (one- or two-sided) non-Markovian censored
processes constructed by the addition of regulators to a latent linear
autoregression. While their (one-sided) model has a superficial resemblance
to the dynamic Tobit, there are important, but subtle differences
between the two (see Section~\ref{subsec:limited} for a
discussion). Perhaps the most striking similarity is that both models,
in the case of an exact unit root, give rise to processes
that converge weakly to regulated Brownian motion; but when
roots are merely local to unity, the limiting processes associated
with these two models are distinct (see the discussion following Theorem~\ref{thm:LUR_y}).

Convergence to regulated Brownian
motions has also been obtained previously in the setting of \emph{first-order} threshold
autoregressive models with an (exact) unit root regime and a stationary regime,
as considered by \citet{LLS11Bern} and \citet{GLY13JoE}. However, allowing for
both near-unit roots and higher-order autoregressive terms introduces
technical challenges that require us to take a markedly different
approach from those employed in these earlier works. With respect to higher-order models,
a major difficulty relates to the treatment
of the differences $\{\Delta y_t\}$. In a linear autoregressive model with a single unit
root, and all other roots outside the unit circle, these would follow a stationary autoregression; but in our setting
they instead follow a regime-switching autoregression, where the regime depends on
the (lagged) \emph{level} of $y_t$, and so are inherently non-stationary.
Accoridingly, standard arguments for controlling the magnitude of $\Delta y_t$, and deriving
the limits of functionals thereof, are unavailing. We provide a striking example
in which $\Delta y_t$ is explosive even though all but one of the autoregressive roots
(the root at unity) lie outside the unit circle,
due to the interactions between the two autoregressive regimes (see Appendix \ref{sec:appendix_JSR}).
To preclude such cases, we develop a condition relating to the joint spectral radius of the autoregressive
representation for $\Delta y_t$, which is sufficient to control the magnitude of
$\Delta y_t$ and plays an essential role in our arguments.
(This concept has been previously employed in the context of
\emph{stationary} autoregressive models, see e.g.\ \citet{Lieb05JTSA};
\citet{Saik08ET}.)

The remainder of this paper is organised as follows. Section~\ref{sec:setting}
discusses the model and our assumptions. Asymptotic results and corresponding
tests are derived in Section~\ref{sec:asymptotics}, while supporting
Monte Carlo simulations are shown in Section~\ref{sec:montecarlo}.
Section~\ref{sec:empirical} applies our framework to the exchange
rate between the Swiss franc and the euro during a period when this rate
was subject to a lower bound. Finally, Section \ref{sec:concl} concludes. All proofs appear
in the appendices.
\begin{notation*}
$C$, $C^{\prime}$, $C^{\prime\prime}$, etc.\ denote generic constants
that may take different values in different parts of this paper. All
limits are taken as $T\goesto\infty$ unless otherwise specified.
$\inprob$ and $\indist$ respectively denote convergence in probability
and distribution (weak convergence). We write `$X_{T}(r)\indist X(r)$
on $D[0,1]$' to denote that $\{X_{T}\}$ converges weakly to $X$,
where these are considered as random elements of $D[0,1]$, the space
of cadlag functions on $[0,1]$, equipped with the uniform topology.
For $p\geq1$ and $X$ a random variable, $\smlnorm X_{p}\defeq(\expect\smlabs X^{p})^{1/p}$.
\end{notation*}

\section{The dynamic Tobit model with a near-unit root}

\label{sec:setting}

\subsection{Model and assumptions}
Consider a time series $\{y_{t}\}$ generated by the dynamic Tobit
model of order $k\geq1$, written in augmented Dickey--Fuller (ADF)
form,\footnote{The autoregressive form is $y_{t}=[\alpha+\sum_{i=1}^{k}\beta_{i}y_{t-i}+u_{t}]_{+}$,
where $\beta_1 = \beta + \phi_1$, $\beta_k = -\phi_{k-1}$, and $\beta_i
= \phi_i - \phi_{i-1}$ for $i \in \{2,\ldots,k-1\}$. In particular $\beta=\sum_{i=1}^{k}\beta_{i}$ corresponds
to the sum of the autoregressive coefficients.}
\begin{equation}
y_{t}=\left[\alpha+\beta y_{t-1}+\sum_{i=1}^{k-1}\phi_{i}\Delta y_{t-i}+u_{t}\right]_{+},\quad t=1,\ldots,T,\label{eq:tobitark}
\end{equation}
where $\Delta y_{t}:=y_{t}-y_{t-1}$, and $[x]_{+}\defeq\max\{x,0\}$
denotes the positive part of $x\in\reals$. Let
\begin{equation}\label{eq:lag_operators}
B(z):=1-\beta z-(1-z)\sum_{i=1}^{k-1}\phi_{i}z^{i}=(1-\beta)z+(1-z)\phi(z),
\end{equation}
where $\phi(z):=1-\sum_{i=1}^{k-1}\phi_{i}z^{i}$. We impose the following on the data generating process \eqref{eq:tobitark}.
\begin{assumption}
\label{ass:INIT} $\{y_{t}\}$ is initialised by (possibly) random
variables $\{y_{-k+1},\ldots,y_{0}\}$. Moreover, $T^{-1/2}y_{0}\inprob b_{0}$
for some $b_{0}\geq0$.
\end{assumption}
\begin{assumption}
\label{ass:DGP} $\{y_{t}\}$ is generated according to \eqref{eq:tobitark},
where:
\begin{enumerate}[label=\ass{\arabic*.}, ref=\ass{.\arabic*}]
\item \label{enu:DGP:ut}$\{u_{t}\}_{t\in\integers}$ is independently
and identically distributed (i.i.d.)\ with $\expect u_{t}=0$ and
$\expect u_{t}^{2}=\sigma^{2}$.
\item \label{enu:DGP:LU}$\alpha=\alpha_{T}\defeq T^{-1/2}a$ and $\beta=\beta_{T}=\exp(c/T)$
for some $a,c\in\reals$.
\end{enumerate}
\end{assumption}
\begin{assumption}
\label{ass:MOM} There exist $\delta_{u}>0$ and $C<\infty$ such
that:
\begin{enumerate}[label=\ass{\arabic*.}, ref=\ass{.\arabic*}]
\item $\expect\smlabs{u_{t}}^{2+\delta_{u}}<C$.
\item $\expect\smlabs{T^{-1/2}y_{0}}^{2+\delta_{u}}<C$, and $\expect\smlabs{\Delta y_{i}}^{2+\delta_{u}}<C$
for $i\in\{-k+2,\ldots,0\}$.
\end{enumerate}
\end{assumption}
Figure~\ref{y_pic} displays a typical sample path for the dynamic Tobit \eqref{eq:tobitark}, under the preceding assumptions.
\begin{figure}[h]
{\scalebox{0.95}{\includegraphics{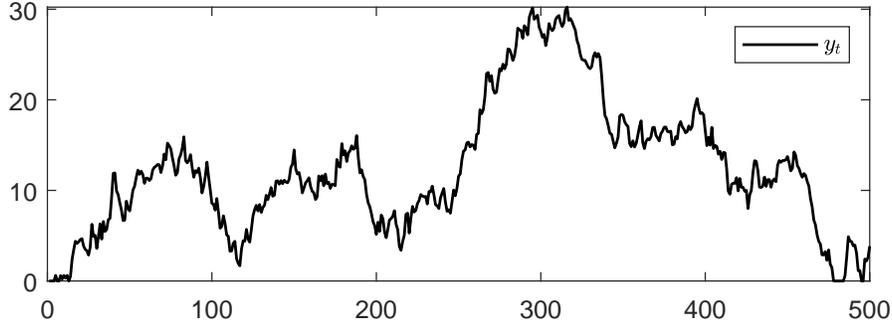}}}
\captionsetup{width=0.9\linewidth}
\caption{$y_{t}=\left[\tfrac{1}{T^{1/2}}+\left(1-\tfrac{5}{T}\right)y_{t-1}+u_{t}\right]_{+}$,
$y_{0}=0$, ${u_{t}\thicksim\text{i.i.d.}~\mathcal{N}(0,1)}$, $T=500$.}
\label{y_pic}
\end{figure}

\pagebreak[4]

The main consequences of our assumptions may be summarised as follows.
\begin{enumerate}
\item By the functional central limit theorem, \ref{ass:DGP}\ref{enu:DGP:ut}
implies $T^{-1/2}\sum_{t=1}^{\smlfloor{rT}}u_{t}\indist\sigma W(r)$
on $D[0,1]$, where $W(\cdot)$ is a standard Brownian motion. This convergence
alone is sufficient to determine the asymptotics of $T^{-1/2}y_{\smlfloor{rT}}$,
and of the OLS estimators, when $k=1$ (Theorems \ref{thm:LUR_y}
and \ref{thm:OLS_LUR} below), but extending these results to $k\geq2$
necessitates the slightly stronger conditions on $\{u_{t}\}$ provided
by \ref{ass:MOM}.
\item In the absence of censoring, \ref{ass:DGP}\ref{enu:DGP:LU} would
entail that $y_{t}$ has an autoregressive root within a $O(T^{-1})$
neighbourhood of real unity. Just as in that case, we shall show that
in the present setting $T^{-1/2}y_{\smlfloor{rT}}$ converges weakly
to a continuous process, albeit one that differs importantly from
the diffusion process limit familiar from the uncensored case.
\item We require $\alpha=O(T^{-1/2})$ in \ref{ass:DGP}\ref{enu:DGP:LU},
to ensure that the drift in $\{y_{t}\}$ is of no larger order than
the stochastic trend component. If this assumption were relaxed, so
that e.g.\ $\alpha$ were now a non-zero constant, the large-sample
behaviour of $\{y_{t}\}$ and the asymptotics of the OLS estimators
would be quite different from those developed here. A fixed positive
$\alpha$ would generate an increasing linear trend, driving $y_{t}$
ever further away from origin and making the censoring ultimately
irrelevant; whereas a fixed negative $\alpha$ can lead to $\{y_{t}\}$
being stationary (see e.g.\ \citealp[Theorem 3]{bykh_JBES}).
\item The specific parametrisations in \ref{ass:DGP}\ref{enu:DGP:LU} are
chosen merely for convenience: all of our results also hold when $\alpha_{T}$
and $\beta_{T}$ more generally satisfy $T^{1/2}\alpha_{T}\goesto a$
and $T(\beta_{T}-1)\goesto c$. For ease of notation, we shall routinely
suppress the $T$ subscripts on $\alpha_{T}$ and $\beta_{T}$ throughout
the following.
\item Assumptions \ref{ass:INIT} and \ref{ass:MOM} imply that $T^{-1/2}y_{i}\inprob b_{0}$
for all $i\in\{-k+1,\ldots,0\}$.
\end{enumerate}

\subsection{Non-zero lower bound}\label{sec:nonzero_bound}
Our machinery extends straightforwardly to the case where $y_{t}$
is censored at some $\mathbf{L}\neq0$. Suppose \eqref{eq:tobitark}
is modified to
\begin{equation}
y_{t}=\max\left\{ \mathbf{L}\sep\alpha+\beta y_{t-1}+\sum_{i=1}^{k-1}\phi_{i}\Delta y_{t-i}+u_{t}\right\} ,\label{eq:Lbnd}
\end{equation}
and take $\mathbf{L}=T^{1/2}\ell$ for some $\ell\in\reals$, to allow
the censoring point to be of the same order of magnitude as $\{y_{t}\}$.\footnote{
We emphasise that this dependence of $\mathbf{L}$ on $T$ should not be
interpreted literally as specifying a model in which the censor point
is a function of the sample size. Rather, the scaling is a mathematical
device that allows us to obtain an improved asymptotic approximation to the
finite sample distribution of $T^{-1/2}y_{\smlfloor{rT}}$, and hence of
the test statistics that depend upon it. (See e.g.\ Assumption~(A4) in
\citet{cavaliere2005}, where a similar device is used for this purpose.) If
$\mathbf{L}$ is in fact `small' relative to a given sample size $T$, then
this will be accommodated within our framework by $\ell$ being close to zero.}
Defining $\tilde{y}_{t}\defeq y_{t}-\mathbf{L}$ and subtracting $\mathbf{L}$
from both sides of \eqref{eq:Lbnd}, it may be verified that
\begin{align*}
\tilde{y}_{t} & =\left[\tilde{\alpha}+\beta\tilde{y}_{t-1}+\sum_{i=1}^{k-1}\phi_{i}\Delta\tilde{y}_{t-i}+u_{t}\right]_{+}
\end{align*}
where $\tilde{\alpha}\defeq\alpha+(\beta-1)\mathbf{L}$. Thus, $\{\tilde{y}_{t}\}$
follows a dynamic Tobit with censoring at zero, with drift
\[
T^{1/2}\tilde{\alpha}=T^{1/2}[\alpha+(\beta-1)T^{1/2}\ell]\goesto a+c\ell\eqdef\tilde{a}
\]
and initialisation
\[
\tilde{b}_{0}=\plim_{T\goesto\infty}T^{-1/2}\tilde{y}_{0}=\plim_{T\goesto\infty}T^{-1/2}(y_{0}-\mathbf{L})=b_{0}-\ell.
\]
All our results hold in the setting of \eqref{eq:Lbnd}, with appropriate modifications. For simplicity, we work with $\mathbf{L}=0$ throughout the rest of the paper, except where otherwise indicated.

\subsection{Alternative representation}

It will be occasionally useful to rewrite \eqref{eq:tobitark} in
a form that helps to clarify the connections between the dynamic Tobit
and the linear autoregressive model. We can do this by defining
\begin{equation}
y_{t}^{-}\defeq\left[\alpha+\beta y_{t-1}+\sum_{i=1}^{k-1}\phi_{i}\Delta y_{t-i}+u_{t}\right]_{-},\label{eq:ytminus}
\end{equation}
where $[x]_{-}\defeq\min\{x,0\}$. That is, when $y_{t}=0$, $y_{t}^{-}$
records the value that $y_{t}$ would have taken had it not been censored
at zero.

Since $[x]_{+}=x-[x]_{-}$, we may then rewrite \eqref{eq:tobitark}
as
\begin{equation}
y_{t}=\alpha+\beta y_{t-1}+\sum_{i=1}^{k-1}\phi_{i}\Delta y_{t-i}+u_{t}-y_{t}^{-},\label{eq:arkwithytminus}
\end{equation}
or equivalently, letting $L$ denote the lag operator, as
\begin{equation}
B(L)y_{t}=\alpha+u_{t}-y_{t}^{-}.\label{eq:ARklagpoly}
\end{equation}
Thus, if we view $y_{t}^{-}$ as an additional noise term, \eqref{eq:ARklagpoly}
takes the form of a linear autoregression. The main challenge is that
$y_{t}^{-}$ is itself a complicated nonlinear object, whose presence
fundamentally alters the dynamics of $\{y_t\}$, even in the long run.

\subsection{Connections with limited autoregressive processes}

\label{subsec:limited}

The representation \eqref{eq:ARklagpoly} allows us to draw out the connections
between our model and the limited autoregressive processes
developed by \citet{cav_bounded,cavaliere2005} and \citet{CX14JoE}.
To put their model -- for the special case of a process constrained
to lie in $[0,\infty)$ -- in a form comparable to ours, consider
a latent process $\{x_{t}^{\ast}\}$,
\begin{equation}
x_{t}^{\ast}=\rho_{T}x_{t-1}^{\ast}+\varepsilon_{t},\quad\rho_{T}=1+c/T,\label{eq:cav1}
\end{equation}
where $\{\varepsilon_{t}\}$ is stationary. Define an observed process
$\{x_{t}\}$, whose increments are related to those of $\{x_{t}^{\ast}\}$
via
\begin{equation}
\Delta x_{t}=\Delta x_{t}^{\ast}+\blw{\xi}_{t}\label{eq:cav2}
\end{equation}
where $\blw{\xi}_{t}>0$ if and only if $x_{t-1}+\Delta x_{t}^{\ast}<0$,
so as to ensure that $x_{t}\geq0$
for all $t$. In particular, if we set
\begin{equation}
\blw{\xi}_{t}=-x_{t}^{-}\defeq-[x_{t-1}+\Delta x_{t}^{\ast}]_{-}\label{eq:cav3}
\end{equation}
then $\{x_{t}\}$ will be censored at zero. When $c=0$, by combining
\eqref{eq:cav1}--\eqref{eq:cav3} we obtain
\begin{equation}
x_{t}=x_{t-1}+\varepsilon_{t}-x_{t}^{-} = [x_{t-1} + \varepsilon_{t}]_{+}\label{eq:cavcensdyn}
\end{equation}
as a valid representation of a limited autoregressive process censored
at zero.

Both \eqref{eq:ARklagpoly} and \eqref{eq:cavcensdyn} describe
censored processes, but have some subtle, and yet important
differences. Firstly, while $\{y_t\}$ in \eqref{eq:ARklagpoly} is
a Markov process (with state vector $(y_{t},\ldots,y_{t-k+1})$),
$\{x_t\}$ in \eqref{eq:cavcensdyn} will be Markov
only if $\{\varepsilon_t\}$ is i.i.d. The former is thus more
suited to forecasting in the presence of higher-order dynamics.

Secondly, at a technical level, the differences between the models
can be clearly illustrated by supposing that $\alpha=0$ and $\beta=1$.
Then $B(L)=(1-L)\phi(L)$ in (\ref{eq:ARklagpoly}), which simplifies
to
\begin{equation}
y_{t}=y_{t-1}+\phi(L)^{-1}(u_{t}-y_{t}^{-}).\label{eq:dyntob-comp}
\end{equation}
In the dynamic Tobit, higher-order dynamics are captured by the (stationary)
autoregressive polynomial $\phi(L)$, whereas in the limited autoregressive
model these enter via the weak dependence in $\{\varepsilon_{t}\}$.
To facilitate a comparison of the two models, suppose that $\{\varepsilon_{t}\}$
follows the autoregression $\phi(L)\varepsilon_{t}=u_{t}$. Then (\ref{eq:cavcensdyn})
becomes
\begin{equation}
x_{t}=x_{t-1}+\phi(L)^{-1}u_{t}-x_{t}^{-}.\label{eq:la-comp}
\end{equation}
We see immediately that if both models have only first-order dynamics
($k=1$), so that $\phi(L)=1$, then they exactly coincide (cf.\ Cavaliere,
2005, Remark~2.3). However, this no longer holds in the presence
of higher-order dynamics ($k\geq2$). Suppose e.g.\ that $\phi(L)=1-\phi_1 L$
for some $\smlabs{\phi_1}<1$. Then (\ref{eq:dyntob-comp}) becomes
\begin{equation}
y_{t}=y_{t-1}+\sum_{s=0}^{\infty}\phi^{s}_1u_{t-s}-\sum_{s=0}^{\infty}\phi^{s}_1y_{t-s}^{-}\label{eq:dyntob-ar2}
\end{equation}
whereas (\ref{eq:la-comp}) yields
\begin{equation}
x_{t}=x_{t-1}+\sum_{s=0}^{\infty}\phi^{s}_1u_{t-s}-x_{t}^{-}.\label{eq:la-ar2}
\end{equation}
Comparing (\ref{eq:dyntob-ar2}) with (\ref{eq:la-ar2}), we see that
the censoring affects the dynamics of $\{y_{t}\}$ and $\{x_{t}\}$
in different ways. Lagged values of $y_{t}^{-}$ have a direct effect
on future $y_{t}$ (via $\sum_{s=0}^{\infty}\phi^{s}_1 y_{t-s}^{-}$),
whereas lagged $x_{t}^{-}$ have no such effect on $x_{t}$.

Thirdly, the differences between the two models also manifests itself through
each giving rise to distinct classes of limiting processes.
Even when $k=1$, these coincide only in the special case of an exact
unit root ($c=0$): see the discussion following Theorem~\ref{thm:LUR_y} below.

\section{Asymptotic results}

\label{sec:asymptotics}

In this section we derive the weak limits of the standardised process $T^{-1/2}y_{\smlfloor{rT}}$
and the ordinary least squares (OLS) estimators of the parameters
of the dynamic Tobit. The latter provides the basis for a unit root
test for non-negative time series, a test that is both
is straightforward to compute, and which does not require any assumption to be made on the
distribution of the innovations, beyond the existence of sufficient moments.
We first provide a separate treatment of the first-order model
($k=1$), before progressing to the model with higher-order
dynamics ($k\geq 1$). This facilitates a simplified exposition of the former case,
which avoids the additional assumptions (\assref{MOM} and \assref{JSR}) and technical concepts --
notably the joint spectral radius -- that are required when $k\geq 2$.

\subsection{Limiting distribution of the regressor process}

\label{sec:yt_asymptotics}

Let $\theta\defeq(a,b_{0},c)$, define the process
\begin{equation}
K_{\theta}(r):=b_{0}+a\int_{0}^{r}\e^{-cs}\diff s+\sigma\int_{0}^{r}e^{-cs}\diff W(s),\label{eq:Kac}
\end{equation}
and denote its `regulated' counterpart by
\begin{equation}
J_{\theta}(r)\defeq\e^{cr}\left\{ K_{\theta}(r)+\sup_{r^{\prime}\leq r}[-K_{\theta}(r^{\prime})]_{+}\right\} .\label{eq:Jac}
\end{equation}
Here the supremum on the r.h.s.\ regulates $J_{\theta}(r)$ to ensure that it is
always non-negative: if $K_{\theta}(r)$ is negative,
$[-K_{\theta}(r)]_{+}=-K_{\theta}(r)>0$, so that $K_{\theta}(r)+\sup_{r^{\prime}\leq r}[-K_{\theta}(r^{\prime})]_{+}\geq0$.

We first provide a result for the case $k=1$, under which the model
\eqref{eq:tobitark} reduces to
\begin{equation}
y_{t}=[\alpha+\beta y_{t-1}+u_{t}]_{+}.\label{eq:dgp}
\end{equation}

\begin{thm}
\label{thm:LUR_y} Suppose Assumptions \ref{ass:INIT} and \ref{ass:DGP}
hold with $k=1$ in \eqref{eq:tobitark}. Then on $D[0,1]$,
\begin{equation}
T^{-1/2}y_{\smlfloor{rT}}\indist J_{\theta}(r).\label{eq:AR1wkc}
\end{equation}
\end{thm}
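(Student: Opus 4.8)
The plan is to linearise the recursion \eqref{eq:dgp} by a multiplicative change of variables that converts it into a reflected random walk, to solve the latter explicitly via a discrete Skorokhod map, and then to pass to the limit using the functional central limit theorem and the continuous mapping theorem. The starting observation is that since $\beta=\e^{c/T}>0$ and $[\,\cdot\,]_{+}$ is positively homogeneous, setting $z_{t}\defeq\beta^{-t}y_{t}\ge0$ and dividing \eqref{eq:dgp} by $\beta^{t}$ turns the Tobit recursion into the Lindley recursion
\[
z_{t}=[z_{t-1}+w_{t}]_{+},\qquad w_{t}\defeq\beta^{-t}(\alpha+u_{t}),\qquad z_{0}=y_{0}.
\]

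The second step is to solve this reflected recursion in closed form. Writing $W_{i}\defeq\sum_{j=1}^{i}w_{j}$ with $W_{0}\defeq0$, one verifies by induction that
\[
z_{t}=z_{0}+W_{t}+\max_{0\le i\le t}\,[-(z_{0}+W_{i})]_{+},
\]
where the last term is the nondecreasing regulator that keeps $z_{t}$ nonnegative (it increases exactly when $z_{t-1}+w_{t}<0$, i.e.\ when censoring binds). Undoing the substitution, multiplying by $T^{-1/2}\beta^{t}$, and again using homogeneity of $[\,\cdot\,]_{+}$ gives, for $t=\smlfloor{rT}$,
\[
T^{-1/2}y_{\smlfloor{rT}}=\beta^{\smlfloor{rT}}\left\{\hat{K}_{T}(r)+\sup_{r'\le r}\,[-\hat{K}_{T}(r')]_{+}\right\},\qquad \hat{K}_{T}(s)\defeq T^{-1/2}y_{0}+T^{-1/2}\sum_{j=1}^{\smlfloor{sT}}\beta^{-j}(\alpha+u_{j}),
\]
where the continuous supremum equals the discrete maximum because $\hat{K}_{T}$ is a step function constant on each $[i/T,(i+1)/T)$.

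The third step is to identify the limit of $\hat{K}_{T}$. Since $\beta^{-j}=\e^{-cj/T}$, Assumption~\ref{ass:INIT} gives $T^{-1/2}y_{0}\inprob b_{0}$; the drift term is a Riemann sum with $T^{-1}a\sum_{j=1}^{\smlfloor{sT}}\e^{-cj/T}\to a\int_{0}^{s}\e^{-cv}\diff v$; and the functional central limit theorem for $\{u_{t}\}$, combined with the convergence of the exponentially weighted partial sums, yields $T^{-1/2}\sum_{j=1}^{\smlfloor{sT}}\e^{-cj/T}u_{j}\indist\sigma\int_{0}^{s}\e^{-cv}\diff W(v)$. Hence $\hat{K}_{T}\indist K_{\theta}$ on $D[0,1]$, recovering \eqref{eq:Kac}. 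Because $\beta^{\smlfloor{rT}}=\e^{c\smlfloor{rT}/T}\to\e^{cr}$ uniformly in $r$, and because the reflection functional $x\mapsto x(\cdot)+\sup_{\cdot'\le\,\cdot}[-x(\cdot')]_{+}$ is Lipschitz in the uniform norm, the continuous mapping theorem delivers
\[
T^{-1/2}y_{\smlfloor{rT}}\indist\e^{cr}\left\{K_{\theta}(r)+\sup_{r'\le r}\,[-K_{\theta}(r')]_{+}\right\}=J_{\theta}(r),
\]
which is \eqref{eq:AR1wkc}; since $J_{\theta}$ is continuous, the convergence is in the uniform topology as claimed.

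The algebraic reduction above is elementary, so I expect the measure-theoretic care to lie entirely in the limit passage. The main obstacle will be establishing the weak convergence $\hat{K}_{T}\indist K_{\theta}$ in a form strong enough to support the continuous mapping step: in particular, the convergence of the weighted innovation sums to the Ornstein--Uhlenbeck-type stochastic integral $\sigma\int_{0}^{s}\e^{-cv}\diff W(v)$, uniformly over $s\in[0,1]$ (which I would obtain by summation by parts, reducing it to the Donsker convergence of the unweighted sums), together with a careful verification that the running-supremum/reflection map is continuous at $K_{\theta}$ in the uniform topology, so that the discrete maxima indeed converge to the continuous supremum.
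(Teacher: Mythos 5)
Your proposal is correct and follows essentially the same route as the paper's own proof: the change of variables $x_{t}\defeq\beta^{-t}y_{t}$ exploiting the positive homogeneity of $[\cdot]_{+}$, the explicit discrete Skorokhod/Lindley solution of the reflected recursion, weak convergence of the weighted partial sums to $K_{\theta}$, and the continuous mapping theorem applied to the (continuous) reflection functional together with the uniform convergence $\beta^{\smlfloor{rT}}\to\e^{cr}$. The only cosmetic difference is that you would establish $T^{-1/2}\sum_{j\le\smlfloor{sT}}\e^{-cj/T}u_{j}\indist\sigma\int_{0}^{s}\e^{-cv}\diff W(v)$ by summation by parts from Donsker's theorem, whereas the paper invokes the martingale central limit theorem directly; both are valid.
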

The preceding is a new result, which relates to some of the previous literature as follows.
\begin{enumerate}
\item The appearance of the supremum in the weak limit of \eqref{eq:AR1wkc} (see \eqref{eq:Jac} above)
is in line with the solution to the Skorokhod reflection problem (\citealp[p.~239]{Revuz_Yor}).

\item Suppose that $a=b_{0}=0$. Then $\e^{cr}K_{\theta}(r)=S_{c}(r)=\sigma\int_{0}^{r}\e^{c(r-s)}\diff W(s)$,
an Ornstein--Uhlenbeck process with autoregressive parameter $c$
(e.g.\ \citealp{CW87AS}; \citealp{Phil87Btka}), and \eqref{eq:AR1wkc}
specialises to
\[
T^{-1/2}y_{\smlfloor{rT}}\indist S_{c}(r)+\sup_{r^{\prime}\leq r}[-\e^{c(r-r^{\prime})}S_{c}(r^{\prime})]_{+}.
\]
on $D[0,1]$. Whereas, if we take $\varepsilon_{t}=u_{t}$ in \eqref{eq:cav1}, the limited autoregressive
process \eqref{eq:cav1}--\eqref{eq:cav3} satisfies
\[
T^{-1/2}x_{\smlfloor{rT}}\indist S_{c}(r)+\sup_{r^{\prime}\leq r}[-S_{c}(r^{\prime})]_{+}
\]
on $D[0,1]$. Comparing the two preceding limits, we observe a subtle
but crucial difference, due to the presence of the factor $\e^{c(r-r^{\prime})}$,
showing that the asymptotics of the dynamic Tobit and limited autoregressive
models are distinct even when $k=1$.
\item When $a=b_0=c=0$, $J_{\theta}(r)$ coincides with a Brownian motion
regulated from below at zero, which has the same distribution as a
Brownian motion reflected at the origin, $\smlabs{W(\cdot)}$, see e.g.\ \citet[p.~97]{karatzas}.
Another model that generates a process with this asymptotic distribution
(upon rescaling by $T^{-1/2}$) is a first-order threshold autoregression
with `unit root' and `stationary' regimes, as studied by \citet{LLS11Bern}
and \citet{GLY13JoE}. A special case of their model posits
\[
x_{t}=\beta(x_{t-1})x_{t-1}+u_{t},
\]
where $\beta(x)=1$ if $x\geq0$, and $\beta(x)=0$ otherwise. It
follows that $x_{t}=[x_{t-1}]_{+}+u_{t}$, and so $[x_{t}]_{+}=[[x_{t-1}]_{+}+u_{t}]_{+}$,
which corresponds to our setting \eqref{eq:tobitark} with $\alpha=0$,
$\beta=1$, $k=1$, and $y_{t}=[x_{t}]_{+}$. It is thus not surprising
that, in this case, our Theorem~\ref{thm:LUR_y} agrees exactly with the corresponding
Theorem~3.1 of \citet{LLS11Bern}.

\item
The proof of Theorem~\ref{thm:LUR_y} shows that the convergence
\eqref{eq:AR1wkc} relies ultimately on $T^{-1/2}\sum_{t=1}^{\smlfloor{rT}}u_{t}\indist\sigma W(r)$,
as follows by the functional central limit theorem when $\{u_{t}\}$
is i.i.d.\ with mean zero and variance $\sigma^{2}$, as per \assref{DGP}.
If more generally $\{u_{t}\}$ is weakly dependent with long-run variance
$\omega^{2}$, then under appropriate regularity conditions (see e.g.\ Theorem 1.1 in \citealp{PU05AP}) we would
instead have $T^{-1/2}\sum_{t=1}^{\smlfloor{rT}}u_{t}\indist\omega W(r)$,
with the consequence that \eqref{eq:AR1wkc} would continue to hold, albeit
with $\omega$ replacing $\sigma$ in \eqref{eq:Kac}. However, the results in Section \ref{sec:OLS_asymptotics} below cannot be so straightfowardly generalised, as here the presence of weak dependence in $\{u_t\}$ would entail
substantial changes to the limiting distributions (cf.\ the very different limits given in Theorems~2.1 and 2.3 of \citet{ito_convergence}, depending on whether the relevant innovation sequence is serially uncorrelated).
\end{enumerate}

When $k>1$, the other roots of the lag polynomial $B(L)$ affect
the behaviour of $\{y_{t}\}$, and we need a further condition to
ensure that the first differences $\{\Delta y_{t}\}$ are well behaved.
Let
\begin{equation}
F_{\delta}\defeq\begin{bmatrix}\phi_{1}\delta & \phi_{2} & \cdots & \phi_{k-2} & \phi_{k-1}\\
\delta & 0 & \cdots & 0 & 0\\
0 & 1\\
 &  & \ddots\\
 &  &  & 1 & 0
\end{bmatrix}.\label{eq:Fdef}
\end{equation}

Under an appropriate condition on the matrices $\{F_{\delta}\mid\delta\in[0,1]\}$,
we can ensure $\{\Delta y_{t}\}$ is stochastically bounded. To state
that, we need the following (cf.\ \citet{Jungers09}, Defn.\ 1.1):
\begin{defn*}
The \emph{joint spectral radius} (JSR) of a bounded collection $\mathcal{A}$
of square matrices is
\[
\lambda_{\jsr}(\mathcal{A})\defeq\limsup_{n\goesto\infty}\sup_{M\in\mathcal{A}^{n}}\lambda(M)^{1/n}
\]
where $\lambda(M)$ denotes the spectral radius of $M$, and $\mathcal{A}^{n}\defeq\{\prod_{i=1}^{n}A_{i}\mid A_{i}\in\mathcal{A}\}$.
\end{defn*}
Control over the JSR has been previously used to ensure the stationarity
of regime-switching autoregressive models (e.g.~\citet{Lieb05JTSA};
\citet{Saik08ET}), and we shall utilise it in a similar manner here.
\begin{assumption}
\label{ass:JSR}$\lambda_{\jsr}(\{F_{0},F_{1}\})<1$.
\end{assumption}

Approximate upper bounds for the JSR can be computed numerically,
to an arbitrarily high degree of accuracy, via semidefinite programming \citep{PJ08LAA},
making it reasonably straightforward to verify whether this condition is satisfied by given
parameter values. The following result, which is proved in Appendix \ref{sec:appendix_JSR}, provides a sufficient condition for \ref{ass:JSR}, which may be checked even more simply.

\begin{lem}\label{lem:JSR_holds}
If $\sum_{i=1}^{k-1}|\phi_i|<1$, then Assumption \ref{ass:JSR} is satisfied.
\end{lem}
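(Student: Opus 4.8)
The plan is to exhibit a single vector norm on $\reals^{k-1}$ whose induced operator norm $\norm{\cdot}$ makes both $F_0$ and $F_1$ strict contractions, as this bounds the JSR directly. Indeed, for any eigenvalue $\lambda$ of a matrix $M$ one has $\smlabs{\lambda}\le\norm M$, and an induced norm is submultiplicative, so for any product $M=A_1\cdots A_n$ with each $A_i\in\{F_0,F_1\}$ we get $\lambda(M)\le\norm M\le\prod_{i}\norm{A_i}$. Hence if $\max\{\norm{F_0},\norm{F_1}\}\le\rho$, then $\sup_{M}\lambda(M)^{1/n}\le\rho$ for every $n$, and so $\lambda_{\jsr}(\{F_0,F_1\})\le\rho$. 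It therefore suffices to produce such a norm with $\rho<1$.

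First I would record the coordinatewise action of $F_\delta$ in \eqref{eq:Fdef}: writing $v=(v_1,\dots,v_{k-1})^{\trans}$, it sends $v$ to the vector with first entry $\phi_1\delta v_1+\sum_{i=2}^{k-1}\phi_i v_i$, second entry $\delta v_1$, and $j$-th entry $v_{j-1}$ for $3\le j\le k-1$. This is a companion/shift structure perturbed in the first column by $\delta$. I would then take the weighted sup-norm $\norm v\defeq\max_{1\le j\le k-1}\smlabs{v_j}/w_j$ with geometric weights $w_j\defeq\rho^{-(j-1)}$, for a parameter $\rho\in(0,1)$ to be chosen. The induced operator norm of $M$ equals the maximal weighted row sum $\max_i w_i^{-1}\sum_j\smlabs{M_{ij}}w_j$. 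For the shift rows $j\in\{2,\dots,k-1\}$ of $F_1$ this quantity is $w_{j-1}/w_j=\rho$, so each of those rows contributes exactly $\rho$; the only nontrivial constraint comes from the first row, which equals $\sum_{i=1}^{k-1}\smlabs{\phi_i}\rho^{-(i-1)}$, and requiring this to be $\le\rho$ is equivalent to
\begin{equation}
\sum_{i=1}^{k-1}\smlabs{\phi_i}\rho^{-i}\le1.\label{eq:jsr-constraint}
\end{equation}
Since the left-hand side of \eqref{eq:jsr-constraint} is continuous and decreasing in $\rho$ and equals $\sum_{i=1}^{k-1}\smlabs{\phi_i}<1$ at $\rho=1$, it remains $<1$ for some $\rho<1$; fixing any such $\rho$ yields $\norm{F_1}\le\rho<1$.

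Finally, $F_0$ comes for free: since its first column is that of $F_1$ scaled by $\delta=0$, every weighted row sum for $F_0$ is dominated by the corresponding one for $F_1$, giving $\norm{F_0}\le\norm{F_1}\le\rho$. Combined with the reduction above, this gives $\lambda_{\jsr}(\{F_0,F_1\})\le\rho<1$, which is \assref{JSR}. I expect the only genuine obstacle to be choosing weights that control both matrices at once; the resolving idea is to absorb the shift structure into a geometric weighting, after which all shift rows automatically contribute exactly $\rho$ and the whole problem collapses to the single scalar inequality \eqref{eq:jsr-constraint}. Working with the weighted sup-norm rather than a weighted $\ell^1$ norm is what keeps the weights strictly positive even when trailing coefficients $\phi_i$ vanish, thereby avoiding a degenerate choice of norm.
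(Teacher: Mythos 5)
Your proof is correct, and it takes a genuinely different route from the paper's. The paper works with the \emph{unweighted} sup-norm and shows, by an induction that tracks which coordinates have been refreshed, that every product of $k-1$ matrices from $\{F_0,F_1\}$ contracts $\norm{\cdot}_{\infty}$ by the factor $\Phi\defeq\sum_{i=1}^{k-1}\smlabs{\phi_i}$; this yields $\lambda(M)\leq\Phi^{\smlfloor{n/(k-1)}}$ for any product $M$ of $n$ matrices, and hence $\lambda_{\jsr}(\{F_0,F_1\})\leq\Phi^{1/(k-1)}<1$. You instead construct a common Lyapunov (extremal) norm: a geometrically weighted sup-norm in which each of $F_0,F_1$ is a \emph{one-step} strict contraction, the entire construction collapsing to the scalar inequality $\sum_{i=1}^{k-1}\smlabs{\phi_i}\rho^{-i}\leq1$, which is solvable with some $\rho<1$ by continuity at $\rho=1$ since the left side equals $\Phi<1$ there; submultiplicativity of the induced operator norm together with $\lambda(M)\leq\norm{M}$ then bounds the JSR by $\rho$, with no induction needed. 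Your computations check out: each shift row of $F_1$ has weighted row sum exactly $\rho$, the first row gives precisely the stated constraint, and $F_0$ is dominated row-by-row by $F_1$. As for what each approach buys: yours is shorter and can be numerically sharper --- when the mass of the $\smlabs{\phi_i}$ is concentrated at short lags (e.g.\ only $\phi_1\neq0$) the optimal $\rho$ approaches $\Phi$ itself, whereas the paper's bound is always $\Phi^{1/(k-1)}$; in the opposite extreme, where only $\phi_{k-1}\neq0$, the optimal $\rho$ is $\Phi^{1/(k-1)}$, so the paper's closed-form constant is the worst case of yours. The paper's argument, in exchange, requires no auxiliary parameter and delivers the explicit constant $\Phi^{1/(k-1)}$ directly. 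One pedantic point, common to both proofs: the inequality $\lambda(M)\leq\norm{M}$ for a possibly complex eigenvalue requires either extending the (weighted) sup-norm to $\complex^{k-1}$, where the same maximal-weighted-row-sum formula for the induced norm holds, or invoking the standard fact that the spectral radius is dominated by any submultiplicative matrix norm; this is routine and affects neither argument.
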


\begin{rem}

(i) To give some intuition for why a condition on $\{F_{\delta}\}$ is
needed here, suppose that $\alpha=0$ and $\beta=1$. Then \eqref{eq:arkwithytminus}
entails
\[
\Delta y_{t}=\sum_{i=1}^{k-1}\phi_{i}\Delta y_{t-i}+u_{t}-y_{t}^{-},
\]
and hence in the absence of censoring $\{\Delta y_{t}\}$
would follow a linear autoregression, for which $F_{1}$ gives the associated
companion form. Assumption~\assref{JSR} implies that the eigenvalues of $F_{1}$
are below $1$ in modulus, and thus that the all roots of $\phi(z)$
lie strictly outside the unit circle (see e.g.\ \citealp[Prop.~1.1]{Hamilton94}).
(Since $\phi(0)=1$, it also follows that $\phi(1)>0$.)
In the presence of censoring, $\{\Delta y_{t}\}$ may be shown to instead (when $\alpha = 0$ and $\beta=1$) follow a time-varying autoregression, in which it evolves jointly with an auxiliary process $\{w_t\}$ as per
\begin{align*}
w_{t} & =\phi_{1}\delta_{t-1}w_{t-1}+\sum_{i=2}^{k-1}\phi_{i}\Delta y_{t-i}+u_{t},\\
\Delta y_{t-1} & =\delta_{t-1}w_{t-1},
\end{align*}
for some (stochastic) sequence $\delta_{t}\subset[0,1]$ (see the proof of Lemma~\ref{lem:qdnegl}
in Appendix~\ref{app:proofgeneral}).
$F_{\delta}$ thus corresponds to the companion form autoregressive
matrix for $(w_{t},\Delta y_{t-1},\ldots,\Delta y_{t-k+2})^{\trans}$
when $\delta_{t-1}=\delta$. Because $\lambda_{\jsr}(\{F_{\delta}\mid\delta\in[0,1]\})=\lambda_{\jsr}(\{F_{0},F_{1}\})$,
\assref{JSR} is sufficient to ensure that this time-varying autoregressive
system is stable, irrespective of the sequence $\{\delta_{t}\}$.

(ii) As illustrated by Example~\ref{ex:stat_not_suff} in Appendix~\ref{sec:appendix_JSR}, merely requiring that $\phi(z)$ have only stationary roots is not sufficient to guarantee the convergence (in distribution) of $T^{-1/2}y_{\smlfloor{rT}}$. Due to the nonlinearity in the model it may be possible to induce explosive trajectories for both $\Delta y_t$ and $y_t$, via a succession of periods in which $y_{t} > 0$ alternates with $y_{t}=0$ (see Figure \ref{fig:JSR_ex2_y}). Thus additional conditions, such as \ref{ass:JSR}, are needed to exclude such behaviour.

On the other hand, as illustrated by Example \ref{ex:JSR_not_nec} in Appendix~\ref{sec:appendix_JSR}, neither is Assumption~\ref{ass:JSR} necessary for the convergence of $T^{-1/2}y_{\smlfloor{rT}}$. Finding a necessary condition thus remains a challenging open question.
\end{rem}


\begin{thm}
\label{thm:ytARk}\label{ytARk} Suppose Assumptions \ref{ass:INIT}--\ref{ass:JSR}
hold. Then
\begin{equation}
T^{-1/2}y_{\smlfloor{rT}}\indist\phi(1)^{-1}J_{\theta_{\phi}}(r)\eqdef Y_{\theta_{\phi}}(r)\label{eq:regwklim}
\end{equation}
on $D[0,1]$, where $\theta_{\phi}\defeq[a,\phi(1)b_{0},\phi(1)^{-1}c]$.
\end{thm}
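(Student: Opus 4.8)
The plan is to reduce the higher-order model to a first-order regulated recursion of the kind already handled in Theorem~\ref{thm:LUR_y}, by aggregating the stationary dynamics through a Beveridge--Nelson (BN) decomposition and then exploiting the uniqueness and continuity of the associated Skorokhod reflection. Starting from the linear representation \eqref{eq:ARklagpoly}, $B(L)y_t=\alpha+u_t-y_t^{-}$, I write $B(L)=B(1)+(1-L)\tilde{B}(L)$ with $B(1)=1-\beta$ and $\tilde{B}(L)=\phi(L)-(1-\beta)$, and then decompose $\phi(L)=\phi(1)+(1-L)\phi^{*}(L)$ in BN form (so $\phi^{*}$ has degree $k-2$). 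Setting $\eta_t\defeq\phi^{*}(L)\Delta y_t$, this yields the identity
\[
(1-\beta)y_t+\phi(1)\Delta y_t+\Delta\eta_t-(1-\beta)\Delta y_t=\alpha+u_t-y_t^{-}.
\]
Summing over $t=1,\ldots,\smlfloor{rT}$ and multiplying by $T^{-1/2}$, the terms $T^{-1/2}(\eta_{\smlfloor{rT}}-\eta_0)$ and $(1-\beta)T^{-1/2}(y_{\smlfloor{rT}}-y_0)$ are asymptotically negligible, while $(1-\beta)T^{-1/2}\sum_{t=1}^{\smlfloor{rT}}y_t$, $\phi(1)T^{-1/2}(y_{\smlfloor{rT}}-y_0)$, $\alpha\smlfloor{rT}\,T^{-1/2}$ and $T^{-1/2}\sum_{t=1}^{\smlfloor{rT}}u_t$ behave in the limit like $-c\int_0^r Y$, $\phi(1)(Y(r)-b_0)$, $ar$ and $\sigma W(r)$, where $Y$ denotes a weak limit of $T^{-1/2}y_{\smlfloor{\cdot T}}$. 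Collecting terms, and with the non-decreasing reflection term arising from $-T^{-1/2}\sum_{t=1}^{\smlfloor{rT}}y_t^{-}\geq 0$, the process $Z\defeq\phi(1)Y$ formally satisfies the regulated Ornstein--Uhlenbeck integral equation with drift $a$, initial value $\phi(1)b_0$ and mean-reversion $\phi(1)^{-1}c$ --- precisely the equation that $J_{\theta_\phi}$ solves (as one checks from \eqref{eq:Kac}--\eqref{eq:Jac}). Since $\phi(1)>0$ under \assref{JSR}, regulating $Z$ at zero is equivalent to the censoring of $y_t$, giving $Z=J_{\theta_\phi}$ and hence $Y=\phi(1)^{-1}J_{\theta_\phi}$. (When $k=1$ we have $\phi\equiv 1$, $\eta_t\equiv 0$, and this collapses to Theorem~\ref{thm:LUR_y}.)

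To make this rigorous I would proceed in the following order. First, establish that $\{\Delta y_t\}$ is stochastically bounded, in the strong form $\sup_{t\leq T}\norm{\Delta y_t}_{2+\delta_u}<C$ uniformly in $T$; this is the content I expect of a preliminary lemma (cf.\ Lemma~\ref{lem:qdnegl}). A maximal inequality then gives $T^{-1/2}\max_{t\leq T}\smlabs{\eta_t}\inprob 0$, which kills the $\Delta\eta_t$ term and also yields tightness of $\{T^{-1/2}y_{\smlfloor{\cdot T}}\}$ in $D[0,1]$. Second, record the joint weak convergence $T^{-1/2}\sum_{t=1}^{\smlfloor{rT}}u_t\indist\sigma W(r)$ (from \assref{DGP}, with \assref{MOM} securing the requisite uniform integrability when $k\geq 2$). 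Third, pass to a weakly convergent subsequence and show its limit satisfies the regulated OU equation above; the circular-looking convergence $(1-\beta)T^{-1/2}\sum y_t\to-c\int_0^r Y$ is resolved at this point, since $Y$ is now the subsequential limit. Finally, invoke uniqueness of the solution to the Skorokhod problem (cf.\ \citealp[p.~239]{Revuz_Yor}) to identify the limit as $\phi(1)^{-1}J_{\theta_\phi}$ for every subsequence, upgrading this to full weak convergence on $D[0,1]$.

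The main obstacle is the first step: controlling $\{\Delta y_t\}$. As the Remark following \assref{JSR} stresses, the differences do not obey a stationary autoregression but a regime-switching one, whose companion matrix at each date is one of $\{F_\delta\mid\delta\in[0,1]\}$; the example in Appendix~\ref{sec:appendix_JSR} shows that requiring only that $\phi(z)$ have stationary roots is insufficient to rule out explosive $\Delta y_t$. Assumption~\assref{JSR} is exactly what is needed: because $\lambda_{\jsr}(\{F_0,F_1\})=\lambda_{\jsr}(\{F_\delta\mid\delta\in[0,1]\})<1$, arbitrary products of the $F_{\delta_t}$ contract geometrically irrespective of the realised regime path $\{\delta_t\}$, and combining this uniform contraction with the moment bounds of \assref{MOM} delivers the uniform $L^{2+\delta_u}$ bound on $\Delta y_t$. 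Once that bound is in hand the BN remainders are negligible, and the remainder of the argument follows the first-order template of Theorem~\ref{thm:LUR_y} with the parameter rescalings identified above.
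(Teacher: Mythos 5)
Your overall strategy---aggregate the stationary dynamics, reduce to a regulated first-order structure, and identify the limit as the reflected Ornstein--Uhlenbeck process---is sound in outline, and your identification step is correct: $J_{\theta_\phi}$ is indeed the solution of the Skorokhod problem for the OU dynamics with drift $a$, mean-reversion $c_\phi=\phi(1)^{-1}c$ and initial value $\phi(1)b_0$, so the target of your argument matches \eqref{eq:regwklim}. But two steps, as written, have genuine gaps. The first is tightness. You assert that the maximal inequality $T^{-1/2}\max_{t\leq T}\smlabs{\eta_t}\inprob 0$ ``also yields tightness of $\{T^{-1/2}y_{\smlfloor{\cdot T}}\}$''; it does not. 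The modulus of continuity of $T^{-1/2}y_{\smlfloor{rT}}$ is governed not by $\eta_t$ but by the regulator $\Lambda_T(r)\defeq -T^{-1/2}\sum_{t\leq\smlfloor{rT}}y_t^{-}$ and the drift term $(\beta-1)T^{-1/2}\sum_{t\leq\smlfloor{rT}}y_t$, neither of which you control a priori---both involve $\{y_t\}$ itself, so the circularity you propose to resolve ``at the subsequence stage'' already blocks the compactness step that is supposed to produce the subsequences. Controlling $\Lambda_T$ essentially requires an exact discrete reflection representation plus a Gronwall-type bound on the running maximum, and once you build that you have reconstructed the paper's machinery: the paper factorises $B(z)=\psi(z)(1-\rho z)$ as in \eqref{eq:psidef}, tilts $x_t\defeq\psi(\rho^{-1})\rho^{-t}y_t$, and shows in Lemma~\ref{lem:censrep} that $x_t=[x_{t-1}+\Delta\xi_t]_+$ holds \emph{exactly}, with a free process $\xi_t$ containing no level term at all (the exponential tilt by $\rho^{-t}$ absorbs the local-to-unity root). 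Weak convergence then follows from Lemma~\ref{lem:wkccens} by the explicit, continuous reflection map and the CMT---no tightness, subsequence, or Skorokhod-uniqueness argument is needed, and the ``circular'' term $c\int_0^r Y$ never appears in the free part.

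The second gap is in the step you take for granted, the uniform $L^{2+\delta_u}$ bound on $\Delta y_t$. You describe the differences as following a regime-switching autoregression with companion matrices $\{F_\delta\mid\delta\in[0,1]\}$ and invoke the JSR contraction directly. That description is exact only when $\alpha=0$ and $\beta=1$ (as the Remark after \assref{JSR} is careful to say); under \ref{ass:DGP} with $c\neq 0$ the recursion for $\Delta y_t$ carries the additional level term $(\beta-1)y_{t-1}$, which cannot be dismissed without an a priori bound on $\smlnorm{y_{t-1}}_{2+\delta_u}$---circular once more. The paper's Lemma~\ref{lem:qdnegl} avoids this by running the contraction on the quasi-differences: by \eqref{eq:psiDy}, $\psi(L)\Delta_\rho y_t=\alpha+u_t-y_t^{-}$ has no level term, the switching system has matrices $F_\delta(\vec\psi)\goesto F_\delta$ so that continuity of the JSR preserves the contraction for large $T$, and only afterwards does one transfer to $\Delta y_t$ via $\Delta y_t=\Delta_\rho y_t+(\rho-1)y_{t-1}$. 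Relatedly, your appeal to \citet[p.~239]{Revuz_Yor} covers the reflection of a given path; for the state-dependent (OU) drift you would additionally need the Lipschitz property of the Skorokhod map and a Picard/Gronwall argument, plus a verification that the exact discrete complementarity $y_t y_t^{-}=0$ survives the passage to the limit. None of these is insurmountable, but each is real work that your proposal elides, and the paper's quasi-differencing-plus-tilting route is precisely the device that makes all of them unnecessary.
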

The principal difference between Theorems \ref{thm:LUR_y} and \ref{thm:ytARk}
is that when $k>1$, the stationary dynamics appear in the limit via
the factor $\phi(1)$. Notably, the local autoregressive parameter
$c$ is replaced by $\phi(1)^{-1}c$ -- exactly as it would be if
$\{y_{t}\}$ were generated by a linear autoregression with a root
local to unity (cf.\ \citealp{Hansen99REStat}, p.\ 599). Indeed,
$\phi(1)=1$ when $k=1$, so in this case the two results coincide.

\begin{cor}
For the Tobit model \eqref{eq:Lbnd} with censoring point $\mathbf{L}$, $T^{-1/2}\tilde{y}_{\smlfloor{rT}}\indist Y_{\tilde{\theta}_{\phi}}(r)$,
where $\tilde{\theta}_\phi\defeq[\tilde{a},\phi(1)\tilde{b}_{0},\phi(1)^{-1}c]$.
\end{cor}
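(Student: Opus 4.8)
The plan is to reduce the claim to Theorem~\ref{thm:ytARk} via the change of variables $\tilde{y}_{t}\defeq y_{t}-\mathbf{L}$ already introduced in Section~\ref{sec:nonzero_bound}. First I would subtract $\mathbf{L}=T^{1/2}\ell$ from both sides of \eqref{eq:Lbnd} and note that $\Delta\tilde{y}_{t}=\Delta y_{t}$, so that $\{\tilde{y}_{t}\}$ solves a dynamic Tobit of the form \eqref{eq:tobitark}, censored at zero, carrying the same autoregressive parameter $\beta=\exp(c/T)$, the same short-run coefficients $\{\phi_{i}\}$, and the same innovations $\{u_{t}\}$, but with intercept $\tilde{\alpha}\defeq\alpha+(\beta-1)\mathbf{L}$ in place of $\alpha$. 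It therefore suffices to check that $\{\tilde{y}_{t}\}$ satisfies Assumptions~\ref{ass:INIT}--\ref{ass:JSR} with the transformed parameter vector $\tilde{\theta}=(\tilde{a},\tilde{b}_{0},c)$, and then to invoke Theorem~\ref{thm:ytARk}.

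The verification is routine, and the only point requiring care is that $\tilde{\alpha}$ no longer has the exact form $T^{-1/2}\tilde{a}$. The computation $T^{1/2}\tilde{\alpha}=a+T(\beta-1)\ell\goesto a+c\ell\eqdef\tilde{a}$ shows instead that it satisfies the weaker condition $T^{1/2}\tilde{\alpha}\goesto\tilde{a}$; as observed in the discussion following Assumption~\ref{ass:MOM} (item~(4) of the consequences, where it is noted that all results hold provided $T^{1/2}\alpha_{T}\goesto a$ and $T(\beta_{T}-1)\goesto c$), this generalisation suffices, so \ref{ass:DGP}\ref{enu:DGP:LU} is effectively met, while \ref{ass:DGP}\ref{enu:DGP:ut} holds verbatim since $\{u_{t}\}$ is unchanged, and $\beta$ being untouched by the shift leaves $c$ unchanged. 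For Assumption~\ref{ass:INIT} I would note $T^{-1/2}\tilde{y}_{0}=T^{-1/2}y_{0}-\ell\inprob b_{0}-\ell\eqdef\tilde{b}_{0}$, where $\tilde{b}_{0}\geq0$ holds automatically because censoring at $\mathbf{L}$ forces $\tilde{y}_{0}=y_{0}-\mathbf{L}\geq0$. Assumption~\ref{ass:MOM} transfers by convexity ($c_{r}$-inequality), since $\expect\smlabs{T^{-1/2}\tilde{y}_{0}}^{2+\delta_{u}}\leq C'(\expect\smlabs{T^{-1/2}y_{0}}^{2+\delta_{u}}+\smlabs{\ell}^{2+\delta_{u}})<\infty$, while $\Delta\tilde{y}_{i}=\Delta y_{i}$ leaves the remaining moment bounds intact. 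Finally, Assumption~\ref{ass:JSR} concerns only the matrices $F_{0},F_{1}$, which are constructed solely from $\{\phi_{i}\}$ and hence are unaffected by the shift.

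With all assumptions in force for $\{\tilde{y}_{t}\}$ under $\tilde{\theta}=(\tilde{a},\tilde{b}_{0},c)$, Theorem~\ref{thm:ytARk} delivers $T^{-1/2}\tilde{y}_{\smlfloor{rT}}\indist\phi(1)^{-1}J_{\tilde{\theta}_{\phi}}(r)=Y_{\tilde{\theta}_{\phi}}(r)$ on $D[0,1]$, with $\tilde{\theta}_{\phi}=[\tilde{a},\phi(1)\tilde{b}_{0},\phi(1)^{-1}c]$, exactly as asserted. I do not anticipate any serious obstacle, as the corollary is an essentially immediate consequence of Theorem~\ref{thm:ytARk} once the reduction is carried out; the subtlest points are merely the appeal to the generalised drift condition from item~(4) and the (automatic) non-negativity of the transformed initialisation $\tilde{b}_{0}$.
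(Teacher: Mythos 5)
Your proposal is correct and takes essentially the same route as the paper: Section~\ref{sec:nonzero_bound} performs exactly the change of variables $\tilde{y}_{t}=y_{t}-\mathbf{L}$, computes $T^{1/2}\tilde{\alpha}\goesto a+c\ell=\tilde{a}$ and $\tilde{b}_{0}=b_{0}-\ell$, and the corollary is then read off as an immediate application of Theorem~\ref{thm:ytARk}, with the non-exact drift $\tilde{\alpha}$ covered (as you note) by the remark that all results hold whenever $T^{1/2}\alpha_{T}\goesto a$ and $T(\beta_{T}-1)\goesto c$. The only slight imprecision is your claim that $\tilde{b}_{0}\geq0$ is automatic: since the initial values are exogenous rather than generated by the censored recursion \eqref{eq:Lbnd}, $y_{0}\geq\mathbf{L}$ is a natural modelling assumption rather than a consequence of the censoring, but this matches the paper's implicit treatment and does not affect the argument.
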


\subsection{OLS estimates}

\label{sec:OLS_asymptotics}

We first consider the case where $k=1$, as in the model \eqref{eq:dgp}, to
develop intuition for our results.

When estimating \eqref{eq:dgp} by OLS, we need to decide which deterministic
terms should be included in the regression. In the absence of censoring,
i.e.~if the data generating process were simply a linear autoregression,
the inclusion of a constant and a linear trend would render the distribution
of the OLS estimator of $\beta$ free of any nuisance parameters related
to the deterministic components.\footnote{Strictly speaking, this is true only if the autoregressive model is
formulated in `unobserved components' form (see e.g.\ \citet[Section 2.1]{AC94JBES}
as
\begin{align*}
y_{t} & =\mu+\delta t+y_{t}^{\ast} & y_{t}^{\ast} & =\beta y_{t-1}^{\ast}+u_{t}
\end{align*}
so that the presence (or absence) of a linear drift in $y_{t}$
is independent of the value of $\beta$, and so can always be removed
by deterministic detrending. By contrast, if the model is formulated
`directly' as
\[
y_{t}=\alpha+\beta y_{t-1}+u_{t},
\]
then the linear trend that is present when $\beta=1$ becomes an exponential
trend when $\beta$ is local to unity. In the present (censored) setting, we may
note that \eqref{eq:dgp} is \emph{not} equivalent to
\begin{align*}
y_{t} & =[\mu+\delta t+y_{t}^{\ast}]_{+} & y_{t}^{\ast} & =\beta y_{t-1}^{\ast}+u_{t}.
\end{align*}
(This
model is in fact the \emph{latent} dynamic Tobit referred to in Section~\ref{sec:intro}.)} Unfortunately, the nonlinearity introduced by the censoring entails
that $\alpha$ -- or rather, the local parameter $a$ -- will show
up in the limiting distribution of $\hat{\beta}_{T}$, \emph{irrespective}
of which deterministics are included in the regression. To permit
inferences to also be drawn on $a$, if required, we consider the OLS regression
of $y_{t}$ on a constant and $y_{t-1}$, i.e.
\begin{equation}
\begin{bmatrix}\hat{\alpha}_{T}\\
\hat{\beta}_{T}
\end{bmatrix}\defeq\left(\sum_{t=1}^{T}\begin{bmatrix}1 & y_{t-1}\\
y_{t-1} & y_{t-1}^{2}
\end{bmatrix}\right)^{-1}\sum_{t=1}^{T}\begin{bmatrix}1\\
y_{t-1}
\end{bmatrix}y_{t}\eqdef\mathcal{M}_{T}^{-1}m_{T}.\label{eq:ols_formula}
\end{equation}

In the stationary dynamic Tobit model, OLS is inconsistent (see e.g.~\citealp[Supplementary Material, Lemma B.1]{bykh_JBES}). However, as the following shows,
when $\beta$ is local to unity, consistency is restored. The reason
is that observations in the vicinity of zero accumulate only at rate
$T^{1/2}$, so that a vanishingly small fraction of the sample is
affected by the censoring.
\begin{thm}
\label{thm:OLS_LUR} Suppose Assumptions \ref{ass:INIT} and \ref{ass:DGP}
hold, with $k=1$ in \eqref{eq:tobitark}. Then
\begin{align}
\begin{bmatrix}T^{1/2}(\hat{\alpha}_{T}-\alpha)\\
T(\hat{\beta}_{T}-\beta)
\end{bmatrix} & \indist\begin{bmatrix}1 & \int_{0}^{1}J_{\theta}(r)\diff r\\
\int_{0}^{1}J_{\theta}(r)\diff r & \int_{0}^{1}J_{\theta}^{2}(r)\diff r
\end{bmatrix}^{-1}\begin{bmatrix}v(1)-c\int_{0}^{1}J_{\theta}(r)\diff r-b_{0}-a\\
\sigma\int_{0}^{1}J_{\theta}(r)\diff W(r)
\end{bmatrix}\nonumber \\
 & \eqdef\mathcal{J}_{\theta}^{-1}\mathcal{U}_{\theta}\eqdef\begin{bmatrix}\mathfrak{a}_{\theta}\\
\mathfrak{b}_{\theta}
\end{bmatrix}.\label{eq:OLS_asy}
\end{align}
\end{thm}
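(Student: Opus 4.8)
The plan is to work from the exact least-squares error representation and reduce the statement to the joint weak convergence of a normalised moment matrix and a normalised score, followed by an application of the continuous mapping theorem. From \eqref{eq:ols_formula} and the identity $y_{t}-\alpha-\beta y_{t-1}=u_{t}-y_{t}^{-}$ (which is \eqref{eq:arkwithytminus} with $k=1$), the OLS error satisfies
\[
\begin{bmatrix}\hat{\alpha}_{T}-\alpha\\ \hat{\beta}_{T}-\beta\end{bmatrix}=\mathcal{M}_{T}^{-1}\sum_{t=1}^{T}\begin{bmatrix}1\\ y_{t-1}\end{bmatrix}(u_{t}-y_{t}^{-}).
\]
Writing $D_{T}\defeq\diag(T^{1/2},T)$ and using $D_{T}\mathcal{M}_{T}^{-1}=(D_{T}^{-1}\mathcal{M}_{T}D_{T}^{-1})^{-1}D_{T}^{-1}$, it suffices to prove $(A_{T},b_{T})\indist(\mathcal{J}_{\theta},\mathcal{U}_{\theta})$ jointly, where $A_{T}\defeq D_{T}^{-1}\mathcal{M}_{T}D_{T}^{-1}$ and $b_{T}\defeq D_{T}^{-1}\sum_{t=1}^{T}(1,y_{t-1})^{\trans}(u_{t}-y_{t}^{-})$, together with the \as{} invertibility of $\mathcal{J}_{\theta}$; the result then follows by the continuous mapping theorem applied to $(\mathcal{M},s)\mapsto\mathcal{M}^{-1}s$. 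Invertibility holds because $\det\mathcal{J}_{\theta}=\int_{0}^{1}J_{\theta}^{2}(r)\diff r-\bigl(\int_{0}^{1}J_{\theta}(r)\diff r\bigr)^{2}>0$ \as{} by Cauchy--Schwarz, $J_{\theta}$ being non-degenerate.

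The entries of $A_{T}$ are $T^{-3/2}\sum_{t=1}^{T}y_{t-1}$ and $T^{-2}\sum_{t=1}^{T}y_{t-1}^{2}$, which are continuous (integral) functionals of the rescaled path $X_{T}(r)\defeq T^{-1/2}y_{\smlfloor{rT}}$; since $X_{T}\indist J_{\theta}$ on $D[0,1]$ by Theorem~\ref{thm:LUR_y}, the continuous mapping theorem gives $A_{T}\indist\mathcal{J}_{\theta}$. For the first entry of $b_{T}$, namely $T^{-1/2}\sum_{t=1}^{T}(u_{t}-y_{t}^{-})$, I would eliminate $y_{t}^{-}$ by telescoping: summing $\Delta y_{t}=(\beta-1)y_{t-1}+\alpha+u_{t}-y_{t}^{-}$ yields
\[
\sum_{t=1}^{T}(u_{t}-y_{t}^{-})=y_{T}-y_{0}-(\beta-1)\sum_{t=1}^{T}y_{t-1}-T\alpha.
\]
Dividing by $T^{1/2}$ and using $X_{T}(1)\indist J_{\theta}(1)$, $T^{-1/2}y_{0}\inprob b_{0}$, $T(\beta-1)\goesto c$ with $T^{-3/2}\sum y_{t-1}\indist\int_{0}^{1}J_{\theta}(r)\diff r$, and $T^{1/2}\alpha\goesto a$, this entry converges to $J_{\theta}(1)-b_{0}-c\int_{0}^{1}J_{\theta}(r)\diff r-a$, which is the first component of $\mathcal{U}_{\theta}$ (with $v(1)=J_{\theta}(1)$).

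The second entry of $b_{T}$ is $T^{-1}\sum_{t=1}^{T}y_{t-1}(u_{t}-y_{t}^{-})$, which I split as $T^{-1}\sum y_{t-1}u_{t}-T^{-1}\sum y_{t-1}y_{t}^{-}$. For the first piece, the joint convergence $(X_{T},U_{T})\indist(J_{\theta},\sigma W)$ on $D[0,1]^{2}$ --- which follows from the functional CLT $U_{T}(r)\defeq T^{-1/2}\sum_{t\le\smlfloor{rT}}u_{t}\indist\sigma W$ together with the fact, established in the proof of Theorem~\ref{thm:LUR_y}, that $X_{T}$ is asymptotically a continuous image of $U_{T}$ --- combined with a convergence-to-stochastic-integrals result for martingale-difference integrators (the summand $y_{t-1}$ is $\mathcal{F}_{t-1}$-adapted against the increment $u_{t}$), yields $T^{-1}\sum y_{t-1}u_{t}\indist\sigma\int_{0}^{1}J_{\theta}(r)\diff W(r)$, jointly with the quantities above. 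Only $\expect u_{t}^{2}<\infty$ is needed here, consistent with the hypotheses of the theorem.

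\emph{The main obstacle is showing that the censoring cross-term $T^{-1}\sum_{t=1}^{T}y_{t-1}y_{t}^{-}\inprob0$}, as this is precisely where the censoring must be shown not to distort the limit. The term $y_{t}^{-}$ is non-zero only on $C_{t}\defeq\{\alpha+\beta y_{t-1}+u_{t}<0\}$; on $C_{t}$, the non-negativity of $y_{t-1}$ forces $\beta y_{t-1}\le\smlabs{\alpha}+\smlabs{u_{t}}$ and $\smlabs{y_{t}^{-}}\le\smlabs{\alpha}+\smlabs{u_{t}}$, giving the pointwise bound $y_{t-1}\smlabs{y_{t}^{-}}\indic_{C_{t}}\le C(\alpha^{2}+u_{t}^{2})\indic_{C_{t}}$. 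The $\alpha^{2}$ contribution is $O(\alpha^{2})=O(T^{-1})$ after averaging; for the remainder I would take conditional expectations, using $\indic_{C_{t}}\le\indic\{u_{t}<-M\}+\indic\{y_{t-1}<K_{M}\}$ to obtain $\expect[u_{t}^{2}\indic_{C_{t}}\mid\mathcal{F}_{t-1}]\le\expect[u_{t}^{2}\indic\{u_{t}<-M\}]+\sigma^{2}\indic\{y_{t-1}<K_{M}\}$ for any $M>0$ (with $K_{M}$ bounded, since $\alpha\goesto0$ and $\beta\goesto1$). The first term is made arbitrarily small by uniform integrability of $u_{t}^{2}$ (again only two moments are needed), while the average of the second vanishes because the rescaled process spends asymptotically no time near zero: for fixed $K$, $T^{-1}\#\{t\le T:y_{t-1}<K\}\inprob0$, as $J_{\theta}$ is a continuous reflected-diffusion-type process whose occupation time at level $0$ is \as{} zero. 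Sending $M\goesto\infty$ then gives $T^{-1}\sum y_{t-1}y_{t}^{-}\inprob0$, which completes the identification of the limit of $b_{T}$ and hence, via the continuous mapping theorem, establishes \eqref{eq:OLS_asy}.
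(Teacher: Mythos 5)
Your proposal is correct, and for most steps it coincides with the paper's own proof: the same error decomposition via $y_t-\alpha-\beta y_{t-1}=u_t-y_t^-$, the same telescoping identity giving $T^{-1/2}\sum_{t=1}^T(u_t-y_t^-)\indist J_\theta(1)-b_0-c\int_0^1 J_\theta(r)\diff r-a$ (correctly identifying the $v(1)$ in the statement as $J_\theta(1)$), the same appeal to Theorem~\ref{thm:LUR_y} plus the CMT for the moment matrix, and the same convergence-to-stochastic-integrals result for $T^{-1}\sum_{t=1}^T y_{t-1}u_t$ (the paper cites Theorem 2.1 of the ito\_convergence reference). The genuine divergence is at the step you rightly flag as the crux, $T^{-1}\sum_{t=1}^T y_{t-1}y_t^-\inprob 0$. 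The paper handles this algebraically: since $y_t y_t^-=0$, one has $y_{t-1}y_t^-=-y_t^-\Delta y_t$, and Cauchy--Schwarz with the two auxiliary facts $\sum_{t=1}^T(y_t^-)^2=o_p(T)$ and $\sum_{t=1}^T(\Delta y_t)^2=O_p(T)$ (Lemma~\ref{lem:AR1aux}, the former proved from $\max_{t}\smlabs{y_t^-}\le\smlabs{\alpha}+\max_t\smlabs{u_t}=o_p(T^{1/2})$ together with $\sum_t y_t^-=O_p(T^{1/2})$, itself a byproduct of the telescoping identity) gives the cross term as $o_p(T)$ directly. You instead bound $y_{t-1}\smlabs{y_t^-}\le C(\alpha^2+u_t^2)$ on the censoring event and reduce, via the split $\indic_{C_t}\le\indic\{u_t<-M\}+\indic\{y_{t-1}<K_M\}$ and iterated expectations, to the occupation-time claim $T^{-1}\#\{t\le T:y_{t-1}<K\}\inprob 0$. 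This route works (taking full expectations and using dominated convergence yields $L^1$ convergence, so no martingale WLLN is even needed), but it is where your sketch leaves real work that the paper's argument avoids: $\indic\{x(r)<\epsilon\}$ is not a continuous functional on $D[0,1]$, so the passage from Theorem~\ref{thm:LUR_y} requires a continuous sandwiching of the indicator, and the assertion that $\leb\{r:J_\theta(r)=0\}=0$ a.s.\ (equivalently, that $-K_\theta$ spends Lebesgue-null time at its running maximum) needs its own short proof. In terms of trade-offs, the paper's identity-plus-Cauchy--Schwarz argument is more elementary and self-contained, and its auxiliary lemmas are reused for the consistency of $\hat{\sigma}_T^2$ in Corollary~\ref{thm:tstatARk} and extend to the $k\ge 2$ case; your argument makes the probabilistic mechanism explicit --- censoring binds on an asymptotically negligible fraction of the sample --- which is precisely the heuristic the paper states informally before the theorem, and, like the paper's proof, it requires only two moments on $u_t$. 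Your added remark on the a.s.\ invertibility of $\mathcal{J}_\theta$ (strict Cauchy--Schwarz by non-degeneracy of $J_\theta$) is a point the paper leaves implicit.
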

\begin{rem}
Letting $J_{\theta}^{\mu}(r)\defeq J_{\theta}(r)-\int_{0}^{1}J_{\theta}(s)\diff s$,
an alternative expression for the limiting distribution of $\hat{\beta}_{T}$
is given by
\begin{equation}
\mathfrak{b}_{\theta}=\frac{J_{\theta}^{\mu}(1)^{2}-J_{\theta}^{\mu}(0)^{2}-\sigma^{2}}{2\int(J_{\theta}^{\mu}(r))^{2}\diff r}-c.\label{eq:altexpression}
\end{equation}
This agrees with the limiting distribution
that would be obtained in the linear autoregressive model, except with
$J_{\theta}(\cdot)$ taking the place of the usual Ornstein--Uhlenbeck
process. (See Appendix~\ref{proof_altexpression} for details.)
\end{rem}
For the case of general $k\geq1$, let $\vec{\phi}\defeq(\phi_{1},\ldots,\phi_{k-1})^{\trans}$,
and
\begin{equation}
(\hat{\alpha}_{T},\hat{\beta}_{T},\hat{\phi}_{1,T},\ldots,\hat{\phi}_{k-1,T})\defeq\argmin_{(a,b,f_{1},\ldots,f_{k-1})}\sum_{t=1}^{T}\left(y_{t}-a-by_{t-1}-\sum_{i=1}^{k-1}f_{i}\Delta y_{t-i}\right)^{2}\label{eq:olsARk}
\end{equation}
denote the OLS estimators of the parameters of \eqref{eq:tobitark}.
Since, as the next results shows, the limiting distributions of $(\hat{\alpha}_{T},\hat{\beta}_{T})$
depend on $\phi(1)$, a consistent estimate of that quantity is needed
to compute valid critical values for test statistics based on these
estimators. The following also guarantees the consistency of $\hat{\phi}(1)\defeq1-\sum_{i=1}^{k-1}\hat{\phi}_{i,T}$.
\begin{thm}\label{thm:olsARk}
Suppose Assumptions \ref{ass:INIT}--\ref{ass:JSR}
hold. Then $\hat{\vec{\phi}}_{T}\inprob\vec{\phi}$, and
\begin{equation}
\begin{bmatrix}T^{1/2}(\hat{\alpha}_{T}-\alpha)\\
T(\hat{\beta}_{T}-\beta)
\end{bmatrix}\indist\begin{bmatrix}1 & \int Y_{\theta_{\phi}}(r)\diff r\\
\int Y_{\theta_{\phi}}(r)\diff r & \int Y_{\theta_{\phi}}^{2}(r)\diff r
\end{bmatrix}^{-1}\begin{bmatrix}\phi(1)[Y_{\theta_{\phi}}(1)-b_{0}-c_{\phi}\int Y_{\theta_{\phi}}(r)\diff r]-a\\
\sigma\int Y_{\theta_{\phi}}(r)\diff W(r)
\end{bmatrix}\label{eq:olslimdist}
\end{equation}
\end{thm}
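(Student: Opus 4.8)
The plan is to treat \eqref{eq:olsARk} as the linear regression of $y_t$ on $X_t\defeq(1,y_{t-1},\Delta y_{t-1},\ldots,\Delta y_{t-k+1})^\trans$ with pseudo-error $\eta_t\defeq u_t-y_t^-$, so that by \eqref{eq:arkwithytminus} the estimation error is $\hat\Theta_T-\Theta=(\sum_t X_tX_t^\trans)^{-1}\sum_t X_t\eta_t$, writing $\Theta\defeq(\alpha,\beta,\vec{\phi}^\trans)^\trans$. The central observation is that the regressors live on two different scales: $(1,y_{t-1})$ is nonstationary, with $y_{t-1}=O_p(T^{1/2})$ by Theorem~\ref{thm:ytARk}, while under \ref{ass:JSR} the differences $V_t\defeq(\Delta y_{t-1},\ldots,\Delta y_{t-k+1})^\trans$ are stochastically bounded and behave, in the interior of the state space, like the stationary autoregression $\phi(L)\Delta y_t=u_t$. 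I therefore introduce the scaling $D_T\defeq\diag(T^{1/2},T,T^{1/2},\ldots,T^{1/2})$ and study $D_T(\hat\Theta_T-\Theta)=[D_T^{-1}(\sum_t X_tX_t^\trans)D_T^{-1}]^{-1}D_T^{-1}\sum_t X_t\eta_t$, with the nonstationary block $Z_t\defeq(1,y_{t-1})^\trans$ scaled by $D_{Z,T}\defeq\diag(T^{1/2},T)$ and the stationary block $V_t$ scaled by $D_{V,T}\defeq T^{1/2}I_{k-1}$.

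First I would show that the scaled moment matrix is asymptotically block diagonal between $Z_t$ and $V_t$. The $Z$-block converges to $\mathcal{J}_{\theta_\phi}$ by the continuous mapping theorem applied to the weak limit of Theorem~\ref{thm:ytARk}; the $V$-block $T^{-1}\sum_t V_tV_t^\trans$ converges in probability to the positive definite autocovariance matrix $\Sigma$ of the stationary limit of $\{\Delta y_t\}$, by a law of large numbers justified under \ref{ass:MOM} and \ref{ass:JSR}; and the cross-block $D_{Z,T}^{-1}(\sum_t Z_tV_t^\trans)D_{V,T}^{-1}$ vanishes in probability. The last claim is the crux: it rests on the estimate $\sum_t y_{t-1}\Delta y_{t-i}=O_p(T)=o_p(T^{3/2})$, which holds because $\{\Delta y_t\}$ is, up to asymptotically negligible drift and censoring corrections, a short-memory sequence, so that the product sum enjoys the same cancellation as a stochastic integral $\int Y\diff Y$ rather than accumulating at the naive rate $T^{3/2}$.

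Next I would treat the scaled score $D_T^{-1}\sum_t X_t\eta_t$. For the $Z$-block, the component $T^{-1}\sum_t y_{t-1}\eta_t$ splits into $T^{-1}\sum_t y_{t-1}u_t\indist\sigma\int_0^1 Y_{\theta_\phi}(r)\diff W(r)$ (invoking the joint convergence of $\{y_{t-1}\}$ with $\{u_t\}$ and the stochastic-integral convergence established alongside Theorem~\ref{thm:ytARk}) and a remainder $T^{-1}\sum_t y_{t-1}y_t^-\inprob0$, negligible because $y_t^-\neq0$ only when $y_{t-1}$ is $O_p(1)$, i.e.\ on a vanishing fraction of the sample. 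The component $T^{-1/2}\sum_t\eta_t$ is handled by summing the identity $\phi(L)\Delta y_t=\alpha+(\beta-1)y_{t-1}+\eta_t$ over $t$: the left side telescopes to $T^{-1/2}[(y_T-y_0)-\sum_i\phi_i(y_{T-i}-y_{-i})]\indist\phi(1)[Y_{\theta_\phi}(1)-b_0]$, while $T^{-1/2}T\alpha\to a$ and $T^{-1/2}(\beta-1)\sum_t y_{t-1}\indist c\int_0^1 Y_{\theta_\phi}(r)\diff r$, so that $T^{-1/2}\sum_t\eta_t\indist\phi(1)[Y_{\theta_\phi}(1)-b_0]-a-c\int_0^1 Y_{\theta_\phi}(r)\diff r=\phi(1)[Y_{\theta_\phi}(1)-b_0-c_\phi\int_0^1 Y_{\theta_\phi}(r)\diff r]-a$, where $c_\phi=\phi(1)^{-1}c$. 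Together these identify the $Z$-block score limit as $\mathcal{U}_{\theta_\phi}$, while the $V$-block score is merely shown to be $O_p(1)$.

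Finally, applying the partitioned-inverse formula to the asymptotically block-diagonal moment matrix, the vanishing cross-block decouples the two blocks in the limit: the $Z$-subvector of $D_T(\hat\Theta_T-\Theta)$ converges to $\mathcal{J}_{\theta_\phi}^{-1}\mathcal{U}_{\theta_\phi}$, which is exactly \eqref{eq:olslimdist}, and the $V$-subvector $T^{1/2}(\hat{\vec{\phi}}_T-\vec{\phi})$ converges to $\Sigma^{-1}$ times an $O_p(1)$ limit, whence $\hat{\vec{\phi}}_T\inprob\vec{\phi}$. I expect the main obstacle to be the treatment of the stationary block under censoring: establishing both the stochastic boundedness of $\{\Delta y_t\}$ and the law of large numbers for $T^{-1}\sum_t V_tV_t^\trans$ requires controlling the regime-switching autoregression that $\{\Delta y_t\}$ obeys, which is precisely where \ref{ass:JSR} is indispensable (ruling out the explosive behaviour of Example~\ref{ex:stat_not_suff}) and where the bulk of the technical lemmas in the appendix would be deployed.
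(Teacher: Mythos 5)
Your proposal follows essentially the same route as the paper's proof: the same regression decomposition with pseudo-error $u_{t}-y_{t}^{-}$ via \eqref{eq:arkwithytminus}; the same telescoping argument for $T^{-1/2}\sum_{t}(u_{t}-y_{t}^{-})$ (Lemma~\ref{lem:ARkterms}); the same stochastic-integral limit for $T^{-1}\sum_{t}y_{t-1}u_{t}$; the same summation-by-parts bound $\sum_{t}y_{t-1}\Delta y_{t-s}=O_{p}(T)$; and the same LLN for the difference block, with Lemmas \ref{lem:qdnegl} and \ref{lem:ARkterms} doing the heavy lifting under \ref{ass:JSR}, exactly as you anticipate. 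The one genuine organisational difference is the scaling. The paper uses the \emph{asymmetric} pair $D_{1,T}=\diag\{T^{1/2},T,I_{k-1}T\}$, $D_{2,T}=\diag\{T^{1/2},T,I_{k-1}\}$, under which the upper-right cross block $\Xi_{T}$ is merely $O_{p}(1)$ (not vanishing) but the limit matrix is block upper triangular, and the stationary score is scaled by $T^{-1}$, where $T^{-1}\sum_{t}\Delta y_{t-s}(u_{t}-y_{t}^{-})=o_{p}(1)$ is easy. You instead use symmetric $T^{1/2}$ scaling, which makes both cross blocks vanish but forces you to control the stationary score at the finer $T^{-1/2}$ scale.

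That choice creates the one soft spot in your sketch: the claim that $T^{-1/2}\sum_{t}\Delta y_{t-s}(u_{t}-y_{t}^{-})=O_{p}(1)$ is not delivered by the available estimates. The martingale part $T^{-1/2}\sum_{t}\Delta y_{t-s}u_{t}$ is indeed $O_{p}(1)$ by the second-moment bound from Lemma~\ref{lem:qdnegl}, but for the censoring part, Cauchy--Schwarz with $\sum_{t}(y_{t}^{-})^{2}=o_{p}(T)$ and $\sum_{t}(\Delta y_{t})^{2}=O_{p}(T)$ gives only $T^{-1/2}\bigl|\sum_{t}\Delta y_{t-s}y_{t}^{-}\bigr|=o_{p}(T^{1/2})$. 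Fortunately this weaker bound still closes your argument, provided you track rates: the cross block of your scaled moment matrix is $O_{p}(T^{-1/2})$ (since $\sum_{t}\Delta y_{t-s}=O_{p}(T^{1/2})$ and $\sum_{t}y_{t-1}\Delta y_{t-s}=O_{p}(T)$), so its product with an $o_{p}(T^{1/2})$ score is $o_{p}(1)$, leaving the $(\hat{\alpha}_{T},\hat{\beta}_{T})$ limit intact; and $T^{1/2}(\hat{\vec{\phi}}_{T}-\vec{\phi})=o_{p}(T^{1/2})$ still yields consistency. Note, however, that your implied $\sqrt{T}$-rate for $\hat{\vec{\phi}}_{T}$ is then \emph{not} established (the theorem does not claim it, and the paper's asymmetric scaling sidesteps the question entirely). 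A second, smaller repair: your justification that $T^{-1}\sum_{t}y_{t-1}y_{t}^{-}$ is negligible ``because $y_{t}^{-}\neq0$ only when $y_{t-1}$ is $O_{p}(1)$'' is a heuristic -- censoring can occur with $y_{t-1}$ as large as the order of $\max_{t}\smlabs{u_{t}}$, and making occupation-time arguments rigorous here is nontrivial. The clean argument, used in the paper, is the identity $y_{t}y_{t}^{-}=0$, hence $y_{t-1}y_{t}^{-}=-\Delta y_{t}\,y_{t}^{-}$, followed by Cauchy--Schwarz with $\sum_{t}(y_{t}^{-})^{2}=o_{p}(T)$.
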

\begin{rem}
The parameters of the Tobit model \eqref{eq:Lbnd} with censoring point $\mathbf{L}$ can be estimated as in \eqref{eq:olsARk}, with
$\tilde{y}_{t}$ in place of $y_{t}$ and with $\tilde{\theta}_\phi$ replacing $\theta_\phi$ in \eqref{eq:olslimdist}.
\end{rem}

The theorem shows that, depite the nonlinearity of the Tobit model, OLS is consistent for all parameters in the presence of a near unit root. The associated limit distribution theory for $\hat{\beta}_T$ provides a basis for the unit root tests developed in the next section, yielding a test that is both easy to compute, and semiparametric with respect to the distribution of the innovations. (For examples in which the erroneous imposition of normality can have undesirable consequences, in the context of a dynamic Tobit model, see the empirical illustrations in \citet{stat_paper} and \citet{bykh_JBES}.)

\subsection{Unit root tests}

The preceding results allow us to conduct asymptotically valid hypothesis
tests on key parameters of the dynamic Tobit: in particular, to test
the hypothesis of a unit root in this setting. This may be of interest for several reasons.
For example, whether the variance of the errors made in forecasting $y_t$
remains bounded, or grows without bound at progressively longer forecast horizons,
depends crucially on the presence of a unit root. In a setting with multiple series, one
or more of which are non-negative, the presence of unit roots may also lead to
spurious regressions or, more constructively, allow long-run
equilibrium relationships to be identified from the (nonlinear) cointegrating
relationships between the series (see \citealp{DMW22}).

In a linear autoregressive model, the presence of a unit root -- equivalently, the sum of
the autoregressive coefficients being unity ($\beta=1$) -- necessarily imparts a stochastic trend to $\{y_t\}$.
However, in the dynamic Tobit the value of the intercept also matters. In particular,
a fixed negative intercept ($\alpha<0$ and not drifting toward zero) would continually push the process back towards the
censoring point, thereby rendering it stationary (for $k=1$, see \citealp[Theorem 3]{bykh_JBES}).
Thus to the extent that the purpose of a test for a unit root is to
test for the presence of a stochastic trend in $\{y_t\}$, rather
than to detect a unit root per se, it may be
considered more appropriate to test the null that $\alpha=0$
\emph{and} $\beta=1$, as opposed to merely the restriction that $\beta=1$,
with it being desirable to reject this null in favour of a stationary
alternative, when either $\beta<1$ (exactly as in a linear model),
or when $\beta=1$ but $\alpha<0$.\footnote{When $k > 1$, the
above must be qualified somewhat, because of the possibility that
the higher-order nonlinear dynamics of the system may generate explosive
trajectories even when $\beta < 1$. For stationary alternatives that are
\emph{local} to unity in the sense that $\beta = \exp(c/T)$ for some
$c<0$, this is excluded by \ref{ass:JSR}. For non-local alternatives,
some further condition (i.e.\ in addition to $\beta < 1$) on the autoregressive
system is needed to ensure stationarity: see e.g.\ \citet{stat_paper} or \citet[Sec.~3]{DMW22}.}

To construct our test statistics, we need an estimate of the error
variance $\sigma^{2}$. We use $\hat{\sigma}_{T}^{2}\defeq\frac{1}{T}\sum_{t=1}^{T}\hat{u}_{t}^{2}$,
where
\begin{equation}
\hat{u}_{t}\defeq y_{t}-\hat{\alpha}_{T}-\hat{\beta}_{T}y_{t-1}-\sum_{i=1}^{k-1}\hat{\phi}_{i,T}\Delta y_{t-i}.\label{eq:residuals}
\end{equation}
That is, $\{\hat{u}_{t}\}$ are the OLS residuals, computed as if
$y_{t}$ were not subject to censoring. Let $\mathcal{M}_{T}\defeq\sum_{t=1}^{T}\vec x_{t}\vec x_{t}^{\trans}$,
where $\vec x_{t}\defeq(1,y_{t-1},\Delta y_{t-1},\ldots,\Delta y_{t-k+1})^{\trans}$.
\begin{cor}
\label{thm:tstatARk} Suppose Assumptions~\ref{ass:INIT} and \ref{ass:DGP}
hold. If either: $k=1$ in \eqref{eq:tobitark}; or $k>1$, and \ref{ass:MOM}
and \ref{ass:JSR} hold, then $\hat{\sigma}_{T}^{2}\inprob\sigma^{2}$
and
\begin{align}
t_{\alpha,T} & \defeq\frac{\hat{\alpha}_{T}-\alpha}{\hat{\sigma}_T\sqrt{\mathcal{M}_{T}^{-1}(1,1)}}\indist\frac{\mathfrak{a}_{\theta_{\phi}}}{\sigma\sqrt{\mathcal{J}_{\theta_{\phi}}^{-1}(1,1)}} & t_{\beta,T}\defeq & \frac{\hat{\beta}_{T}-\beta}{\hat{\sigma}_T\sqrt{\mathcal{M}_{T}^{-1}(2,2)}}\indist\frac{\mathfrak{b}_{\theta_{\phi}}}{\sigma\sqrt{\mathcal{J}_{\theta_{\phi}}^{-1}(2,2)}},\label{eq:tstats}
\end{align}
where $\mathcal{M}^{-1}_{T}(i,j)$ denotes the $(i,j)$ element of $\mathcal{M}_{T}^{-1}$.
\end{cor}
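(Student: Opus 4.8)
The plan is to derive Corollary~\ref{thm:tstatARk} by feeding the OLS limit theory of Theorems~\ref{thm:OLS_LUR} and \ref{thm:olsARk} into the two remaining ingredients: the consistency $\hat\sigma_T^2\inprob\sigma^2$, and the weak convergence of the normalised design matrix $\mathcal{M}_T$. Each $t$-statistic is a ratio, so once I have established that its numerator, the relevant diagonal entry of $\mathcal{M}_T^{-1}$, and $\hat\sigma_T^2$ converge \emph{jointly}, I would read off the stated limits by the continuous mapping theorem and Slutsky's lemma. The joint convergence is automatic here, since all three objects are continuous functionals of the same weak-convergence input $T^{-1/2}y_{\smlfloor{rT}}\indist Y_{\theta_\phi}$ together with the partial sums of $\{u_t\}$; no independence is needed. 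Throughout I would treat $k=1$ and $k>1$ in parallel, using Theorem~\ref{thm:OLS_LUR} and $\mathcal{J}_\theta$ in the former (where $\phi(1)=1$ and $\theta_\phi=\theta$) and Theorem~\ref{thm:olsARk} in the latter.

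First I would establish $\hat\sigma_T^2\inprob\sigma^2$. From \eqref{eq:arkwithytminus}, the infeasible residual at the true parameters is $\tilde u_t\defeq y_t-\alpha-\beta y_{t-1}-\sum_{i=1}^{k-1}\phi_i\Delta y_{t-i}=u_t-y_t^-$, with $y_t^-$ as in \eqref{eq:ytminus}, so by \eqref{eq:residuals}
\[
\hat u_t=\tilde u_t-\Big[(\hat\alpha_T-\alpha)+(\hat\beta_T-\beta)y_{t-1}+\textstyle\sum_{i=1}^{k-1}(\hat\phi_{i,T}-\phi_i)\Delta y_{t-i}\Big]\eqdef\tilde u_t-e_t .
\]
Then $\tfrac1T\sum\hat u_t^2=\tfrac1T\sum\tilde u_t^2-\tfrac2T\sum\tilde u_t e_t+\tfrac1T\sum e_t^2$. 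Expanding $(u_t-y_t^-)^2$, the law of large numbers gives $\tfrac1T\sum u_t^2\inprob\sigma^2$, while the cross and square terms in $y_t^-$ are $o_p(1)$ once the key estimate $\tfrac1T\sum(y_t^-)^2=o_p(1)$ is in hand (using Cauchy--Schwarz for $\tfrac1T\sum u_t y_t^-$). For $\tfrac1T\sum e_t^2$ I would invoke the rates $\hat\alpha_T-\alpha=O_p(T^{-1/2})$ and $\hat\beta_T-\beta=O_p(T^{-1})$ from Theorem~\ref{thm:olsARk}, the rate $\hat\phi_{i,T}-\phi_i=O_p(T^{-1/2})$ that its proof delivers alongside consistency, and the moment bounds $\tfrac1{T^2}\sum y_{t-1}^2=O_p(1)$ and $\tfrac1T\sum(\Delta y_{t-i})^2=O_p(1)$ (the latter from \assref{JSR} when $k>1$); each contribution is then $o_p(1)$, and $\tfrac2T\sum\tilde u_t e_t=o_p(1)$ follows by Cauchy--Schwarz.

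For the $t$-statistic assembly, set $D_T\defeq\diag(T^{1/2},T,T^{1/2},\ldots,T^{1/2})$. The proofs of Theorems~\ref{thm:OLS_LUR}/\ref{thm:olsARk} already show that $D_T^{-1}\mathcal{M}_T D_T^{-1}\indist\mathcal{N}$, where $\mathcal{N}$ is block diagonal with top-left $2\times2$ block equal to $\mathcal{J}_{\theta_\phi}$ (the leading matrix in \eqref{eq:olslimdist}) and a positive-definite lower block collecting the limiting second moments of the stochastically bounded regressors $\{\Delta y_{t-i}\}$; the block-diagonal structure reflects the asymptotic orthogonality between the integrated regressor $y_{t-1}$ and these stationary-order terms. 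Applying the continuous mapping theorem to matrix inversion (licit since $\mathcal{N}$ is almost surely invertible), and using block-diagonality to identify the relevant entries with those of $\mathcal{J}_{\theta_\phi}^{-1}$, I obtain $T\mathcal{M}_T^{-1}(1,1)\indist\mathcal{J}_{\theta_\phi}^{-1}(1,1)$ and $T^2\mathcal{M}_T^{-1}(2,2)\indist\mathcal{J}_{\theta_\phi}^{-1}(2,2)$. Combining these with the numerators $T^{1/2}(\hat\alpha_T-\alpha)\indist\mathfrak{a}_{\theta_\phi}$, $T(\hat\beta_T-\beta)\indist\mathfrak{b}_{\theta_\phi}$ and with $\hat\sigma_T^2\inprob\sigma^2$, the continuous mapping theorem applied to the ratios yields
\[
t_{\beta,T}=\frac{T(\hat\beta_T-\beta)}{\hat\sigma_T\sqrt{T^2\mathcal{M}_T^{-1}(2,2)}}\indist\frac{\mathfrak{b}_{\theta_\phi}}{\sigma\sqrt{\mathcal{J}_{\theta_\phi}^{-1}(2,2)}},
\]
and the analogous statement for $t_{\alpha,T}$.

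The hard part will be the estimate $\tfrac1T\sum_{t=1}^T(y_t^-)^2=o_p(1)$, i.e.\ that censoring contributes negligibly to the residual variance. Since $y_t^-$ is nonzero only when $\alpha+\beta y_{t-1}+\sum\phi_i\Delta y_{t-i}+u_t<0$, on which event $\smlabs{y_t^-}\le\smlabs\alpha+\sum\smlabs{\phi_i}\smlabs{\Delta y_{t-i}}+\smlabs{u_t}$ (using $\beta y_{t-1}\ge0$), I would bound $(y_t^-)^2$ by a quantity with finite $(2+\delta_u)$ moments — guaranteed by \assref{MOM} (and \assref{JSR}) when $k>1$ — times $\indic\{\text{censoring binds at }t\}$. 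Because binding requires $y_{t-1}$ to lie within $O_p(1)$ of the boundary, and the number of such $t$ scales like the local time of $Y_{\theta_\phi}$ at zero, namely $O_p(T^{1/2})$ (the rate-$T^{1/2}$ accumulation of near-zero values noted before Theorem~\ref{thm:OLS_LUR}), the sum is $O_p(T^{1/2})=o_p(T)$. Making this occupation-time heuristic rigorous — controlling both the count and the size of censored observations uniformly enough to handle the dependence between the binding event and the $\Delta y_{t-i}$ and $u_t$ appearing in $y_t^-$, and to supply the $T^{1/2}$-rate for $\hat{\vec\phi}_T$ — is the crux; I expect the required bounds to be available from, or to run parallel to, the arguments already used to establish consistency of OLS in Theorems~\ref{thm:OLS_LUR}/\ref{thm:olsARk}.
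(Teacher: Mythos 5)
Your outer assembly (residual decomposition, OLS orthogonality/Cauchy--Schwarz, joint convergence of numerator, $\mathcal{M}_T^{-1}$ and $\hat\sigma_T^2$, then CMT/Slutsky) matches the paper's proof in structure. But you have correctly identified the crux --- $\sum_{t=1}^T(y_t^-)^2=o_p(T)$ --- and left it unproven, and the occupation-time route you sketch for it is a genuine gap, not a detail. The weak convergence $T^{-1/2}y_{\smlfloor{rT}}\indist Y_{\theta_\phi}$ in the uniform topology does not control the number of binding dates at rate $T^{1/2}$: that count is the occupation measure of a window of width $O(T^{-1/2})$ around zero, exactly the shrinking-window regime in which one needs genuine local-time convergence theorems (strong approximations à la Borodin--Ibragimov or Phillips-type local time asymptotics), none of which is available from, or parallel to, anything in the paper. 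Moreover, since $u_t$ is unbounded, censoring can bind even when $y_{t-1}$ is large, so you would additionally need a truncation argument splitting on $y_{t-1}\lessgtr M_T$ with $M_T\to\infty$ at a delicate rate; and with only $2+\delta_u$ moments the size of $\smlabs{y_t^-}$ on binding dates is not uniformly $O_p(1)$ --- the maximum over $\asymp T^{1/2}$ such dates grows polynomially --- so your ``count $O_p(T^{1/2})$ times size $O_p(1)$'' bound does not close even granting the count (a H\"older splitting can rescue the product, but only if the count is rigorously $O_p(T^{1/2})$, which is the hard part).

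The paper avoids all of this with an elementary two-line mechanism that exploits the \emph{sign constancy} of $y_t^-$: since every $y_t^-\leq0$, $\sum_t\smlabs{y_t^-}=-\sum_t y_t^-$, and the latter is computed by telescoping the model equation \eqref{eq:arkwithytminus} --- $\sum_{t=1}^T(u_t-y_t^-)$ reduces to endpoint terms plus $(\beta-1)\sum_t y_{t-1}$, hence is $O_p(T^{1/2})$ (Lemma~\ref{lem:AR1aux}\ref{enu:AR1aux:wklim}, Lemma~\ref{lem:ARkterms}\ref{enu:numerator}), so $\sum_t y_t^-=O_p(T^{1/2})$. Combined with $\max_t\smlabs{y_t^-}\leq\max_t\smlabs{v_t}=o_p(T^{1/2})$ (for $k\geq2$ via the uniform $(2+\delta_u)$-moment bounds on $\Delta_\rho y_t$ from Lemma~\ref{lem:qdnegl}, which is where \ref{ass:JSR} enters), this gives $\sum_t(y_t^-)^2\leq\max_t\smlabs{y_t^-}\cdot\smlabs{\sum_t y_t^-}=o_p(T)$, with no counting of censored dates whatsoever. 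Two further bookkeeping slips: Theorem~\ref{thm:olsARk} delivers only $\hat{\vec\phi}_T\inprob\vec\phi$, not the $O_p(T^{-1/2})$ rate you invoke (and the paper's bounds cannot deliver it, since $T^{-1}\sum_t\Delta y_{t-s}y_t^-$ is controlled only as $o_p(1)$); consistency alone suffices for your $T^{-1}\sum_t e_t^2$ term, so this is repairable. And for $k\geq2$ the correct intermediate limits carry $\phi(1)$ factors --- $T(\hat\beta_T-\beta)\indist\phi(1)\mathfrak{b}_{\theta_\phi}$ and $T^2\mathcal{M}_T^{-1}(2,2)\indist\phi(1)^2\mathcal{J}_{\theta_\phi}^{-1}(2,2)$ --- which cancel in $t_{\beta,T}$, so your final answer is right but both displayed intermediate claims are off by a factor of $\phi(1)$.
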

This result allows us to conduct a one-sided test of a unit root versus
a stationary alternative, which rejects when $t_{\beta,T}\leq c$,
where $c$ is drawn from an appropriate quantile of the asymptotic
distribution of $t_{\beta,T}$. Under the null of a unit root $a=c=0$, the limiting distribution of $t_{\beta,T}$ in \eqref{eq:tstats} depends (continuously) on the model parameters only through $b_0 \phi(1)/\sigma$. This follows from the fact that for $\theta_{\phi}=(0,\phi(1)b_0,0)$,
\begin{align*}
\sigma^{-1} J_{\theta_{\phi}}(r) &= \sigma^{-1}\left\{\phi(1)b_0+\sigma W(r)+\sup_{r^{\prime}\leq r}[-\phi(1)b_0-\sigma W(r^{\prime})]_{+}\right\}\\
&=\tfrac{\phi(1)b_0}{\sigma}\left\{1+\tfrac{\sigma}{\phi(1)b_0}W(r)+\sup_{r^{\prime}\leq r}\left[-1-\tfrac{\sigma}{\phi(1)b_0}W(r^{\prime})\right]_{+}\right\}
\end{align*}
and thus $\sigma\sqrt{\mathcal{J}_{\theta_{\phi}}^{-1}(2,2)}$ and $\mathfrak{b}_{\theta_{\phi}}$ (as defined in \eqref{eq:altexpression} above) do not change so long as $b_0 \phi(1)/\sigma$ remains fixed. Table~\ref{test_quantiles} tabulates the critical values corresponding to the relevant quantiles of the asymptotic distribution of $t_{\beta,T}$, as a function of $b_0 \phi(1)/\sigma \in [0,2.5]$. One can see that the smaller the ratio of the parameters, the larger the corresponding critical values, for every significance level. Further, as the final two lines of the table and the further discussion in Section \ref{sec:b0_effect} indicate, for values of $b_0 \phi(1)/\sigma$ in excess of $2.5$, the critical values coincide (to within two decimal places) with those of a conventional ADF $t$ test (for a linear autoregression).

With the aid of the tabulated critical values, a test of $H_0 : a=c=0$ may thus be carried out as follows. (For a general lower bound $\mathbf{L} \neq 0$, first subtract it from the data, as described in Section~\ref{sec:nonzero_bound}.)

\begin{urtest}~
\begin{enumerate}[label=\arabic*., ref=\arabic*]
  \item Regress $y_t$ on $(1,y_{t-1},\Delta y_{t-1},\ldots,\Delta y_{t-k+1})^{\trans}$ using OLS.
  \item Calculate the $t_{\beta,T}$ statistic \eqref{eq:tstats} with $\beta=1$.
  \item \label{enu:crit} Let $\hat{\phi}(1) \defeq 1-\sum_{i=1}^{k-1}\hat{\phi}_{i,T}$ and $\hat{b}_0 \defeq T^{-1/2}y_1 $, where $y_1$ is the first observation in the sample. Compare $t_{\beta,T}$ with the critical values in Table~\ref{test_quantiles}, for the row corresponding to the value nearest to $\hat{b}_0 \hat{\phi}(1)/\hat{\sigma}_{T}$ (or use the conventional ADF critical values if this value exceeds $2.5$).\footnote{Alternatively one could use a parametric bootstrap procedure to estimate the quantiles of the null distribution of the test statistic, following the approach of \citet[Theorem~2]{CX14JoE}, who prove the validity of a bootstrap procedure in a related setting with $k=1$. Their proof could be directly transposed to our setting to justify a procedure based on $y_1^{(r)}=\sqrt{T'}\hat{\phi}(1) y_1/\sqrt{T}$ and $y_t^{(r)}=[y_{t-1}^{(r)}+u_t^r]_+$, $u_t^r\thicksim i.i.d.~\mathcal{N}(0,\hat{\sigma}_{T}^{2}),\, t>1$, $r=1,\ldots,R$, where $T'\geq T$ is the length of the simulated series and $R$ is the number of simulated series.}
\end{enumerate}
\end{urtest}


\begin{table}[t]
\begin{tabular}{c||c|c|c}
\hline
\diagbox[width=2.5cm, height=1.3cm]{$b_0\phi(1)/\sigma$}{Size} & \centering$1\%$ & \centering$5\%$ & \centering$10\%$\tabularnewline
\hline
\hline
    $0.0$ & $-4.69$ & $-3.77$ & $-3.34$ \\
    $0.1$ & $-4.58$ & $-3.66$ & $-3.22$ \\
    $0.2$ & $-4.49$ & $-3.57$ & $-3.14$ \\
    $0.3$ & $-4.38$ & $-3.49$ & $-3.07$ \\
    $0.4$ & $-4.25$ & $-3.40$ & $-3.00$ \\
    $0.5$ & $-4.11$ & $-3.31$ & $-2.93$ \\
    $0.6$ & $-3.97$ & $-3.22$ & $-2.87$ \\
    $0.7$ & $-3.85$ & $-3.15$ & $-2.81$ \\
    $0.8$ & $-3.75$ & $-3.08$ & $-2.75$ \\
    $0.9$ & $-3.67$ & $-3.03$ & $-2.71$ \\
    $1.0$ & $-3.60$ & $-2.99$ & $-2.68$ \\
    $1.1$ & $-3.56$ & $-2.96$ & $-2.65$ \\
    $1.2$ & $-3.52$ & $-2.94$ & $-2.63$ \\
    $1.3$ & $-3.50$ & $-2.92$ & $-2.62$ \\
    $1.4$ & $-3.48$ & $-2.90$ & $-2.61$ \\
    $1.5$ & $-3.47$ & $-2.89$ & $-2.60$ \\
    $1.6$ & $-3.46$ & $-2.89$ & $-2.59$ \\
    $1.7$ & $-3.45$ & $-2.88$ & $-2.58$ \\
    $1.8$ & $-3.45$ & $-2.87$ & $-2.58$ \\
    $1.9$ & $-3.44$ & $-2.87$ & $-2.58$ \\
    $2.0$ & $-3.44$ & $-2.87$ & $-2.57$ \\
    $2.5$ & $-3.43$ & $-2.86$ & $-2.57$ \\
\hline
    ADF & $-3.43$ & $-2.86$ & $-2.57$ \\
\hline
\end{tabular}\caption{Critical values for the Tobit ADF test for different values of $b_0\phi(1)/\sigma$ (based on $10^{7}$ Monte Carlo simulations of $y_{t}=\left[y_{t-1}+u_{t}\right]_{+}$,
$y_{0}=b_{0}\sqrt{T}$, $u_{t}\thicksim\text{i.i.d.}~\mathcal{N}(0,1)$, $T=10^5$). The final line reports the critical values appropriate to  an ADF $t$ test in a linear autoregression.}
\label{test_quantiles}
\end{table}

As the simulations in the following section illustrate, our test indeed has the desirable properties outlined
above, in the sense of tending to reject both when either $\beta<1$, or when
$(\beta=1 \sep \alpha<0)$, i.e.\ it has power to reject the null
whenever $\{y_t\}$ is stationary. In the event of a rejection, the reason for that rejection -- i.e.\ whether
this is due to $\alpha<0$ or $\beta<1$ -- may be further investigated
with the aid of LAD estimates of the model parameters, which by \citet[Theorem 6]{bykh_JBES}
are consistent and asymptotically normal in the stationary region.
(By contrast, due to the relatively greater frequency of censoring,
OLS is not consistent in the stationary region, as discussed in Section~4.3
of that work.)

\subsection{Specification testing}

An implication of the arguments given in the proof of Corollary~\ref{thm:tstatARk}, which is in line with alternative Tobit representation \eqref{eq:arkwithytminus}, is that the OLS residuals $\{\hat{u}_t\}$
are consistent for $\{u_t-y_t^{-}\}$. However, because $y_t=0$ occurs relatively infrequently, $\{u_t-y_t^{-}\}$ and $\{u_t\}$
coincide sufficiently closely that e.g.\ $T^{-1}\sum_{t=1}^{T} (u_t-y_t^{-})^2 = T^{-1}\sum_{t=1}^{T} u_t^2 + o_p(1)$. In view of this,
there appears to be some justification for continuing to use standard residual-based specification tests, such as for
heteroskedasticity, serial correlation, or non-normality, when the dynamic Tobit is estimated by OLS (for an
overview, see \citealp{KL17book}, Sec.~2.6--2.7). The proper second order asymptotics for such tests is outside of the focus of the present paper and is left for future research.

\section{Simulations}

\label{sec:montecarlo}

We now illustrate how the values of the initial condition $b_{0}$
and the localising parameters $a$ and $c$ affect the distribution of
$t_{\beta}$, and compare the performance of a test based on critical
values derived from Corollary~\ref{thm:tstatARk} with one based
on the conventional ADF critical values (\citealp{DF_test}), when the
data is subject to censoring.

\subsection{Effect of $b_{0}$ and the connection to the conventional ADF test}\label{sec:b0_effect}

Figure~\ref{t_ratio_effect_b0} depicts how a change in $b_{0}$
shifts the probability density (PDF) and the cumulative distribution (CDF) of $t_{\beta}$. As $b_{0}$ moves further above
zero, the density shifts progressively to the right, as the probability
that any trajectory of $K_{\theta}$ (initialised at $b_{0}$) will
reach zero, and so be subject to censoring, correspondingly declines.
Indeed, once $b_{0}$ is sufficiently large to make this probability
negligible, the density becomes visually indistinguishable from that generated
by a linear model (the solid green line in Figure~\ref{t_ratio_effect_b0}), which is invariant to $b_{0}$. (Under the parametrisation
used in the figure, this occurs when $b_{0}=2$; in general, this
will depend on the magnitude of $\phi(1)b_{0}/\sigma$, in accordance
with Theorem~\ref{thm:ytARk}.) The rightward shift of these distributions,
as $b_0$ grows, is similarly manifest in the critical values given
in Table~\ref{test_quantiles} above.

Surprisingly, the CDFs exhibit stochastic ordering, in the sense that the distributions corresponding to higher values of $b_{0}$ first-order stochastically dominate those with lower values of $b_0$ (holding all other parameters constant). This is in line with the monotonically increasing (in $b_0\phi(1)/\sigma$) quantiles in Table \ref{test_quantiles}. In particular, the null distribution of the conventional ADF $t$ test -- i.e.\ that appropriate to a linear autoregression -- stochastically dominates the null distribution appropriate to a dynamic Tobit. Therefore, when data is generated by a dynamic Tobit with a unit root, the conventional ADF test will tend to over-reject: in the worst case, which occurs when $b_0=0$, we find that a nominal $5$ per cent test will in fact reject $20$ per cent of the time. The intuition is that the censoring causes the trajectories of $\{y_{t}\}$ to appear stationary, masking the presence of a unit root.

\begin{figure}[t]
\begin{subfigure}{.48\textwidth} \centering \includegraphics[width=1\linewidth]{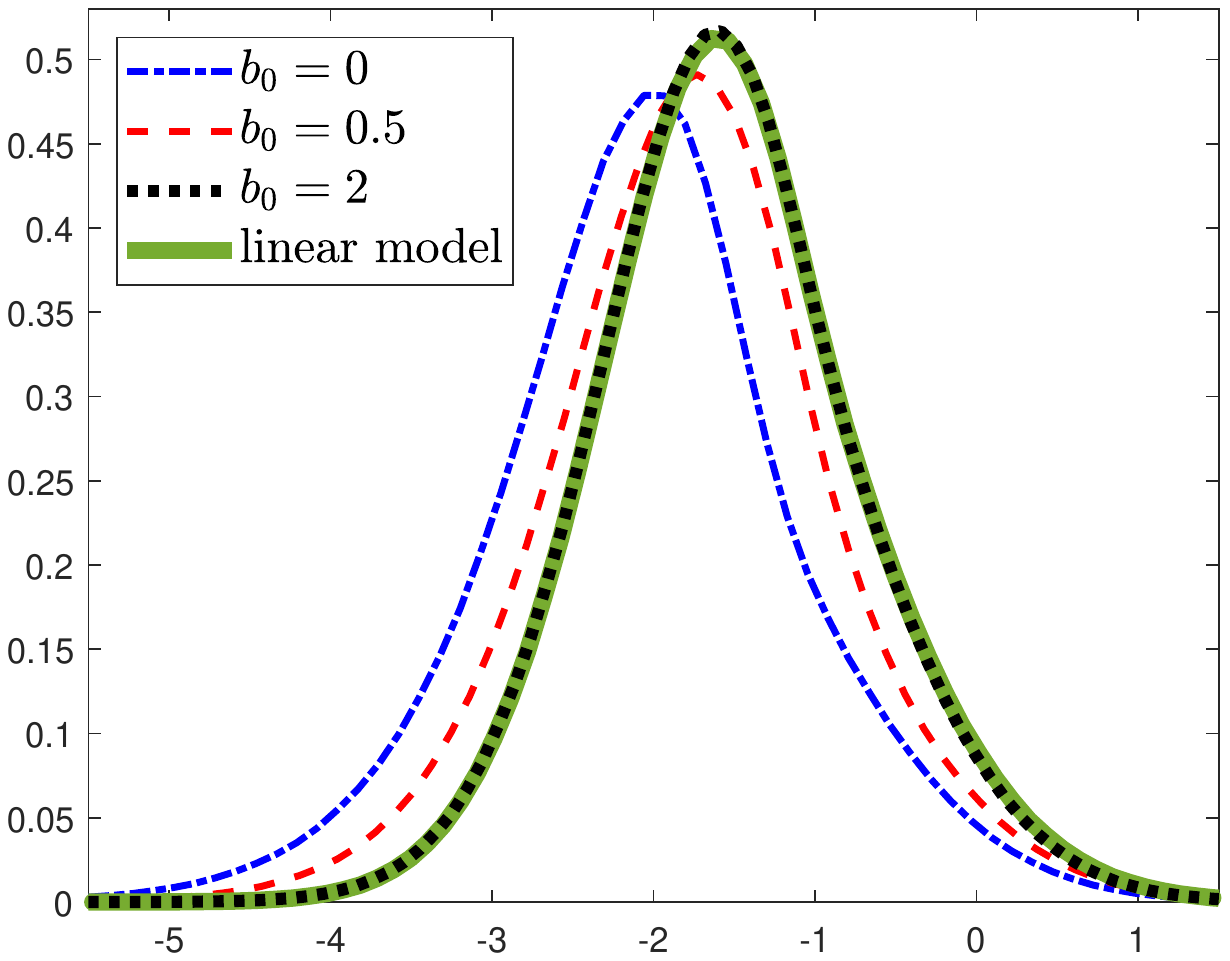}
\caption{Probability density functions.}
\label{linear_vs_tobit_pdf} \end{subfigure}
\begin{subfigure}{.48\textwidth}
\centering \includegraphics[width=1\linewidth]{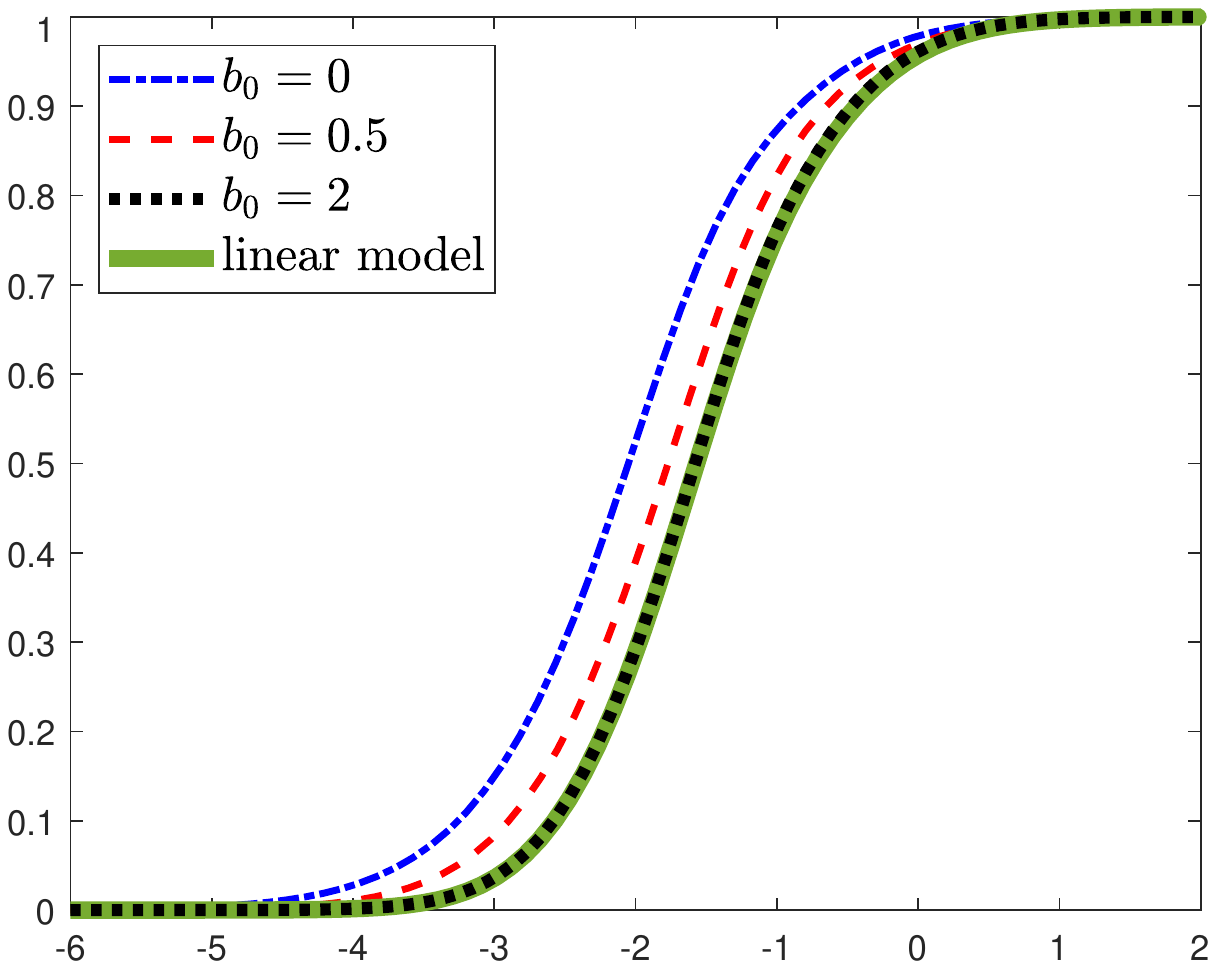}
\caption{Cumulative distribution functions.}
\label{linear_vs_tobit_cdf} \end{subfigure}
\caption{CDFs and PDFs of t-ratio $t_{\beta}$ under the Tobit
and linear models. Data generating process is $y_{t}=\left[y_{t-1}+u_{t}\right]_{+}$,
$y_{0}=b_{0}\sqrt{T}$ for Tobit model and $y_{t}^{\ell}=y_{t-1}^{\ell}+u_{t},\,y_{0}^{\ell}=0$
for a linear model, $u_{t}\thicksim\text{i.i.d.}~\mathcal{N}(0,1)$.
Data is obtained from $10^{7}$ samples of length $T=1000$.}
\label{t_ratio_effect_b0}
\end{figure}


For the remainder of this section, all simulations are conducted with
$y_{0}=b_{0}=0$.

\subsection{Effects of $a$ and $c$}

Figure~\ref{t_ratio_effect_a_c} shows how a change in local intercept
$a$ (left panel) and local slope coefficient $c$ (right panel) affects
the density of $t_{\beta}$. The means of these distributions across
a range of values for $a$ and $c$ are also reported in Table~\ref{mean_stat}.
We can see that as $a$ or $c$ fall further below zero, the distribution
of $t_{\beta}$ (both its mean and its entire probability mass) shifts
leftward -- with the opposite effect being observed when these parameters
are progressively raised above zero.

\begin{figure}[t]
\begin{subfigure}{.45\textwidth} \centering \includegraphics[width=1\linewidth]{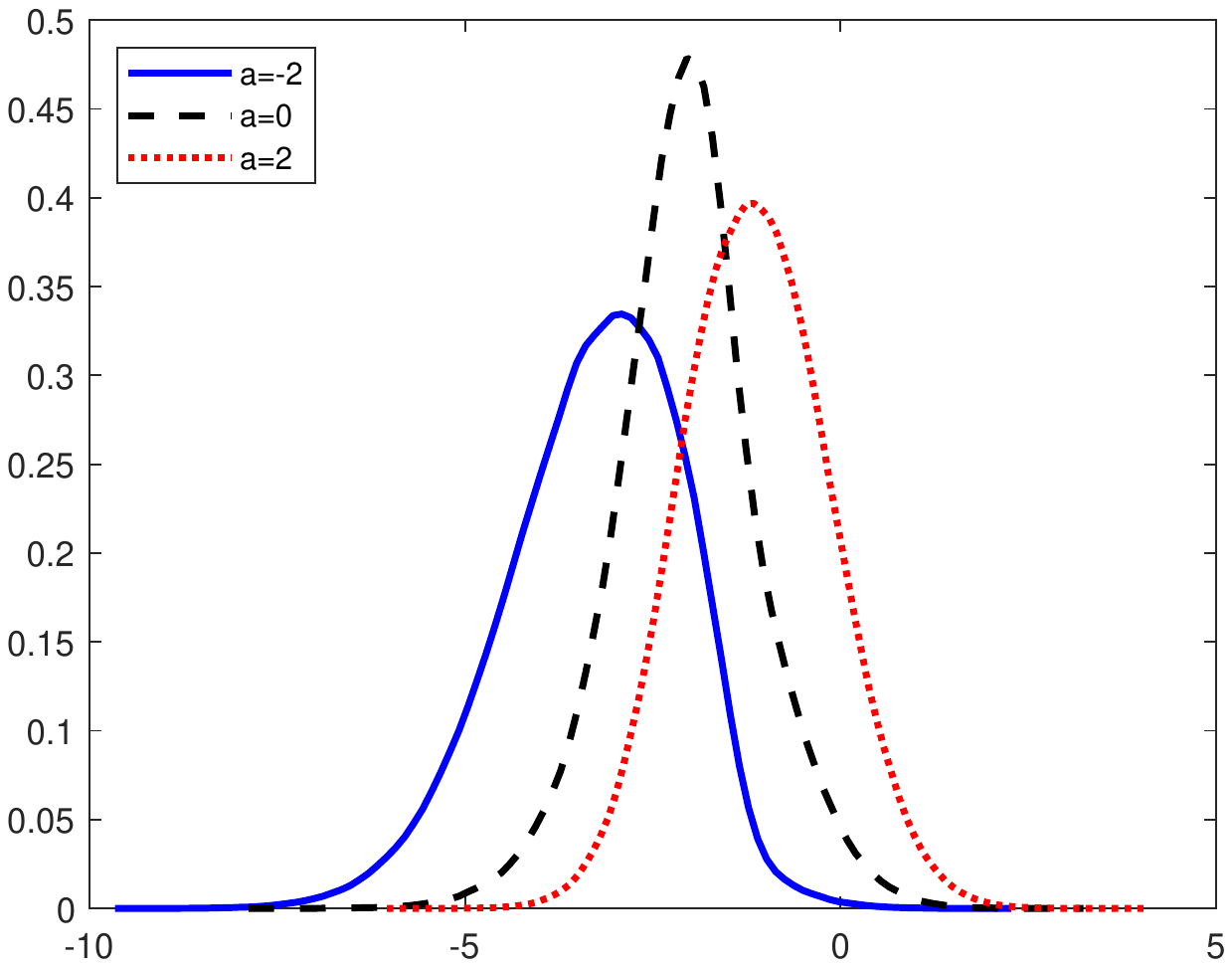}
\caption{Effect of a change in $a$ when $c=0$.}
\label{t_ratio_effect_a} \end{subfigure}\begin{subfigure}{.45\textwidth}
\centering \includegraphics[width=1\linewidth]{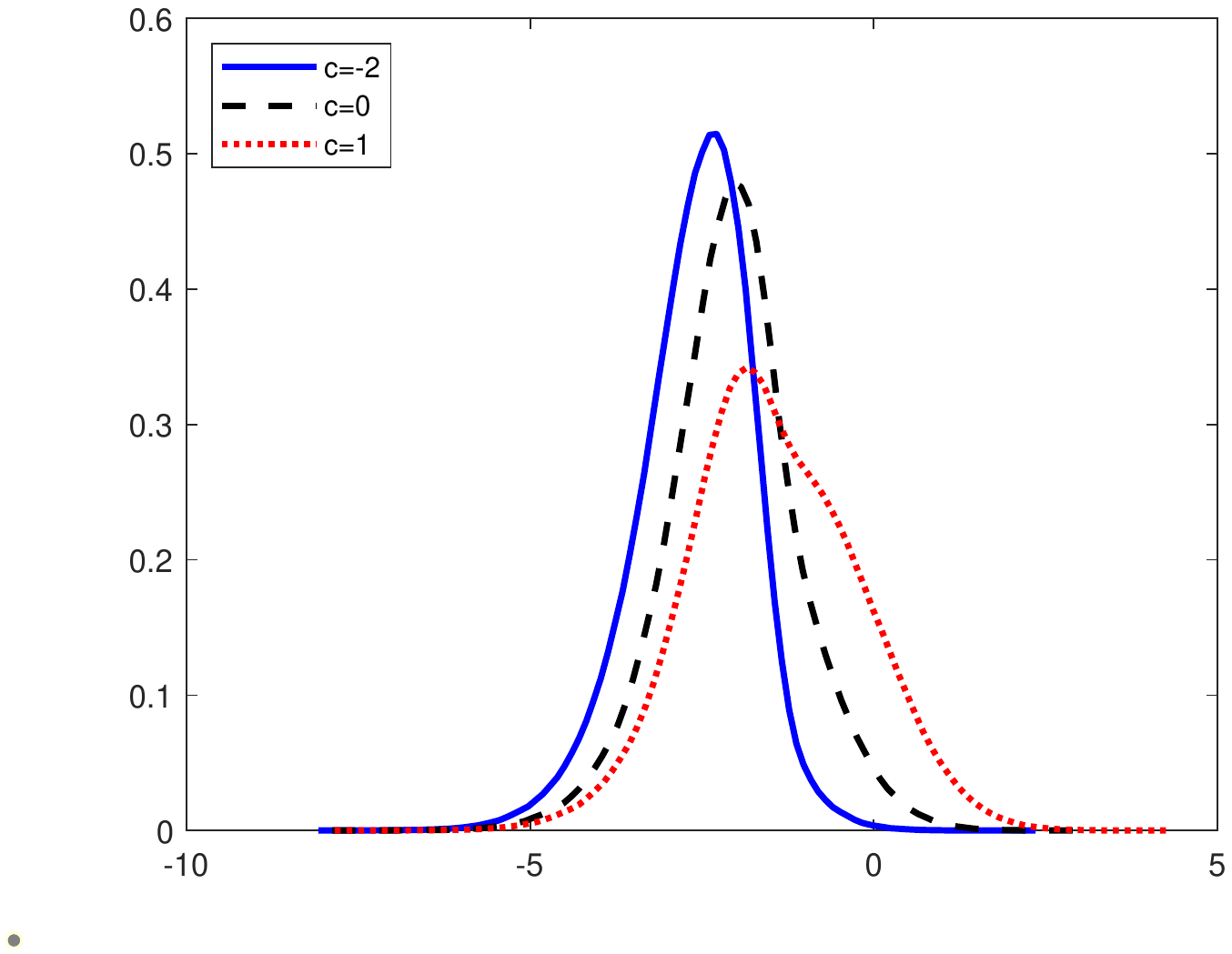}
\caption{Effect of a change in $c$ when $a=0$.}
\label{t_ratio_effect_c} \end{subfigure} \caption{Densities of t-ratio $t_{\beta}$ for various values of $a,c$. Data
generating process is $y_{t}=\left[\tfrac{a}{\sqrt{T}}+\left(1+\tfrac{c}{T}\right)y_{t-1}+u_{t}\right]_{+}$,
$y_{0}=0$, $u_{t}\thicksim\text{i.i.d.}~\mathcal{N}(0,1)$. Data
is obtained from $10^{6}$ samples of time series of length $T=1000$.}
\label{t_ratio_effect_a_c}
\end{figure}

\begin{table}[t]
\begin{tabular}{c||c|c|c|c|c|c|c}
\hline
\backslashbox{a}{c}  & -5 & -2  & -1  & 0  & 1  & 2  & 5\tabularnewline
\hline
\hline
-5  & -6.09  & -5.70  & -5.56  & -5.42  & -5.26  & -5.09  & -4.37\tabularnewline
-2  & -4.24  & -3.70  & -3.49  & -3.27  & -3.00  & -2.62  & 8.93\tabularnewline
-1  & -3.69  & -3.10  & -2.88  & -2.62  & -2.26  & -1.51  & 23.05\tabularnewline
0  & -3.20  & -2.60  & -2.37  & -2.06  & -1.46  & 0.03  & 43.23\tabularnewline
1  & -2.82  & -2.26  & -2.00  & -1.56  & -0.57  & 1.85  & 68.26\tabularnewline
2  & -2.58  & -2.06  & -1.76  & -1.13  & 0.28  & 3.68  & 96.73\tabularnewline
5  & -2.52  & -2.05  & -1.57  & -0.48  & 2.18  & 9.00  & 193.14\tabularnewline
\hline
\end{tabular}\caption{Mean of t-ratio $t_{\beta}$ for various values of $a,c$. Data
generating process is $y_{t}=\left[\tfrac{a}{\sqrt{T}}+\left(1+\tfrac{c}{T}\right)y_{t-1}+u_{t}\right]_{+}$,
$y_{0}=0$, $u_{t}\thicksim\text{i.i.d.}~\mathcal{N}(0,1)$. Data
is obtained from $10^{6}$ samples of time series of length $T=1000$.}
\label{mean_stat}
\end{table}

\subsection{Power}

\label{sec:power}

The preceding illustrates how changes in $a$ and/or $c$ may shift
the distribution of $t_{\beta}$ in either direction, and so will
affect the ability of the test to reject the null of a unit root (i.e.\ $H_{0}:\alpha=0,\beta=1$).
Power envelopes (rejection probabilities for a nominal 5 per cent,
one-sided test), are displayed in Figure~\ref{fig:power}. These
show that, on the one side, more negative values of $a$ and/or $c$
make it easier for the test to reject the null in favour of a stationary
alternative. The power eventually reaches $100$ per cent, indicating
the consistency of the test against fixed alternatives in this region.
This tendency to reject the null, as $a$ falls below zero, is in
fact a desirable property of the test in this setting, since having
$\alpha<0$ in the dynamic Tobit implies (for $k=1$) that $\{y_{t}\}$ is stationary,
even when $\beta=1$ \citep[Theorem 3]{bykh_JBES}.

On the other side,
positive values of $c$ move $\{y_{t}\}$ into the explosive region,
and our tendency to not reject in these cases is entirely consistent
with the use of a one-sided test, as it is in a linear model (Figure~\ref{power_c}).
Although we also fail to reject for sufficiently positive values of
$a$ (Figure~\ref{power_a}), in such cases an upward trend in $\{y_{t}\}$
would become discernable, and would carry the process away from the
censoring point. Since $\{y_{t}\}$ would then make few (if any) visits
to the censoring point, if one were interested in testing
the null of a unit root against a \emph{trend} stationary alternative,
in this case, a conventional ADF test with intercept and trend would be appropriate.

\begin{figure}[t]
\begin{subfigure}{.45\textwidth} \centering \includegraphics[width=1\linewidth]{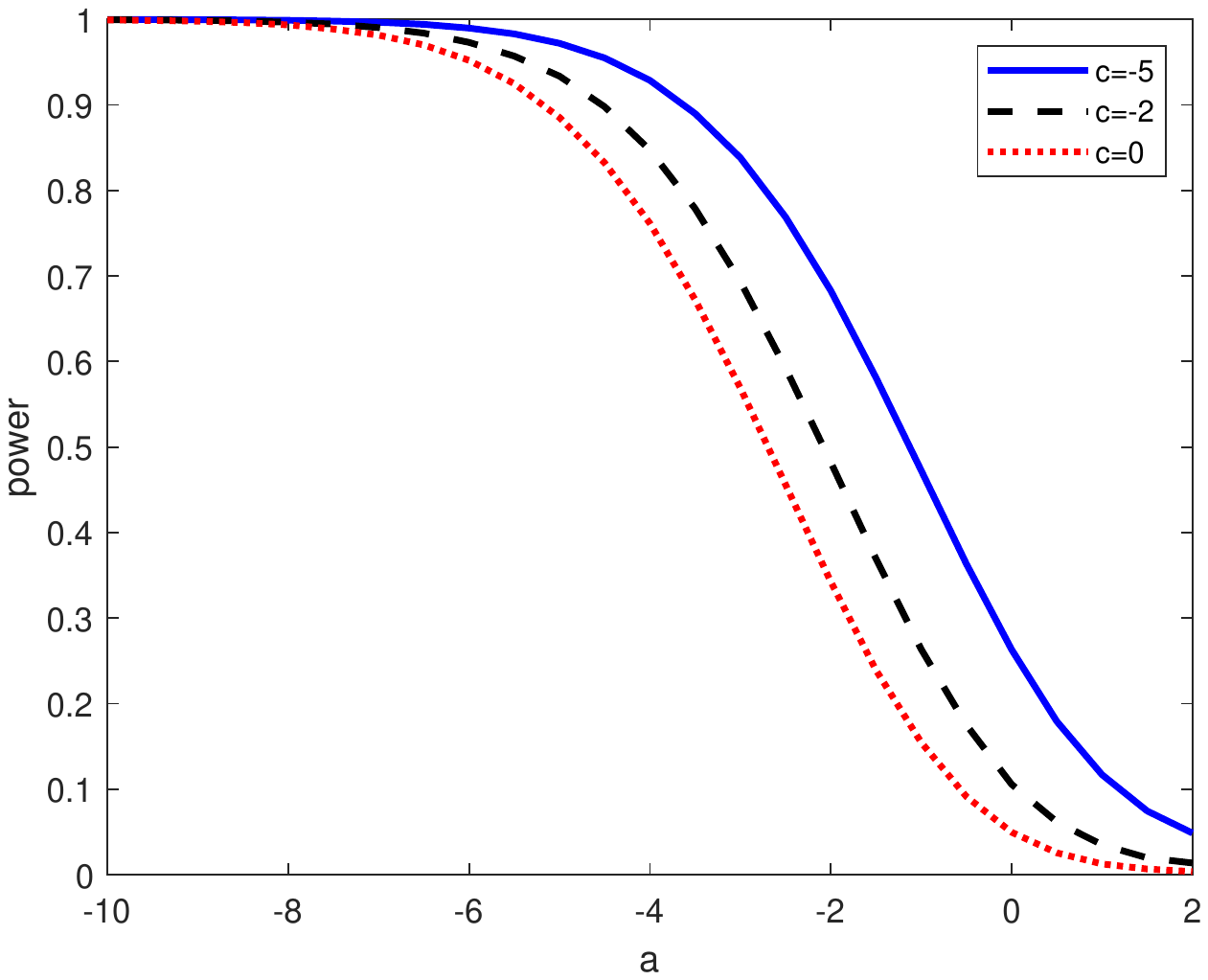}
\caption{Power envelopes with respect to $a$.}
\label{power_a} \end{subfigure}\begin{subfigure}{.45\textwidth}
\centering \includegraphics[width=1\linewidth]{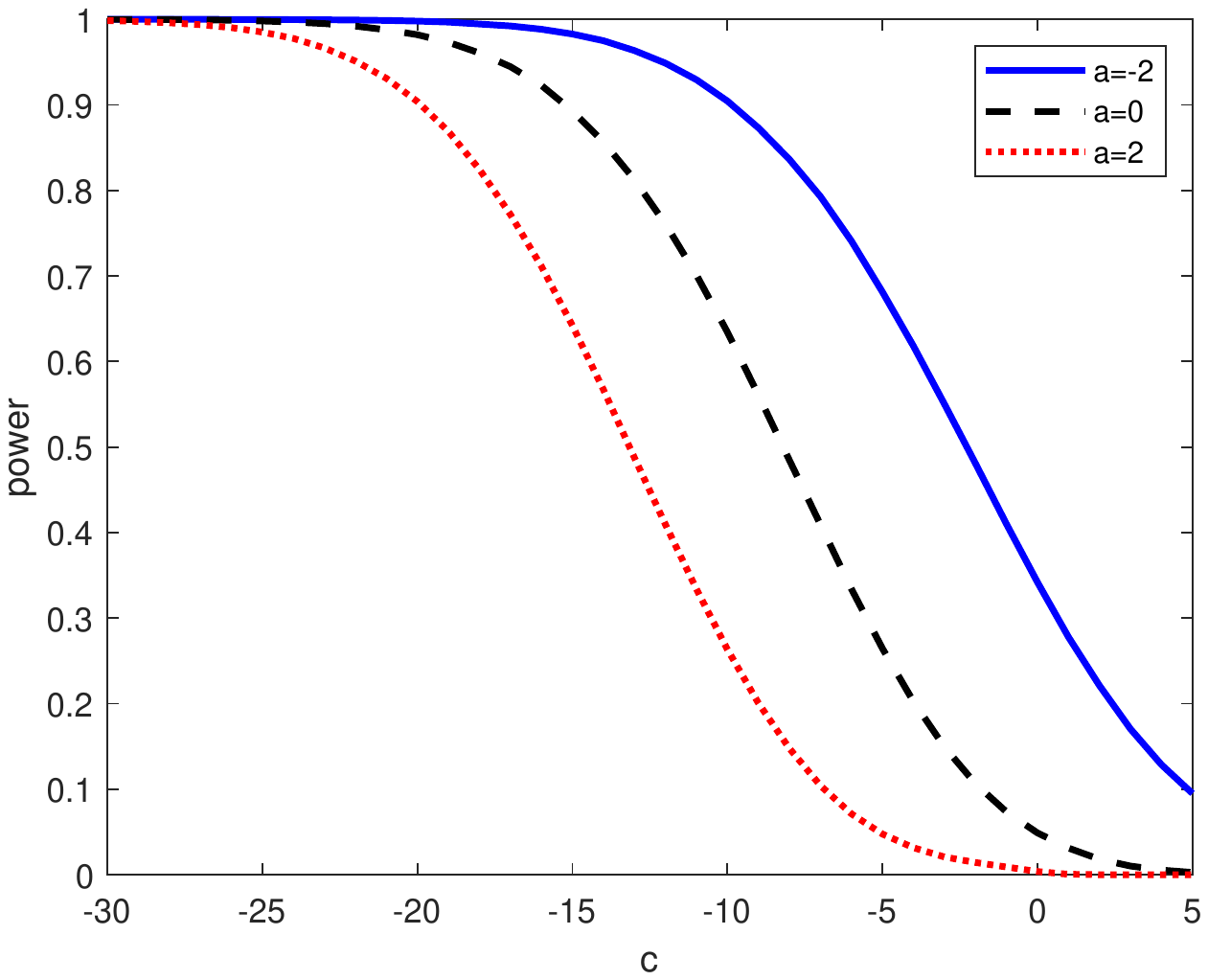} \caption{Power envelopes with respect to $c$.}
\label{power_c} \end{subfigure} \caption{Power envelopes with respect to $a$ and $c$. Data generating process
$y_{t}=\left[\tfrac{a}{\sqrt{T}}+\left(1+\tfrac{c}{T}\right)y_{t-1}+u_{t}\right]_{+}$,
$y_{0}=0$, $u_{t}\thicksim\text{i.i.d.}~\mathcal{N}(0,1)$. Data
is obtained from $10^{5}$ samples of time series of length $T=1000$.}
\label{fig:power}
\end{figure}

\section{Empirical illustration}

\label{sec:empirical}

In this section we illustrate the use of our methods through an application
to testing for unit roots in nominal exchange rates, when these are
subject to a one-sided bound. Unit roots are routinely detected in
these series by conventional tests in empirical work (see e.g.\ \citealp{BB89JFE};
\citealp{SV06JBF}, p.~3156; \citealp{HP10JBES}, p.~107). Their
presence is also manifested in the robust performance of exchange rate
forecasts based on random walks, which more elaborate models have
struggled to beat consistently (\citealp{Ros13JEL}). (For a discussion of
the theoretical basis for the presence of a unit root in exchange rates,
in the context of open economy New Keynesian models, we refer the reader
to Section~2.1 of \citealp{engel2014exchange}.) Here we examine
how censoring, as introduced by the deliberate action of a central
bank to keep exchange rates above (or below) a nominated threshold,
may alter our assessment of the evidence for or against a unit root,
depending on whether that censoring is accounted for (cf.~\citealp[Sec.~6.1]{cavaliere2005}).

\begin{figure}
\noindent \begin{centering}
\includegraphics[clip,width=0.9\textwidth,viewport=50bp 160bp 792bp 435bp]{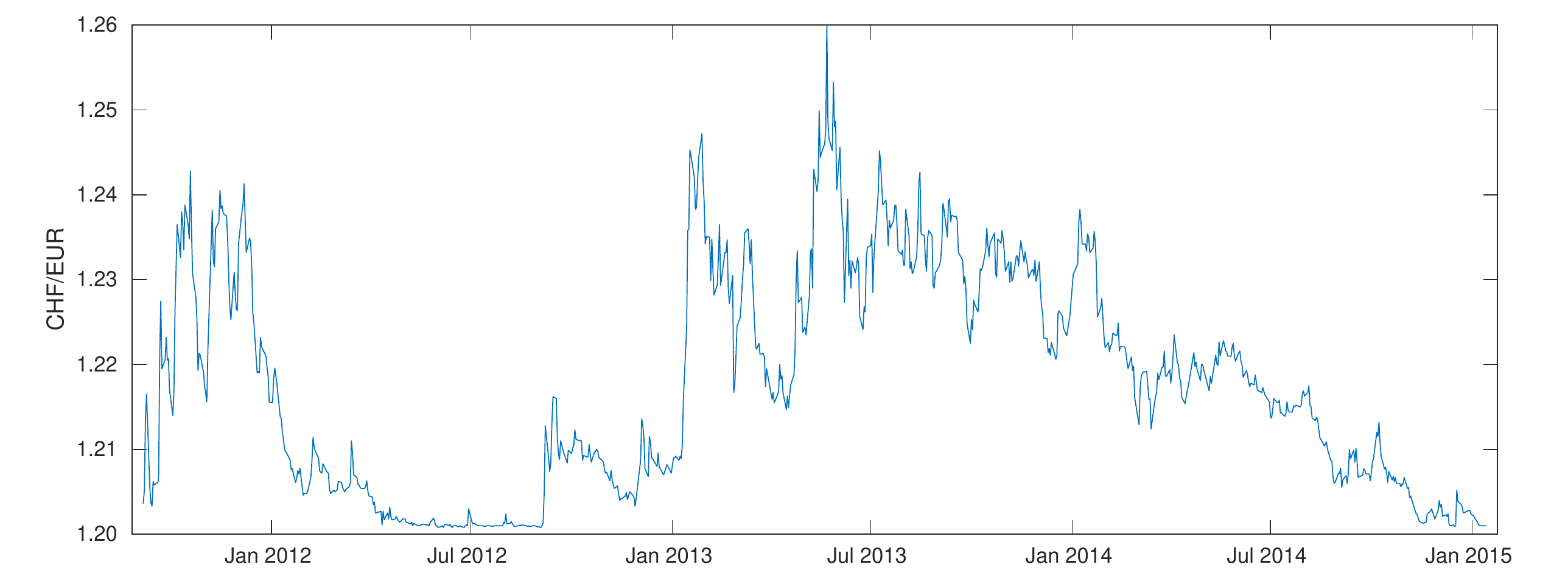}
\par\end{centering}
\caption{CHF/EUR exchange rate (Source: ECB).}
\label{fig:xrate}
\end{figure}

In September 2011, in response to the ongoing appreciation of the
Swiss franc, and with the policy rate effectively at the zero lower
bound, the Swiss National Bank (SNB) instituted a floor on the euro--Swiss
franc exchange rate of 1.20 francs per euro (\citealp{Jor16}; \citealp{Her22RIE}).
With the exchange rate well below the floor on the previous day, the
immediate intervention of the SNB was required to make the floor effective
upon its introduction on 6 September. The floor remained in effect
until the end of 15 January 2015. As can be seen from Figure~\ref{fig:xrate},
during that period, the exchange rate spent most of its time well
above the the floor (reaching a peak of 1.26 in May 2013), with two
notable exceptions: the periods of April to August 2012, and from
November 2014 until the end of the policy, both of which triggered
action by the SNB to prevent the floor from being violated (see Figure~9
in \citealp{Her22RIE}). (For a further discussion of the floor and
its aftermath, see also \citealp{vSTB21IRF}.) The observed trajectory
of the exchange rate, during these episodes, is thus more plausibly
consistent with a dynamic Tobit than it is with a linear autoregression.

In this setting, the presence of a unit root would entail that the
exchange rate tends to make extended sojourns away from the floor,
visits to which accumulate only at rate $T^{1/2}$, where $T$ denotes
the duration of the policy. By contrast, in the stationary dynamic
Tobit, visits to the floor would accumulate at rate $T$. Thus the
failure to reject the null of a unit root would signal to the central
bank that it may need to intervene relatively infrequently in the
foreign exchange market, in order to make the floor effective, something
that is in turn highly relevant to the future cost and viability of
maintaining that floor.

Our data is drawn from the European Central Bank's (ECB) daily reference
exchange rate series (code: EXR.D.CHF.EUR.SP00.A) from the period
during which the floor was operational (6 Sep 2011 to 15 Jan 2015),
transformed by taking logarithms. To select the lag order $k$ used
to compute $t_{\beta}$, we evaluated autoregressive models with $k\in\{1,\ldots,15\}$
using the Akaike and Bayesian information criteria; both selected
a model with only one lag. For this model, we obtained an ADF statistic
slightly above $-2.87$: so that if the censoring were ignored, and
this statistic referred to the conventional ADF critical values (Table~\ref{test_quantiles}),
there would be just sufficient evidence
to reject the null of a unit root at the five per cent level. On the
other hand, the unit root is not rejected at conventional significance
levels under the dynamic Tobit. By simulating the asymptotic distribution
of $t_{\beta}$, given in Corollary~\ref{thm:tstatARk}, we
compute the $p$-value appropriate to the dynamic Tobit as
either $0.2$ (with $b_{0}=0$ imposed) or $0.18$ (using $\hat{b}_{0}=T^{-1/2}(y_{0}-\mathbf{L})$
with $\mathbf{L}=\log(1.20)$); similar results also obtain when $k=2,3$.
This places $t_{\beta,T}$ much further from the critical region,
thereby lending support to the hypothesis that the data is consistent
with a dynamic Tobit model with a unit root.

\section{Conclusion}

\label{sec:concl}

This paper extends local to unity asymptotics to the setting of a dynamic Tobit
model. Censoring fundamentally changes the analysis and requires new
tools to derive the asymptotics. We obtain novel limit theorems
for convergence to regulated processes, that is, to processes constrained
to lie above a threshold. The effect of that censoring on the limiting
distribution of our test statistics varies according to
the proximity of the initialisation to the censoring point, with the
distributions associated with the linear model re-emerging
as that initialisation moves sufficiently far from the censoring
point.

Our results underpin the development of a unit root test appropriate
to censored data generated by a dynamic Tobit model.
In contrast to the setting of a linear model, here
the presence of a stochastic trend entails
restrictions on both $\alpha$ and $\beta$, since $\alpha<0$
(with $\beta=1$) can be consistent with stationarity (and is consistent with stationarity for $k=1$).
Nonetheless, a test of this null can still be effected by the usual
$t_{\beta}$ statistic, using adjusted critical values. We provide
an empirical illustration of our methods to testing for a unit root
in nominal exchange rates, when these are subject to a one-sided
bound.

The results of this paper could be developed further in a number of
directions. One possibility would be to extend our results beyond
the local to unity setting, by allowing for moderate deviations from
a unit root (\citealp{GP06JTSA}), with the aim of establishing (uniformly)
valid confidence intervals for $\beta$, as
per \citet{Mik07Ecta,mikusheva2012one} in the linear model. As we depart from the linear model, new types of asymptotics emerge,
which may involve $c$ in different ways (cf.~\citet{bykh_boundary} where $c$ is no longer a constant but varies with time).

The analysis of $c\to\pm\infty$ for the censored model may be undertaken in conjunction with the derivation of the asymptotics
of least absolute deviations regression or maximum likelihood
estimators of the model, which would enjoy consistency across a wider
domain than does OLS. For such extensions, the results of \prettyref{sec:yt_asymptotics},
regarding the asymptotics of $\{y_{t}\}$, are likely to be of fundamental importance.

For another direction in which our analysis could be extended, recall from Section~\ref{sec:intro}
that the dynamic Tobit provides
the kernel of more elaborate, multivariate models that allow for the
possibility of censoring and/or some other threshold-related nonlinearity.
The analysis of (exact) unit roots and cointegration, in the setting
of such a model, is developed by \citet{DMW22}, with the aid of our
results.

\appendix

\section{Proofs of results for $k=1$.}

\subsection{Limiting distribution of $T^{-1/2}y_{\protect\smlfloor{nr}}$.}
\begin{lem}
\label{lem:wkccens} Suppose that $x_{t}=[x_{t-1}+v_{t}]_{+}$, for
$t=1,\ldots,T$, and $T^{-1/2}(x_{0}+\sum_{s=1}^{\smlfloor{rT}}v_{s})\indist V(r)$
on $D[0,1]$. Then on $D[0,1]$,
\[
T^{-1/2}x_{\smlfloor{rT}}\indist V(r)+\sup_{r^{\prime}\leq r}[-V(r^{\prime})]_{+}.
\]
\end{lem}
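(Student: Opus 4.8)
The key identity I would establish first is a closed-form (Skorokhod) representation for the censored recursion $x_t = [x_{t-1}+v_t]_+$. Writing $S_t \defeq x_0 + \sum_{s=1}^{t} v_s$ for the uncensored partial sum, I claim that
\begin{equation*}
x_t = S_t + \max_{0\leq j \leq t}[-S_j]_+ = S_t + \max\Bigl\{0,\ \max_{0\leq j\leq t}(-S_j)\Bigr\}.
\end{equation*}
(Here I adopt the convention $S_0 = x_0 \geq 0$, so the running maximum includes the nonnegative term $-S_0 \leq 0$ and the outer $[\cdot]_+$ is what enforces nonnegativity from the initial step onward.) First I would verify this by induction on $t$: the recursion $x_t = [x_{t-1}+v_t]_+ = \max\{0, x_{t-1}+v_t\}$, combined with the telescoping $S_t = S_{t-1} + v_t$, reduces to the elementary identity $\max\{0, a + m\} = a + \max\{m, -a\}$ applied with $a = S_t$ and $m$ the running maximum up to $t-1$; checking that $\max\{\max_{0\le j\le t-1}(-S_j),\, -S_t\} = \max_{0\le j\le t}(-S_j)$ closes the step. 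This is the representation that matches the form of the claimed limit, where $\sup_{r'\le r}[-V(r')]_+$ plays the role of the running maximum.

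**Passing to the limit.** Once the representation is in hand, I would recast the statement as a continuous-mapping argument. Define the map $g : D[0,1] \to D[0,1]$ by $g(w)(r) \defeq w(r) + \sup_{r'\le r}[-w(r')]_+$, so that $T^{-1/2}x_{\smlfloor{rT}} = g\bigl(T^{-1/2}S_{\smlfloor{rT}}\bigr)$ exactly, by the discrete identity above (with the running-maximum over the grid agreeing with the supremum over $[0,r]$ because $S_{\smlfloor{r'T}}$ is constant on the relevant subintervals). By hypothesis $T^{-1/2}S_{\smlfloor{rT}} \indist V(r)$ on $D[0,1]$. The conclusion then follows from the continuous mapping theorem, provided $g$ is continuous (at least at $\prob$-almost every sample path of $V$) with respect to the uniform topology.

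**The main obstacle** is establishing continuity of the regulator map $g$ in the uniform topology on $D[0,1]$. The running-supremum functional $w \mapsto \sup_{r'\le r}[-w(r')]_+$ is in fact Lipschitz in the sup-norm: for $w_1, w_2 \in D[0,1]$ one has
\begin{equation*}
\sup_{r}\Bigl| \sup_{r'\le r}[-w_1(r')]_+ - \sup_{r'\le r}[-w_2(r')]_+ \Bigr| \leq \sup_{r}\bigl| w_1(r) - w_2(r)\bigr|,
\end{equation*}
since both $x \mapsto [x]_+$ and the running-supremum operation are $1$-Lipschitz and nonexpansive. Adding back the identity map $w \mapsto w$ shows $g$ is (globally) Lipschitz, hence continuous, in the uniform metric. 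I would record this Lipschitz bound explicitly, as it is the crux of the argument and sidesteps any subtlety associated with the Skorokhod $J_1$ topology. With the uniform-topology convergence $T^{-1/2}S_{\smlfloor{rT}}\indist V$ assumed, and $g$ continuous, the continuous mapping theorem delivers $T^{-1/2}x_{\smlfloor{rT}} = g(T^{-1/2}S_{\smlfloor{\cdot T}}) \indist g(V) = V(r) + \sup_{r'\le r}[-V(r')]_+$ on $D[0,1]$, as claimed.
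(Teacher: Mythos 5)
Your proposal is correct and takes essentially the same route as the paper: first establish the discrete Skorokhod-type representation $x_{t}=S_{t}+\max_{0\leq j\leq t}[-S_{j}]_{+}$ with $S_{t}=x_{0}+\sum_{s=1}^{t}v_{s}$, then note that $T^{-1/2}x_{\lfloor rT\rfloor}$ is exactly the image of the rescaled partial-sum process under the regulator map, and conclude via the continuous mapping theorem in the uniform topology. The only difference is self-containedness: the paper obtains the representation by citation and merely asserts continuity of the supremum functional, whereas you prove the representation by induction (under the convention $x_{0}\geq0$, which the paper's application satisfies) and verify the Lipschitz continuity of the regulator map explicitly.
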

\begin{proof}
By \citet[Supplementary Material, Lemma D.8]{bykh_JBES} and \citet[Lemma 1]{cav_bounded},
\[
x_{t}=[x_{t-1}+v_{t}]_{+}=x_{0}+\sum\limits _{s=1}^{t}v_{s}+\sup\limits _{t'\in\{0,\ldots,t\}}\left[-x_{0}-\sum\limits _{s=1}^{t'}v_{s}\right]_{+}.
\]
Defining $V_{T}(r)\defeq T^{-1/2}(x_{0}+\sum_{s=1}^{\smlfloor{rT}}v_{s})$,
we have that $V_{T}(r)\indist V(r)$ on $D[0,1]$ by the hypotheses
of the lemma. Since
\[
T^{-1/2}x_{\smlfloor{rT}}=V_{T}(r)+\sup_{r^{\prime}\in[0,r]}[-V_{T}(r^{\prime})]_{+}
\]
and the supremum on the r.h.s.\ is a continuous functional of $V_{T}(\cdot)$,
the result follows by the continuous mapping theorem (CMT).
\end{proof}
\begin{proof}[Proof of Theorem~\ref{thm:LUR_y}]
Multiplying \eqref{eq:dgp} by $\beta^{-t}$, and defining $x_{t}\defeq\beta^{-t}y_{t}$,
we have
\[
x_{t}=\beta^{-t}y_{t}=\beta^{-t}[\beta y_{t-1}+\alpha+u_{t}]_{+}=[\beta^{-(t-1)}y_{t-1}+\beta^{-t}(\alpha+u_{t})]_{+}=[x_{t-1}+v_{t}]_{+},
\]
where $v_{t}\defeq\beta^{-t}(\alpha+u_{t})$. Under \assref{INIT},
$T^{-1/2}x_{0}=T^{-1/2}y_{0}\goesto b_{0}$, and
so
\begin{align}
\frac{1}{T^{1/2}}\left(x_{0}+\sum_{s=1}^{\smlfloor{rT}}v_{s}\right) & =\frac{x_{0}}{T^{1/2}}+\frac{a}{T}\sum_{s=1}^{\smlfloor{rT}}e^{-cs/T}+\frac{1}{T^{1/2}}\sum_{s=1}^{\smlfloor{rT}}e^{-cs/T}u_{s}\nonumber \\
 & \indist b_{0}+a\int_{0}^{r}e^{-cs}\diff s+\sigma\int_{0}^{r}e^{-cs}\diff W(s)=K_{\theta}(r)\label{eq:Kcvg}
\end{align}
on $D[0,1]$, where $K_{\theta}$ is as defined in \eqref{eq:Kac},
$\theta=(a,b_{0},c)$, and the weak convergence follows by the martingale
central limit theorem. It follows by Lemma \ref{lem:wkccens} that
on $D[0,1]$,
\[
T^{-1/2}y_{\smlfloor{rT}}=\beta^{\smlfloor{rT}}T^{-1/2}x_{\smlfloor{rT}}=e^{c\smlfloor{rT}/T}T^{-1/2}x_{\smlfloor{rT}}\indist e^{cr}\left[K_{\theta}(r)+\sup_{r^{\prime}\leq r}[-K_{\theta}(r^{\prime})]\right].\qedhere
\]
\end{proof}

\subsection{OLS asymptotics}

\label{app:OLSar1}

Recall from \eqref{eq:ytminus} that, in the AR(1) model, $y_{t}^{-}=[\alpha+\beta y_{t-1}+u_{t}]_{-}$,
and that in this case \eqref{eq:arkwithytminus} specialises to
\begin{equation}
y_{t}=[\alpha+\beta y_{t-1}+u_{t}]_{+}=\alpha+\beta y_{t-1}+u_{t}-y_{t}^{-}.\label{eq:uncens}
\end{equation}

\begin{lem}
\label{lem:AR1aux} Suppose Assumptions \ref{ass:INIT} and \ref{ass:DGP}
hold with $k=1$. Then
\begin{enumerate}
\item \label{enu:AR1aux:wklim}$T^{-1/2}\sum_{t=1}^{T}(u_{t}-y_{t}^{-})\indist J_{\theta}(1)-b_{0}-c\int_{0}^{1}J_{\theta}(r)\diff r-a$;
\item \label{enu:AR1aux:oT}$\sum_{t=1}^{T}(y_{t}^{-})^{2}=o_{p}(T)$; and
\item \label{enu:AR1aux:diffsq}$T^{-1}\sum_{t=1}^{T}(\Delta y_{t})^{2}\inprob\sigma^{2}$.
\end{enumerate}
\end{lem}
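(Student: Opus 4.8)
The plan is to prove the three claims in order, since \ref{enu:AR1aux:wklim} is an almost immediate consequence of a telescoping identity and Theorem~\ref{thm:LUR_y}, \ref{enu:AR1aux:oT} is the analytical core, and \ref{enu:AR1aux:diffsq} then follows easily from \ref{enu:AR1aux:oT}. For \ref{enu:AR1aux:wklim}, I would begin from the representation \eqref{eq:uncens}, which rearranges to $u_t - y_t^- = \Delta y_t - \alpha - (\beta-1)y_{t-1}$. Summing over $t=1,\ldots,T$ and using $\sum_{t=1}^T\Delta y_t = y_T - y_0$ gives
\[
T^{-1/2}\sum_{t=1}^T(u_t - y_t^-) = T^{-1/2}y_T - T^{-1/2}y_0 - T^{1/2}\alpha - T(\beta-1)\cdot T^{-3/2}\sum_{t=1}^T y_{t-1}.
\]
I would then pass to the limit term by term: $T^{-1/2}y_T\indist J_\theta(1)$ by Theorem~\ref{thm:LUR_y} at $r=1$, $T^{-1/2}y_0\inprob b_0$ by Assumption~\ref{ass:INIT}, $T^{1/2}\alpha = a$, and $T(\beta-1)\to c$ while $T^{-3/2}\sum_{t=1}^T y_{t-1}\indist\int_0^1 J_\theta(r)\diff r$. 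The required joint convergence of the endpoint and the integral holds because both are continuous functionals of the single process $T^{-1/2}y_{\smlfloor{rT}}$, so the continuous mapping theorem (CMT) delivers the stated limit.

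The claim \ref{enu:AR1aux:oT} is the crux. The elementary starting point is that $y_t^-\neq0$ only when $\alpha + \beta y_{t-1} + u_t < 0$, in which case, since $\beta y_{t-1}\geq0$, one has $\smlabs{y_t^-} = -\alpha - \beta y_{t-1} - u_t \leq \smlabs{\alpha} + \smlabs{u_t}$; moreover, if $\alpha + \beta y_{t-1} > 0$ then $(y_t^-)^2\leq u_t^2\indic\{u_t < -(\alpha+\beta y_{t-1})\}$. Fixing $K>0$, I would split $\sum_{t=1}^T(y_t^-)^2 = A_T + B_T$ according to whether $y_{t-1}\leq K$ or $y_{t-1} > K$. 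For $B_T$, since $\alpha\to0$ and $\beta\to1$ make $\alpha+\beta K\geq K/2$ for large $T$, the second inequality gives $B_T\leq\sum_{t=1}^T u_t^2\indic\{u_t < -K/2\}$, so that $T^{-1}B_T\inprob\expect[u^2\indic\{u<-K/2\}]$ by the law of large numbers (LLN), which is small for large $K$.

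For $A_T$, the key ingredient is that $\{y_t\}$ spends a vanishing fraction of time near the censoring point: with $N_T(K)\defeq\#\{t\leq T : y_{t-1}\leq K\}$, I would show $T^{-1}N_T(K)\inprob0$. This is where Theorem~\ref{thm:LUR_y} does the real work: for any fixed $\eps>0$ and all $T$ large, $T^{-1}N_T(K)\leq T^{-1}\sum_{t=1}^T\indic\{T^{-1/2}y_{t-1}\leq\eps\}\indist\int_0^1\indic\{J_\theta(r)\leq\eps\}\diff r$ by the CMT (using that $\{r:J_\theta(r)=\eps\}$ has measure zero a.s.), and as $\eps\downarrow0$ this limit decreases to $\int_0^1\indic\{J_\theta(r)=0\}\diff r = 0$ a.s., since the regulated diffusion $J_\theta$ occupies the level zero for zero Lebesgue time. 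Given $T^{-1}N_T(K)\inprob0$, a truncation handles the correlation between $u_t^2$ and the censoring event: bounding $(y_t^-)^2\leq2\alpha^2 + 2u_t^2$ on $\{y_{t-1}\leq K\}$, the part with $u_t^2\indic\{u_t^2\leq L\}$ is at most $2(\alpha^2 + L)N_T(K) = o_{p}(T)$ for fixed $L$, while $T^{-1}\sum_{t=1}^T u_t^2\indic\{u_t^2>L\}\inprob\expect[u^2\indic\{u^2>L\}]$ by the LLN. Collecting the bounds, $\limsup_T T^{-1}\sum_{t=1}^T(y_t^-)^2\leq\expect[u^2\indic\{u<-K/2\}] + 2\expect[u^2\indic\{u^2>L\}]$ in probability; sending $K,L\to\infty$ (and using $\expect u^2<\infty$) gives $\sum_{t=1}^T(y_t^-)^2 = o_{p}(T)$.

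Finally, \ref{enu:AR1aux:diffsq} follows quickly. Writing $\Delta y_t = u_t + e_t$ with $e_t\defeq\alpha + (\beta-1)y_{t-1} - y_t^-$, I would expand $\sum(\Delta y_t)^2 = \sum u_t^2 + 2\sum u_t e_t + \sum e_t^2$. The LLN gives $T^{-1}\sum u_t^2\inprob\sigma^2$, and $T^{-1}\sum e_t^2\inprob0$ because $T^{-1}\sum\alpha^2 = O(T^{-1})$, $T^{-1}\sum(\beta-1)^2 y_{t-1}^2 = O_{p}(T^{-1})$ (as $(\beta-1)^2 = O(T^{-2})$ and $T^{-2}\sum y_{t-1}^2 = O_{p}(1)$), and $T^{-1}\sum(y_t^-)^2 = o_{p}(1)$ by \ref{enu:AR1aux:oT}. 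Cauchy--Schwarz then yields $\smlabs{T^{-1}\sum u_t e_t}\leq(T^{-1}\sum u_t^2)^{1/2}(T^{-1}\sum e_t^2)^{1/2} = o_{p}(1)$, whence $T^{-1}\sum(\Delta y_t)^2\inprob\sigma^2$. The main obstacle throughout is \ref{enu:AR1aux:oT}, and specifically the occupation-time estimate $T^{-1}N_T(K)\inprob0$, which quantifies the asymptotic negligibility of the censoring; I would also note that the whole argument uses only $\expect u_t^2<\infty$, consistent with \ref{ass:MOM} not being assumed here.
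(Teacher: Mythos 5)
Your proposal is correct, and parts (i) and (iii) coincide with the paper's own argument: the same telescoping identity and appeal to Theorem~\ref{thm:LUR_y} plus the CMT for (i), and the same expansion with Cauchy--Schwarz and the result of (ii) for (iii). Part (ii), however, takes a genuinely different route. The paper's proof is a three-line argument exploiting the fact that the regulator is one-signed: since $y_t^-\leq 0$, one has $\sum_{t=1}^T\smlabs{y_t^-}=-\sum_{t=1}^T y_t^-$, which by part (i) equals $\sum_{t=1}^T u_t+O_p(T^{1/2})=O_p(T^{1/2})$; combining this with $\max_{1\leq t\leq T}\smlabs{y_t^-}\leq\smlabs{\alpha}+\max_{1\leq t\leq T}\smlabs{u_t}=o_p(T^{1/2})$ (i.i.d.\ innovations with finite variance) yields $\sum_{t=1}^T(y_t^-)^2\leq\max_t\smlabs{y_t^-}\sum_t\smlabs{y_t^-}=o_p(T^{1/2})O_p(T^{1/2})=o_p(T)$. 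You instead prove the occupation-time estimate $T^{-1}N_T(K)\inprob 0$ and control the tail by truncation, using neither part (i) nor the sign structure of $y_t^-$. What each buys: the paper's argument is far shorter, but leans on the special feature that the correction term never changes sign and on having the limit in part (i) available first; yours is modular and makes rigorous the heuristic the paper states only informally before Theorem~\ref{thm:OLS_LUR} (that visits to the censoring point occupy a vanishing fraction of the sample), and it would survive in settings where the regulator is not one-signed. Two steps you assert without proof deserve a line each, though neither is a genuine gap: the CMT step needs $\{r:J_\theta(r)=\eps\}$ and $\{r:J_\theta(r)=0\}$ to be Lebesgue-null a.s., which follows from the occupation-times formula since $J_\theta$ is a continuous semimartingale whose martingale part has quadratic variation $\sigma^2 r$ (alternatively, pick $\eps$ outside the null set of atoms of the expected occupation measure); and the passage from the weak convergence of the upper bound to $T^{-1}N_T(K)\inprob 0$ should invoke the portmanteau inequality, $\limsup_T\Prob\bigl(T^{-1}N_T(K)\geq\delta\bigr)\leq\Prob\bigl(\int_0^1\indic\{J_\theta(r)\leq\eps\}\diff r\geq\delta\bigr)\goesto 0$ as $\eps\dto 0$, using $\int_0^1\indic\{J_\theta(r)\leq\eps\}\diff r\dto 0$ a.s. Both are routine, so your proof stands as a valid, more self-contained alternative to the paper's.
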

\begin{proof}
\textbf{\enuref{AR1aux:wklim}.} We first note from \eqref{eq:uncens}
that
\begin{align*}
\sum_{t=1}^{T}(u_{t}-y_{t}^{-})=\sum_{t=1}^{T}(y_{t}-\alpha-\beta y_{t-1}) & =\sum_{t=1}^{T}(y_{t}-y_{t-1})-(\beta-1)\sum_{t=1}^{T}y_{t-1}-T\alpha\\
 & =y_{T}-y_{0}-(\beta-1)\sum_{t=1}^{T}y_{t-1}-T^{1/2}a.
\end{align*}
Since $T(\beta-1)=c+o(1)$, it follows from Theorem \ref{thm:LUR_y}
and the CMT that, under \assref{INIT},
\begin{align*}
\frac{1}{T^{1/2}}\sum_{t=1}^{T}(u_{t}-y_{t}^{-}) & =\frac{1}{T^{1/2}}(y_{T}-y_{0})-\frac{c+o(1)}{T^{3/2}}\sum_{t=1}^{T}y_{t-1}-a\\
 & \indist J_{\theta}(1)-b_{0}-c\int_{0}^{1}J_{\theta}(r)\diff r-a.
\end{align*}

\textbf{\enuref{AR1aux:oT}.} Since $\beta\geq0$ and $y_{t-1}\geq0$,
\[
0\geq y_{t}^{-}=(\alpha+\beta y_{t-1}+u_{t})\indic\{\alpha+\beta y_{t-1}+u_{t}\leq0\}\geq v_{t}\indic\{v_{t}\leq-\beta y_{t-1}\},
\]
where $v_{t}\defeq\alpha+u_{t}$. Hence
\[
\max_{1\leq t\leq T}\smlabs{y_{t}^{-}}\leq\max_{1\leq t\leq T}\smlabs{v_{t}}\leq\frac{\smlabs a}{T^{1/2}}+\max_{1\leq t\leq T}\smlabs{u_{t}}=o_{p}(T^{1/2})
\]
where the final equality holds since $\{u_{t}\}$ is i.i.d.\ with
finite variance, under Assumption \assref{DGP}\enuref{DGP:ut}.
Further, by the result of part~\enuref{AR1aux:wklim},
\[
\sum_{t=1}^{T}y_{t}^{-}=\sum_{t=1}^{T}u_{t}+O_{p}(T^{1/2})=O_{p}(T^{1/2}).
\]
Hence
\[
\sum_{t=1}^{T}(y_{t}^{-})^{2}\leq\max_{1\leq t\leq T}\smlabs{y_{t}^{-}}\sum_{t=1}^{T}\smlabs{y_{t}^{-}}=-\max_{1\leq t\leq T}|y_{t}^{-}|\sum_{t=1}^{T}y_{t}^{-}=o_{p}(T^{1/2})O_{p}(T^{1/2})=o_{p}(T).
\]

\textbf{\enuref{AR1aux:diffsq}.} Since
\begin{align*}
\sum_{t=1}^{T}\alpha^{2} & =a^{2}=O(1), & \sum_{t=1}^{T}\alpha(\beta-1)y_{t-1} & =O_{p}(1), & \sum_{t=1}^{T}[(\beta-1)y_{t-1}]^{2} & =O_{p}(1)
\end{align*}
by Theorem \ref{thm:LUR_y} and the CMT;
\begin{align*}
\sum_{t=1}^{T}\alpha(u_{t}-y_{t}^{-}) & =O_{p}(1),
\end{align*}
by Lemma \ref{lem:AR1aux}\enuref{AR1aux:wklim}; and
\begin{equation}
\frac{1}{T}\sum_{t=1}^{T}(u_{t}-y_{t}^{-})^{2}=\frac{1}{T}\sum_{t=1}^{T}u_{t}^{2}-\frac{2}{T}\sum_{t=1}^{T}y_{t}^{-}u_{t}+\frac{1}{T}\sum_{t=1}^{T}(y_{t}^{-})^{2}\inprob\sigma^{2},\label{eq:sigma2}
\end{equation}
by the law of large numbers, the Cauchy--Schwarz (CS) inequality, and the result of
part~\enuref{AR1aux:oT}; and
\begin{align*}
\left|\sum_{t=1}^{T}(\beta-1)y_{t-1}(u_{t}-y_{t}^{-})\right|\leq\sqrt{\sum_{t=1}^{T}[(\beta-1)y_{t-1}]^{2}\sum_{t=1}^{T}(u_{t}-y_{t}^{-})^{2}} & =O_{p}(\sqrt{T}),
\end{align*}
by the preceding; it follows that
\begin{align*}
\frac{1}{T}\sum_{t=1}^{T}(\Delta y_{t})^{2} & =\frac{1}{T}\sum_{t=1}^{T}(\alpha+(\beta-1)y_{t-1}+u_{t}-y_{t}^{-})^{2}=\frac{1}{T}\sum_{t=1}^{T}(u_{t}-y_{t}^{-})^{2}+o_{p}(1)\inprob\sigma^{2}.\qedhere
\end{align*}
\end{proof}
\begin{proof}[Proof of Theorem \ref{thm:OLS_LUR}]
We have from \eqref{eq:uncens} that
\begin{equation}
\begin{bmatrix}\hat{\alpha}_{T}-\alpha\\
\hat{\beta}_{T}-\beta
\end{bmatrix}=\left(\sum_{t=1}^{T}\begin{bmatrix}1 & y_{t-1}\\
y_{t-1} & y_{t-1}^{2}
\end{bmatrix}\right)^{-1}\sum_{t=1}^{T}\begin{bmatrix}1\\
y_{t-1}
\end{bmatrix}(u_{t}-y_{t}^{-})\label{eq:olsdecomp}
\end{equation}
where
\begin{align}
\frac{1}{T^{1/2}}\sum_{t=1}^{T}(u_{t}-y_{t}^{-}) & \indist J_{\theta}(1)-b_{0}-c\int_{0}^{1}J_{\theta}(r)\diff r-a\label{eq:ar1num1}
\end{align}
by Lemma \ref{lem:AR1aux}\enuref{AR1aux:wklim}. To obtain the
weak limit of $\sum_{t=1}^{T}y_{t-1}(u_{t}-y_{t}^{-})$, we note that
since only one of $y_{t}$ and $y_{t}^{-}$ can be nonzero, $y_{t}y_{t}^{-}=0$,
and hence $y_{t}^{-}y_{t-1}=-y_{t}^{-}\Delta y_{t}$. Thus by the
CS inequality and Lemma \ref{lem:AR1aux}\enuref{AR1aux:oT}--\enuref{AR1aux:diffsq},
\[
\abs{\sum_{t=1}^{T}y_{t}^{-}y_{t-1}}=\abs{\sum_{t=1}^{T}y_{t}^{-}\Delta y_{t}}\leq\left[\sum_{t=1}^{T}(y_{t}^{-})^{2}\sum_{t=1}^{T}(\Delta y_{t})^{2}\right]^{1/2}=o_{p}(T).
\]
It follows by the preceding and \citet[Theorem 2.1]{ito_convergence}
that
\begin{equation}
\frac{1}{T}\sum_{t=1}^{T}y_{t-1}(u_{t}-y_{t}^{-})=\frac{1}{T}\sum_{t=1}^{T}y_{t-1}u_{t}+o_{p}(1)\indist\sigma\int_{0}^{1}J_{\theta}(r)\diff W(r).\label{eq:ar1num2}
\end{equation}
In view of \eqref{eq:olsdecomp}--\eqref{eq:ar1num2}, a final appeal
to Theorem \ref{thm:LUR_y} and the CMT yields
\begin{align*}
\begin{bmatrix}T^{1/2}(\hat{\alpha}_{T}-\alpha)\\
T(\hat{\beta}_{T}-\beta)
\end{bmatrix} & =\left(\sum_{t=1}^{T}\begin{bmatrix}T^{-1} & T^{-3/2}y_{t-1}\\
T^{-3/2}y_{t-1} & T^{-2}y_{t-1}^{2}
\end{bmatrix}\right)^{-1}\sum_{t=1}^{T}\begin{bmatrix}T^{-1/2}\\
T^{-1}y_{t-1}
\end{bmatrix}(u_{t}-y_{t}^{-})\\
 & \indist\begin{bmatrix}1 & \int J_{\theta}(r)\diff r\\
\int J_{\theta}(r)\diff r & \int J_{\theta}^{2}(r)\diff r
\end{bmatrix}^{-1}\begin{bmatrix}J_{\theta}(1)-b_{0}-c\int J_{\theta}(r)\diff r-a\\
\sigma\int J_{\theta}(r)\diff W(r)
\end{bmatrix}.\qedhere
\end{align*}
\end{proof}
\begin{proof}[Verification of \eqref{eq:altexpression}]
\label{proof_altexpression} 
By the Frisch-Waugh-Lovell theorem, and using partial summation (as in the
proof of Theorem 3.1 in \citet{Phil87Ecta}) we have
\begin{equation}
\begin{split}\label{eq:proof_{a}ltexpression}\hat{\beta}_{T}-1 & =\frac{\sum_{t=1}^{T}y_{t-1}^{\mu}y_{t}^{\mu}}{\sum_{t=1}^{T}(y_{t-1}^{\mu})^{2}}-1=\frac{\sum_{t=1}^{T}y_{t-1}^{\mu}\Delta y_{t}^{\mu}}{\sum_{t=1}^{T}(y_{t-1}^{\mu})^{2}}\\
 & =\frac{\sum_{t=1}^{T}\left((y_{t-1}^{\mu}+\Delta y_{t}^{\mu})^{2}-(y_{t-1}^{\mu})^{2}-(\Delta y_{t}^{\mu})^{2}\right)}{2\sum_{t=1}^{T}(y_{t-1}^{\mu})^{2}}=\frac{(y_{T}^{\mu})^{2}-(y_{0}^{\mu})^{2}-\sum_{t=1}^{T}(\Delta y_{t}^{\mu})^{2}}{2\sum_{t=1}^{T}(y_{t-1}^{\mu})^{2}}
\end{split}
\end{equation}
where $y_{t}^{\mu}\defeq y_{t}-\frac{1}{T}\sum_{t=1}^{T}y_{t-1}$.
\eqref{eq:altexpression} then follows (noting the centring here is
around $1$, rather than $\beta$) by Theorem \ref{thm:LUR_y}, Lemma
\ref{lem:AR1aux}\ref{enu:AR1aux:diffsq}, and the CMT.
\end{proof}
\begin{proof}[Proof of Corollary \ref{thm:tstatARk} ($k=1$)]
Once we have shown that $\hat{\sigma}_{T}^{2}\inprob\sigma^{2}$,
the limiting distributions of the $t$ statistics will follow from
Theorem~\ref{thm:LUR_y} and the CMT. Noting from \eqref{eq:uncens}
that
\[
\hat{u}_{t}=y_{t}-\hat{\alpha}_{T}-\hat{\beta}_{T}y_{t-1}=(\alpha-\hat{\alpha}_{T})+(\beta-\hat{\beta}_{T})y_{t-1}+(u_{t}-y_{t}^{-}),
\]
and that $\sum_{t=1}^{T}\hat{u}_{t}=0$ and $\sum_{t=1}^{T}y_{t-1}\hat{u}_{t}=0$,
we have
\begin{align*}
\sum_{t=1}^{T}[\hat{u}_{t}^{2}-(u_{t}-y_{t}^{-})^{2}] & =\sum_{t=1}^{T}[\hat{u}_{t}-(u_{t}-y_{t}^{-})][\hat{u}_{t}+(u_{t}-y_{t}^{-})]\\
 & =\sum_{t=1}^{T}[(\alpha-\hat{\alpha}_{T})+(\beta-\hat{\beta}_{T})y_{t-1}][\hat{u}_{t}+(u_{t}-y_{t}^{-})]\\
 & =(\alpha-\hat{\alpha}_{T})\sum_{t=1}^{T}(u_{t}-y_{t}^{-})+(\beta-\hat{\beta}_{T})\sum_{t=1}^{T}y_{t-1}(u_{t}-y_{t}^{-})\\
 & =O_{p}(T^{-1/2})O_{p}(T^{1/2})+O_{P}(T^{-1})O_{p}(T)=O_{p}(1)
\end{align*}
where the orders of $\sum_{t=1}^{T}(u_{t}-y_{t}^{-})$ and $\sum_{t=1}^{T}y_{t-1}(u_{t}-y_{t}^{-})$
follow from \eqref{eq:ar1num1} and \eqref{eq:ar1num2} above, and
the rates of convergence of $\hat{\alpha}_{T}$ and $\hat{\beta}_{T}$
from Theorem~\ref{thm:OLS_LUR}. Hence, by \eqref{eq:sigma2},
\begin{align*}
\hat{\sigma}_{T}^{2}=\frac{1}{T}\sum_{t=1}^{T}\hat{u}_{t}^{2} & =\frac{1}{T}\sum_{t=1}^{T}(u_{t}-y_{t}^{-})^{2}+O_{p}(T^{-1})\inprob\sigma^{2}.\qedhere
\end{align*}
\end{proof}

\section{Proofs of results for general $k$}

\label{app:proofgeneral}

\subsection{Limiting distribution of $T^{-1/2}y_{\protect\smlfloor{nr}}$: AR($k$)
case}

Let $\rho$ denote the \emph{inverse} of the root of $B(z)$ closest
to real unity, which for $T$ sufficiently large must be real because \ref{ass:DGP}
permits $B(z)$ to have only one root local to unity. Thus $B(z)$ factorises as
\begin{equation}
B(z)=(1-\beta)z+\phi(z)(1-z)=\psi(z)(1-\rho z)\label{eq:psidef}
\end{equation}
for $z\in\complex$ where $\psi(z)=1-\sum_{i=1}^{k-1}\psi_{i}z^{i}$.
Under \ref{ass:DGP}\enuref{DGP:LU}, $\beta=\beta_{T}\goesto1$
and, thus, $\rho=\rho_{T}\goesto1$, from which it follows that $\psi_{i} \goesto \phi_{i}$
for $i\in\{1,\ldots,k-1\}$, as $T\goesto \infty$.\footnote{Formally, one should write $\psi_{i,T}$, since the coefficients
of the polynomial $\psi(z)$ depend on $\beta$, which in turn varies
with $T$. We omit this dependence on $T$ for ease of notation.} Thus for $T$ sufficiently large, $\rho$ is real and positive (as
we shall maintain throughout the following), and such a condition
as \assref{JSR} also holds when each $\phi_{i}$ in \eqref{eq:Fdef}
is replaced by $\psi_{i}$. Moreover, taking $z=\rho_{T}^{-1}$ in the
preceding, it follows that
\begin{equation}
\begin{split}\label{eq:rholim}0 & =B(\rho_{T}^{-1})=(1-\beta_{T})\rho_{T}^{-1}+\phi(\rho_{T}^{-1})(1-\rho_{T}^{-1})\\
 & \Leftrightarrow T(\rho_{T}-1)=\phi(\rho_{T}^{-1})^{-1}T(\beta_{T}-1)\goesto\phi(1)^{-1}c=:c_{\phi}.
\end{split}
\end{equation}
The factorisation \eqref{eq:psidef} also permits us to rewrite the
model \eqref{eq:ARklagpoly} for $\{y_{t}\}$ in terms of the quasi-differences
$\Delta_{\rho}y_{t}$,
\begin{equation}
\psi(L)\Delta_{\rho}y_{t}=\alpha+u_{t}-y_{t}^{-},\label{eq:psiDy}
\end{equation}
where $\Delta_{\rho}\defeq1-\rho L$. With the aid of this representation
we establish the following preliminary lemmas.
\begin{lem}
\label{lem:censrep} Suppose Assumption \ref{ass:DGP} holds, and
define
\begin{align}
x_{t} & \defeq\psi(\rho^{-1})\rho^{-t}y_{t} & \xi_{t}\defeq & \sum_{s=1}^{t}\rho^{-s}(\alpha+u_{s})-\gamma_{\rho}(L)[\rho^{-t}\Delta_{\rho}y_{t}-\Delta_{\rho}y_{0}]\label{eq:x-xidef}
\end{align}
for $t\in\{1,\ldots,T\}$, where $\gamma_{\rho}(L)$ is the $k-2$
order polynomial such that $\psi(L)={\psi(\rho^{-1})+\gamma_{\rho}(L)\Delta_{\rho}}$
(with $\psi(L)=1$ and $\gamma_{\rho}(L)=0$ when $k=1$). Then for
$t\in\{1,\ldots,T\}$,
\begin{equation}
x_{t}=[x_{t-1}+\Delta\xi_{t}]_{+}.\label{eq:xkcens}
\end{equation}
\end{lem}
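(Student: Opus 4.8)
The plan is to reduce \eqref{eq:xkcens} to the first-order censored recursion already handled in Lemma~\ref{lem:wkccens}, by ``detrending'' $\{y_{t}\}$ with $\rho^{-t}$ and folding the higher-order (stationary) dynamics into $\{\xi_{t}\}$. Throughout I take $T$ large enough that $\rho>0$ and $\psi(\rho^{-1})>0$, as guaranteed by the discussion following \eqref{eq:psidef}. First I would start from the quasi-difference representation \eqref{eq:psiDy}, namely $\psi(L)d_{t}=\alpha+u_{t}-y_{t}^{-}$ with $d_{t}\defeq\Delta_{\rho}y_{t}$. Writing $z_{t}\defeq\rho^{-t}y_{t}$ (so that $x_{t}=\psi(\rho^{-1})z_{t}$) and $g_{t}\defeq\rho^{-t}d_{t}=\Delta z_{t}$, and using the intertwining identity $P(L)(\rho^{t}a_{t})=\rho^{t}P(\rho^{-1}L)a_{t}$, valid for any polynomial $P$, multiplication by $\rho^{-t}$ converts this into
\[
\psi(\rho^{-1}L)\,g_{t}=\rho^{-t}(\alpha+u_{t}-y_{t}^{-}).
\]
The purpose of this step is that the lag polynomial acting on the detrended differences is now $\psi(\rho^{-1}L)$, whose value at $L=1$ is the constant $\psi(\rho^{-1})$ that defines $x_{t}$.

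Next I would isolate that constant. Substituting $\rho^{-1}L$ for the argument in the defining relation $\psi(L)=\psi(\rho^{-1})+\gamma_{\rho}(L)\Delta_{\rho}$ gives $\psi(\rho^{-1}L)=\psi(\rho^{-1})+\gamma_{\rho}(\rho^{-1}L)(1-L)$, so that
\[
\psi(\rho^{-1})\,g_{t}=\rho^{-t}(\alpha+u_{t}-y_{t}^{-})-\gamma_{\rho}(\rho^{-1}L)\,\Delta g_{t}.
\]
Since $\psi(\rho^{-1})g_{t}=\psi(\rho^{-1})\Delta z_{t}=\Delta x_{t}=x_{t}-x_{t-1}$, and since $\sum_{s=1}^{t}\Delta g_{s}=g_{t}-g_{0}=\rho^{-t}\Delta_{\rho}y_{t}-\Delta_{\rho}y_{0}$, the first two terms on the right telescope: $\rho^{-t}(\alpha+u_{t})-\gamma_{\rho}(\rho^{-1}L)\Delta g_{t}=\Delta\xi_{t}$, which — once the lag operator inside $\gamma_{\rho}$ is understood to carry the scaling $\rho^{-1}$ induced by the detrending — is exactly the increment of the $\xi_{t}$ defined in \eqref{eq:x-xidef}. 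Rearranging then yields the key identity
\[
x_{t-1}+\Delta\xi_{t}=x_{t}+\rho^{-t}y_{t}^{-}.
\]

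Finally I would pass to \eqref{eq:xkcens} by a complementarity argument. Setting $A_{t}\defeq\alpha+\beta y_{t-1}+\sum_{i=1}^{k-1}\phi_{i}\Delta y_{t-i}+u_{t}$, we have $y_{t}=[A_{t}]_{+}$ and $y_{t}^{-}=[A_{t}]_{-}$, so at most one of $y_{t}$ and $y_{t}^{-}$ is non-zero. Consequently $x_{t}=\psi(\rho^{-1})\rho^{-t}y_{t}\geq0$ and $\rho^{-t}y_{t}^{-}\leq0$ are never simultaneously non-zero: if $x_{t}>0$ then $y_{t}^{-}=0$, and if $x_{t}=0$ then $\rho^{-t}y_{t}^{-}\leq0$. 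Applying $[\cdot]_{+}$ to the key identity therefore gives $[x_{t-1}+\Delta\xi_{t}]_{+}=[x_{t}+\rho^{-t}y_{t}^{-}]_{+}=x_{t}$, which is \eqref{eq:xkcens}.

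I expect the main obstacle to lie in the bookkeeping of the detrending step: one must track carefully how multiplication by $\rho^{-t}$ turns $\psi(L)$ into $\psi(\rho^{-1}L)$ and, correspondingly, how the remainder polynomial acquires the scaling $\gamma_{\rho}(\rho^{-1}L)$, so that the higher-order terms telescope into precisely the $\gamma_{\rho}$-term of $\xi_{t}$ while the martingale part telescopes into $\sum_{s\leq t}\rho^{-s}(\alpha+u_{s})$. For $k=1$ this is vacuous ($\gamma_{\rho}\equiv0$) and the argument collapses to the proof of Theorem~\ref{thm:LUR_y}; for $k=2$ the polynomial $\gamma_{\rho}$ is constant and the $\rho$-scaling is invisible, so the genuinely new content appears only for $k\geq3$. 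Everything downstream of the key identity is then the same complementarity already used in Lemma~\ref{lem:wkccens}.
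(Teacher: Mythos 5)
Your proposal is correct and follows essentially the same route as the paper: the same decomposition $\psi(L)=\psi(\rho^{-1})+\gamma_{\rho}(L)\Delta_{\rho}$, the same detrending by $\rho^{-t}$, the same one-step identity $x_{t}+\rho^{-t}y_{t}^{-}=x_{t-1}+\Delta\xi_{t}$, and the same complementarity argument (at most one of $y_{t},y_{t}^{-}$ nonzero, with $\psi(\rho^{-1})\rho^{-t}>0$ for large $T$) to pass to $[\cdot]_{+}$. The only difference is the order of bookkeeping --- the paper first unravels the quasi-differenced equation via $\sum_{s=1}^{t}\rho^{t-s}\Delta_{\rho}\eta_{s}=\eta_{t}-\rho^{t}\eta_{0}$ to the level identity $x_{t}+\sum_{s=1}^{t}\rho^{-s}y_{s}^{-}=x_{0}+\xi_{t}$ and then differences, whereas you difference first and telescope at the end --- and your version has the minor virtue of making explicit the scaled lag polynomial $\gamma_{\rho}(\rho^{-1}L)$ that the paper's notation in \eqref{eq:x-xidef} leaves implicit.
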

\begin{proof}
Observe that for any series $\{\eta_{s}\}$,
\begin{equation}
\sum_{s=1}^{t}\rho^{t-s}\Delta_{\rho}\eta_{s}=\sum_{s=1}^{t}\rho^{t-s}(\eta_{s}-\rho\eta_{s-1})=\sum_{s=1}^{t}\rho^{t-s}\eta_{s}-\sum_{s=1}^{t}\rho^{t-(s-1)}\eta_{s-1}=\eta_{t}-\rho^{t}\eta_{0}.\label{eq:unravel}
\end{equation}
Applying this to $\psi(L)\Delta_{\rho}y_{s}=\alpha+u_{s}-y_{s}^{-}$
(from \eqref{eq:psiDy} above), we obtain
\[
\psi(L)y_{t}-\rho^{t}\psi(L)y_{0}=\sum_{s=1}^{t}\rho^{t-s}(\alpha+u_{s})-\sum_{s=1}^{t}\rho^{t-s}y_{s}^{-}.
\]
Using $\psi(L)=\psi(\rho^{-1})+\gamma_{\rho}(L)\Delta_{\rho}$, and
rearranging yields
\[
\psi(\rho^{-1})y_{t}+\sum_{s=1}^{t}\rho^{t-s}y_{s}^{-}=\sum_{s=1}^{t}\rho^{t-s}(\alpha+u_{s})-\gamma_{\rho}(L)(\Delta_{\rho}y_{t}-\rho^{t}\Delta_{\rho}y_{0})+\rho^{t}\psi(\rho^{-1})y_{0}.
\]
Finally, multiplying by $\rho^{-t}$ and recalling the definitions
of $(x_{t},\xi_{t})$ from \eqref{eq:x-xidef}, we have
\begin{equation}
x_{t}+\sum_{s=1}^{t}\rho^{-s}y_{s}^{-}=x_{0}+\xi_{t}.\label{eq:x-xi-int}
\end{equation}

To proceed from \eqref{eq:x-xi-int} to show that $(x_{t},\xi_{t})$
satisfy \eqref{eq:xkcens}, we note that for $t\geq1$,
\begin{align*}
\xi_{t+1}=x_{t+1}-x_{0}+\sum_{s=1}^{t+1}\rho^{-s}y_{s}^{-} & =x_{t+1}-x_{0}+\rho^{-(t+1)}y_{t+1}^{-}+\sum_{s=1}^{t}\rho^{-s}y_{s}^{-}\\
 & =x_{t+1}+\rho^{-(t+1)}y_{t+1}^{-}+\xi_{t}-x_{t}
\end{align*}
where the first and last equalities follow from \eqref{eq:x-xi-int}.
Hence
\[
x_{t+1}+\rho^{-(t+1)}y_{t+1}^{-}=x_{t}+\Delta\xi_{t+1}.
\]
From \eqref{eq:tobitark} and \eqref{eq:ytminus}, at most one of
$y_{t+1}$ and $y_{t+1}^{-}$ can be nonzero, and must have opposite
signs. Since $\psi(\rho^{-1}) \goesto \phi(1) > 0$ (due to stationarity), we must have
$\psi(\rho^{-1})\rho^{-t}>0$ for all $T$ sufficiently
large. The same must also be true for $x_{t+1}=\psi(\rho^{-1})\rho^{-(t+1)}y_{t+1}$
and $\rho^{-(t+1)}y_{t+1}^{-}$. Hence,
\[
x_{t+1}=[x_{t}+\Delta\xi_{t+1}]_{+}
\]
for $t\geq1$. Plugging $\xi_{0}=0$ into \eqref{eq:x-xi-int} when $t=1$,
we have
\[
x_{1}+\rho^{-1}y_{1}^{-}=x_{0}+\xi_{1}=x_{0}+\Delta\xi_{1}
\]
and thus $x_{1}=[x_{0}+\Delta\xi_{1}]_{+}$, by the same argument.
\end{proof}
\begin{lem}
\label{lem:qdnegl} Suppose Assumptions \ref{ass:INIT}--\ref{ass:JSR} hold. Then there exists a $C<\infty$
such that
\[
\max_{-k+2\leq t\leq T}(\smlnorm{\Delta_{\rho}y_{t}}_{2+\delta_{u}}+\smlnorm{\Delta y_{t}}_{2+\delta_{u}})<C.
\]
\end{lem}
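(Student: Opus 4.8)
The plan is to exhibit the quasi-differences $\{\Delta_{\rho}y_{t}\}$ as a regime-switching vector autoregression whose one-step coefficient matrices all lie in the segment $\{F_{\delta}^{\psi}\mid\delta\in[0,1]\}$ --- where $F_{\delta}^{\psi}$ is the matrix obtained from $F_{\delta}$ in \eqref{eq:Fdef} on replacing each $\phi_{i}$ by $\psi_{i}$ --- and then to convert \assref{JSR} into a uniform geometric bound on products of these matrices. Minkowski's inequality then controls $\smlnorm{\Delta_{\rho}y_{t}}_{2+\delta_{u}}$, and the bound on $\smlnorm{\Delta y_{t}}_{2+\delta_{u}}$ follows from $\Delta y_{t}=\Delta_{\rho}y_{t}+(\rho-1)y_{t-1}$ together with a crude bound on $\smlnorm{y_{t-1}}_{2+\delta_{u}}$.

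First I would set up the representation. Put $\omega_{t}\defeq\sum_{i=1}^{k-1}\psi_{i}\Delta_{\rho}y_{t-i}+\alpha+u_{t}$, the value $\Delta_{\rho}y_{t}$ would take were the censoring constraint slack; by \eqref{eq:psiDy}, $\Delta_{\rho}y_{t}=\omega_{t}-y_{t}^{-}$. Writing $g_{t}\defeq\alpha+\beta y_{t-1}+\sum_{i=1}^{k-1}\phi_{i}\Delta y_{t-i}+u_{t}$ for the latent value, so that $y_{t}=[g_{t}]_{+}$ and $y_{t}^{-}=[g_{t}]_{-}$, the relation $y_{t}^{-}=g_{t}-y_{t}$ combined with $\Delta_{\rho}y_{t}=\omega_{t}-y_{t}^{-}$ yields the identity $\omega_{t}=g_{t}-\rho y_{t-1}$, whence $\Delta_{\rho}y_{t}=\max\{\omega_{t},-\rho y_{t-1}\}$. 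Since $\rho y_{t-1}\geq0$ for $T$ large, this equals $\delta_{t}\omega_{t}$ for some $\delta_{t}\in[0,1]$ (with $\delta_{t}=1$ when the constraint is slack and $\delta_{t}=-\rho y_{t-1}/\omega_{t}$ when it binds). Substituting $\Delta_{\rho}y_{t-1}=\delta_{t-1}\omega_{t-1}$ back into the definition of $\omega_{t}$ gives $\omega_{t}=\psi_{1}\delta_{t-1}\omega_{t-1}+\sum_{i=2}^{k-1}\psi_{i}\Delta_{\rho}y_{t-i}+\alpha+u_{t}$, so the state $s_{t}\defeq(\omega_{t},\Delta_{\rho}y_{t-1},\ldots,\Delta_{\rho}y_{t-k+2})^{\trans}$ satisfies $s_{t}=F_{\delta_{t-1}}^{\psi}s_{t-1}+\mathbf{e}_{1}(\alpha+u_{t})$, where $\mathbf{e}_{1}$ is the first coordinate vector. (This is the representation anticipated in the remark following Assumption~\ref{ass:JSR}.)

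The crucial step, which I expect to be the main obstacle, is to bound the products $\Phi_{t,m}\defeq F_{\delta_{t-1}}^{\psi}\cdots F_{\delta_{m}}^{\psi}$ uniformly in $m\leq t$, in $T$, and in the data-dependent sequence $\{\delta_{t}\}\subset[0,1]$. Since $F_{\delta}^{\psi}=(1-\delta)F_{0}^{\psi}+\delta F_{1}^{\psi}$, the segment $\{F_{\delta}^{\psi}\}$ lies in the convex hull of $\{F_{0}^{\psi},F_{1}^{\psi}\}$, so by the invariance of the joint spectral radius under passage to the convex hull it shares its JSR with $\{F_{0}^{\psi},F_{1}^{\psi}\}$. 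Because $\psi_{i}\goesto\phi_{i}$ as $T\goesto\infty$ and $\lambda_{\jsr}(\{F_{0},F_{1}\})<1$ under \assref{JSR}, continuity of the JSR yields a fixed $\lambda\in(0,1)$ with $\lambda_{\jsr}(\{F_{0}^{\psi},F_{1}^{\psi}\})<\lambda$ for all large $T$; choosing a submultiplicative norm in which $F_{0}^{\psi}$ and $F_{1}^{\psi}$ (hence every $F_{\delta}^{\psi}$) are contractions below $\lambda$, and passing to the Euclidean norm by equivalence, furnishes constants $C_{0}<\infty$ and $\lambda\in(0,1)$, independent of $T$ and of $\{\delta_{t}\}$, such that $\norm{\Phi_{t,m}}\leq C_{0}\lambda^{t-m}$. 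The delicacy here is the simultaneous uniformity in $T$ (the matrices themselves vary with $T$ through $\psi$) and in the regime sequence.

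It then remains to assemble the bound. Unravelling gives $s_{t}=\Phi_{t,0}s_{0}+\sum_{m=1}^{t}\Phi_{t,m}\mathbf{e}_{1}(\alpha+u_{m})$, so by Minkowski's inequality and the geometric bound,
\[
\smlnorm{\,\norm{s_{t}}\,}_{2+\delta_{u}}\leq C_{0}\smlnorm{\,\norm{s_{0}}\,}_{2+\delta_{u}}+\frac{C_{0}}{1-\lambda}\Bigl(\smlabs{\alpha}+\sup_{m}\smlnorm{u_{m}}_{2+\delta_{u}}\Bigr),
\]
which is finite uniformly in $t\leq T$: here $\smlnorm{u_{m}}_{2+\delta_{u}}\leq C^{1/(2+\delta_{u})}$ by \assref{MOM}, $\smlabs{\alpha}=\smlabs{a}T^{-1/2}$, and $\smlnorm{\,\norm{s_{0}}\,}_{2+\delta_{u}}$ is bounded because its entries $\Delta_{\rho}y_{i}=\Delta y_{i}+(1-\rho)y_{i-1}$ ($i\in\{-k+2,\ldots,0\}$) differ from the bounded $\Delta y_{i}$ by a term with $\smlnorm{(1-\rho)y_{i-1}}_{2+\delta_{u}}=O(T^{-1})\cdot O(T^{1/2})$, using Assumptions~\ref{ass:INIT} and \ref{ass:MOM}. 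As $\smlabs{\Delta_{\rho}y_{t}}\leq\smlabs{\omega_{t}}\leq\norm{s_{t}}$, this bounds $\max_{1\leq t\leq T}\smlnorm{\Delta_{\rho}y_{t}}_{2+\delta_{u}}$, while the indices $t\in\{-k+2,\ldots,0\}$ are covered directly by \assref{MOM}. Finally, from $y_{t}=\rho^{t}y_{0}+\sum_{s=1}^{t}\rho^{t-s}\Delta_{\rho}y_{s}$, the boundedness of $\rho^{t}$ (as $\rho^{T}\goesto\e^{c_{\phi}}$) and the estimate $\sum_{s=1}^{t}\rho^{t-s}=O(T)$ give $\smlnorm{y_{t-1}}_{2+\delta_{u}}=O(T)$ uniformly in $t$, so that $\smlnorm{\Delta y_{t}}_{2+\delta_{u}}\leq\smlnorm{\Delta_{\rho}y_{t}}_{2+\delta_{u}}+\smlabs{\rho-1}\smlnorm{y_{t-1}}_{2+\delta_{u}}=O(1)$, which completes the proof.
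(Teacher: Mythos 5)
Your proposal is correct and takes essentially the same route as the paper's proof: the same device $\Delta_{\rho}y_{t}=\delta_{t}w_{t}$, $\delta_{t}\in[0,1]$, leading to the companion-form system $\vec w_{t}=F_{\delta_{t-1}}(\vec\psi)\vec w_{t-1}+\vec v_{t}$, the same reduction of $\lambda_{\jsr}(\{F_{\delta}(\vec\psi)\mid\delta\in[0,1]\})$ to $\lambda_{\jsr}(\{F_{0},F_{1}\})$ via convex-hull invariance and continuity of the JSR, and the same Minkowski-plus-geometric-decay assembly, including the treatment of the initial quasi-differences and of $\Delta y_{t}$ through $(\rho-1)y_{t-1}$ with $1-\rho=O(T^{-1})$. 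The one step to state more carefully is the $T$-uniformity of your adapted norm: rather than choosing a contraction norm for the $T$-dependent pair $\{F_{0}(\vec\psi),F_{1}(\vec\psi)\}$, fix the extremal norm for the limiting pair $\{F_{0},F_{1}\}$ and use $\psi_{i}\goesto\phi_{i}$ to conclude that, for all $T$ sufficiently large, every $F_{\delta}(\vec\psi)$ is a contraction in that single $T$-independent norm --- which is exactly how the paper secures constants independent of both $T$ and the regime sequence.
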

\begin{proof}
We have from \eqref{eq:psiDy} that
\begin{equation}
\Delta_{\rho}y_{t}+y_{t}^{-}=\sum_{i=1}^{k-1}\psi_{i}\Delta_{\rho}y_{t-i}+\alpha+u_{t}\eqdef w_{t},\label{eq:wdef}
\end{equation}
for $t\in\{1,\ldots,T\}$. Since, as noted in the proof of Lemma \ref{lem:censrep},
only one of $y_{t}$ and $y_{t}^{-}$ can be nonzero, and have opposite
signs,
\[
\Delta_{\rho}y_{t}>0\implies y_{t}>\rho y_{t-1}\geq0\implies y_{t}^{-}=0.
\]
It follows that either $w_{t}>0$, in which case $\Delta_{\rho}y_{t}=w_{t}$;
or $w_{t}\leq0$, in which case $\Delta_{\rho}y_{t}\in[w_{t},0]$.
Hence there exists a $\delta_{t}\in[0,1]$ such that $\Delta_{\rho}y_{t}=\delta_{t}w_{t}$
for $t\in\{1,\ldots,T\}$. Taking $w_{0}=\Delta_{\rho}y_{0}$ and
$\delta_{0}=1$, \eqref{eq:wdef} is equivalent to a dynamical system
defined by
\begin{align}
w_{t} & =\psi_{1}\delta_{t-1}w_{t-1}+\sum_{i=2}^{k-1}\psi_{i}\Delta_{\rho}y_{t-i}+\alpha+u_{t},\label{eq:w1}\\
\Delta_{\rho}y_{t-1} & =\delta_{t-1}w_{t-1}\label{eq:w2}
\end{align}
for $t\in\{1,\ldots,T\}$, for an appropriate sequence $\{\delta_{t}\}\subset[0,1]$.
Defining
\begin{align*}
\vec w_{t} & \defeq\begin{bmatrix}w_{t}\\
\Delta_{\rho}y_{t-1}\\
\vdots\\
\Delta_{\rho}y_{t-k+2}
\end{bmatrix} & \vec v_{t} & \defeq\begin{bmatrix}\alpha+u_{t}\\
0\\
\vdots\\
0
\end{bmatrix} & F_{\delta}(\vec{\psi}) & \defeq\begin{bmatrix}\psi_{1}\delta & \psi_{2} & \cdots & \psi_{k-2} & \psi_{k-1}\\
\delta & 0 & \cdots & 0 & 0\\
0 & 1\\
 &  & \ddots\\
 &  &  & 1 & 0
\end{bmatrix}
\end{align*}
where $\vec{\psi}\defeq(\psi_{1},\ldots,\psi_{k-1})$, we can write
the companion form of \eqref{eq:w1}--\eqref{eq:w2} as
\[
\vec w_{t}=F_{\delta_{t-1}}(\vec{\psi})\vec w_{t-1}+\vec v_{t}
\]
for $t\in\{1,\ldots,T\}$, with the initialisation $\vec w_{0}\defeq(\Delta_{\rho}y_{0},\Delta_{\rho}y_{-1},\ldots,\Delta_{\rho}y_{-k+2})^{\trans}$.

Since $\vec{\psi} \goesto \vec{\phi}=(\phi_{1},\ldots,\phi_{k-1})^{\trans}$
as $T\goesto\infty$, $F_{\delta}(\vec{\psi}) \goesto F_{\delta}(\vec{\phi})=F_{\delta}$,
where $F_{\delta}$ is as defined in \eqref{eq:Fdef}. By Proposition~1.8
in \citet{Jungers09} and the continuity of the JSR,
\[
\lambda_{\jsr}(\{F_{\delta}(\vec{\psi})\mid\delta\in[0,1]\})=\lambda_{\jsr}(\{F_{0}(\vec{\psi}),F_{1}(\vec{\psi})\})\goesto\lambda_{\jsr}(\{F_{0},F_{1}\})<1
\]
under \assref{JSR}. It follows that there exists a norm $\smlnorm{\cdot}_{\ast}$
and a $\gamma\in[0,1)$ such that (for all $T$ sufficiently large),
\[
\smlnorm{\vec w_{t}}_{\ast}\leq\smlnorm{F_{\delta_{t-1}}}_{\ast}\smlnorm{\vec w_{t-1}}_{\ast}+\smlnorm{\vec v_{t}}_{\ast}\leq\gamma\smlnorm{\vec w_{t-1}}_{\ast}+\smlnorm{\vec v_{t}}_{\ast}
\]
whence by backward substitution,
\begin{align*}
\smlnorm{\vec w_{t}}_{\ast} & \leq\sum_{s=0}^{t-1}\gamma^{s}\smlnorm{\vec v_{t-s}}_{\ast}+\gamma^{t}\smlnorm{\vec w_{0}}_{\ast}.
\end{align*}
By the equivalence of norms on finite-dimensional spaces, there exists
a $C<\infty$ such that
\[
\smlabs{\Delta_{\rho}y_{t-1}}\leq C\left[\sum_{s=0}^{t-1}\gamma^{s}(\smlabs{\alpha}+\smlabs{u_{t-s}})+\gamma^{t}\sum_{i=-k+2}^{0}\smlabs{\Delta_{\rho}y_{i}}\right].
\]
Deduce that for any $p\geq1$,
\begin{align}
\smlnorm{\Delta_{\rho}y_{t-1}}_{p} & \leq\frac{C\smlabs{\alpha}}{1-\gamma}+C\sum_{s=0}^{t-1}\gamma^{s}\smlnorm{u_{t-s}}_{p}+C\sum_{i=-k+2}^{0}\smlnorm{\Delta_{\rho}y_{i}}_{p}\nonumber \\
 & \leq\frac{C\smlabs{\alpha}}{1-\gamma}+\frac{C}{1-\gamma}\max_{1\leq s\leq t}\smlnorm{u_{s}}_{p}+C\sum_{i=-k+2}^{0}\smlnorm{\Delta_{\rho}y_{i}}_{p}.\label{eq:delrhobnd}
\end{align}
Now for each $i\in\{-k+2,\ldots,0\}$, $\Delta_{\rho}y_{i}=(1-\rho)[y_{0}-\sum_{j=i+1}^{0}\Delta y_{j}]+\rho\Delta y_{i}$,
and hence there exists a $C^{\prime}<\infty$ such that
\[
\smlnorm{\Delta_{\rho}y_{i}}_{2+\delta_{u}}\leq C^{\prime}\left[\frac{1}{T^{1/2}}\smlnorm{T^{-1/2}y_{0}}_{2+\delta_{u}}+\frac{1}{T}\sum_{j=i+1}^{0}\smlnorm{\Delta y_{j}}_{2+\delta_{u}}+\smlnorm{\Delta y_{i}}_{2+\delta_{u}}\right],
\]
where we have used that $1-\rho=O(T^{-1})$.
Taking $p=2+\delta_{u}$ in \eqref{eq:delrhobnd}, it follows under
\assref{MOM} that $\max_{-k+2\leq t\leq T}\smlnorm{\Delta_{\rho}y_{t}}_{2+\delta_{u}}$
is bounded uniformly in $T$. To obtain the corresponding result for
$\Delta y_{t}$, we use \eqref{eq:unravel} with $\eta_t=y_t$ to write
\[
\Delta y_{t}=\Delta_{\rho}y_{t}+(\rho-1)y_{t-1}=\Delta_{\rho}y_t+(\rho-1)\left[\sum_{s=1}^{t-1}\rho^{t-1-s}\Delta_{\rho}y_{s}+\rho^{t-1}y_{0}\right]
\]
whence there exists a $C^{\prime\prime}<\infty$ such that
\[
\smlnorm{\Delta y_{t}}_{2+\delta_{u}}\leq\smlnorm{\Delta_{\rho}y_{t}}_{2+\delta_{u}}+C^{\prime\prime}\left[T^{-1}\max_{1\leq s\leq t}\smlnorm{\Delta_{\rho}y_{s}}_{2+\delta_{u}}+T^{-1/2}\smlnorm{T^{-1/2}y_{0}}_{2+\delta_{u}}\right],
\]
from which, under \assref{MOM}, the result follows.
\end{proof}
\begin{proof}[Proof of Theorem \ref{thm:ytARk}]
When $k=1$, the result follows by Theorem \ref{thm:LUR_y}; we therefore
suppose $k\geq2$. We first note that by \eqref{eq:rholim} above,
$\rho^{\smlfloor{rT}}\goesto e^{c_{\phi}r}$ uniformly in $r\in[0,1]$.
Hence for $(x_{t},\xi_{t})$ as in Lemma \ref{lem:censrep},

\begin{align}
T^{-1/2}\left[x_{0}+\sum_{s=1}^{\smlfloor{rT}}\Delta\xi_{s}\right] & =T^{-1/2}x_{0}+T^{-1/2}\xi_{\smlfloor{rT}}\nonumber \\
 & =_{(1)}\psi(\rho^{-1})T^{-1/2}y_{0}+T^{-1/2}\sum_{s=1}^{\smlfloor{rT}}\rho^{-s}(\alpha+u_{s})+o_{p}(1)\indist_{(2)}K_{\theta_{\phi}}(r),\label{eq:proclim}
\end{align}
on $D[0,1]$, where $\theta_{\phi}=(a,\phi(1)b_{0},\phi(1)^{-1}c)$,
$=_{(1)}$ holds since Lemma \ref{lem:qdnegl} implies that $\max_{0\leq t\leq T}\smlabs{\Delta_{\rho}y_{t}}=o_{p}(T^{1/2})$,
and $\indist_{(2)}$ holds by the same arguments as which yielded \eqref{eq:Kcvg}
above and by recalling that $\psi(\rho^{-1})\goesto\phi(1)$. Hence by Lemma \ref{lem:censrep}, $x_{t}$ and $v_{t}\defeq\Delta\xi_{t}$
satisfy the requirements of Lemma \ref{lem:wkccens}. We thus have
\begin{align*}
T^{-1/2}y_{\smlfloor{rT}} & =\psi(\rho^{-1})^{-1}\rho^{\smlfloor{rT}}T^{-1/2}x_{\smlfloor{rT}}\\
 & \indist\phi(1)^{-1}\e^{c_{\phi}r}\left\{ K_{\theta_{\phi}}(r)+\sup_{r^{\prime}\leq r}[-K_{\theta_{\phi}}(r^{\prime})]\right\} =\phi(1)^{-1}J_{\theta_{\phi}}(r)
\end{align*}
on $D[0,1]$.
\end{proof}

\subsection{OLS asymptotics: AR($k$) case.}

Since, by the implication of Assumption \ref{ass:JSR}, all the roots of $\phi(z)=1-\sum_{i=1}^{k-1}\phi_{i}z^{i}$
lie strictly outside the unit circle, there exists a sequence $\{\varphi_{i}\}_{i=0}^{\infty}$
with $\varphi_{0}=1$ and $\sum_{0=1}^{\infty}\smlabs{\varphi_{i}}<\infty$
such that $\phi^{-1}(z)\defeq\sum_{i=0}^{\infty}\varphi_{i}z^{i}$
satisfies $\phi^{-1}(z)\phi(z)=1$ for all $\smlabs z\leq1$. Moreover,
there exists a $C<\infty$ and a $\gamma_{\phi}\in(0,1)$ such that
$\smlabs{\varphi_{i}}<C\gamma_{\phi}^{i}$ for all $i\geq0$. (See
e.g.~\citet[Section 3.3]{Brockwell91}.)
\begin{lem}
\label{lem:partialinverse}Let $\phi_{m}^{-1}(z)\defeq\sum_{i=0}^{m}\varphi_{i}z^{i}$,
where $m\geq1$. Then there exists a $C<\infty$, independent of $m$,
and $\{d_{m,i}\}_{i=1}^{k-1}$ such that
\[
\phi_{m}^{-1}(z)\phi(z)=1-z^{m}\sum_{i=1}^{k-1}d_{m,i}z^{i}\eqdef1-d_{m}(z)z^{m}
\]
for all $\smlabs z\leq1$ and $m\in\naturals$, and where $\smlabs{d_{m,i}}\leq C\gamma_{\phi}^{m}$.
\end{lem}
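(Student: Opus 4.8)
The plan is to exploit the given identity $\phi^{-1}(z)\phi(z)=1$ directly, and simply to read off the coefficients of the truncated product. Write $\phi(z)=\sum_{j=0}^{k-1}p_{j}z^{j}$, where $p_{0}=1$ and $p_{j}=-\phi_{j}$ for $1\le j\le k-1$ (so that $p_{j}=0$ for $j<0$ or $j\ge k$). Since $\phi^{-1}(z)\phi(z)=1$ for $\smlabs z\le1$, matching coefficients yields the convolution identity $\sum_{i\ge0}\varphi_{i}p_{n-i}=\delta_{n0}$ for every $n\ge0$, which is the only structural fact I need.

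First I would compute the coefficients of the polynomial $\phi_{m}^{-1}(z)\phi(z)=\sum_{n=0}^{m+k-1}c_{n}z^{n}$, where $c_{n}=\sum_{i=0}^{m}\varphi_{i}p_{n-i}$. For $n\le m$ the truncation $i\le m$ is vacuous (as $i\le n\le m$), so $c_{n}=\sum_{i\ge0}\varphi_{i}p_{n-i}=\delta_{n0}$; this accounts for the leading $1$ and shows that the coefficients of $z^{1},\ldots,z^{m}$ all vanish. For $m+1\le n\le m+k-1$, subtracting the truncated sum from the full (vanishing) convolution gives $c_{n}=-\sum_{i>m}\varphi_{i}p_{n-i}$, in which only the indices $i\in\{m+1,\ldots,n\}$ contribute, since $p_{n-i}=0$ otherwise. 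Setting $n=m+i$ and $d_{m,i}\defeq-c_{m+i}=\sum_{s=1}^{i}\varphi_{m+s}p_{i-s}$ for $i\in\{1,\ldots,k-1\}$ then yields precisely $\phi_{m}^{-1}(z)\phi(z)=1-z^{m}\sum_{i=1}^{k-1}d_{m,i}z^{i}$, as an identity of polynomials (and hence valid for all $\smlabs z\le1$, indeed all $z\in\complex$).

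It remains to bound the $d_{m,i}$ uniformly. Here I would invoke the exponential decay $\smlabs{\varphi_{i}}<C\gamma_{\phi}^{i}$ supplied earlier in the excerpt: writing $P\defeq\max_{0\le j\le k-1}\smlabs{p_{j}}$, and noting $s\ge1$ in the sum defining $d_{m,i}$, I obtain $\smlabs{d_{m,i}}\le\sum_{s=1}^{i}C\gamma_{\phi}^{m+s}P\le CP\gamma_{\phi}^{m}\sum_{s\ge1}\gamma_{\phi}^{s}=\tfrac{CP\gamma_{\phi}}{1-\gamma_{\phi}}\gamma_{\phi}^{m}$. As the right-hand side is of the form $C'\gamma_{\phi}^{m}$ with $C'$ depending only on $\phi$ (through $P$, $C$, and $\gamma_{\phi}$), and not on $m$ or $i$, this delivers the required bound $\smlabs{d_{m,i}}\le C'\gamma_{\phi}^{m}$ for all $i\in\{1,\ldots,k-1\}$ and all $m\in\naturals$.

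I anticipate no serious obstacle: the argument is essentially power-series bookkeeping once the convolution identity is in hand, and the two regimes $n\le m$ and $n>m$ separate cleanly. The only point that warrants a little care is the uniformity of the constant in $m$, which is exactly what the geometric bound on $\{\varphi_{i}\}$---rather than mere summability---provides. Summability alone would force $\sum_{i}\smlabs{d_{m,i}}\goesto0$, but would not by itself deliver the geometric rate $\gamma_{\phi}^{m}$ that the lemma asserts and that is used downstream.
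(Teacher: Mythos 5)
Your proposal is correct and follows essentially the same route as the paper's proof: both truncate the identity $\phi^{-1}(z)\phi(z)=1$, match coefficients to see that the truncated product deviates from $1$ only in the tail terms $z^{m+1},\ldots,z^{m+k-1}$ (your $d_{m,i}=\sum_{s=1}^{i}\varphi_{m+s}p_{i-s}$ is exactly the paper's $d_{m,i}=\sum_{j=0}^{i-1}\phi_{j}\varphi_{m+i-j}$ after reindexing), and then bound these via the geometric decay $\smlabs{\varphi_{i}}<C\gamma_{\phi}^{i}$. Your closing remark that summability alone would not suffice for the geometric rate is a nice observation, but otherwise there is nothing to add.
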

\begin{proof}
Since
\[
1=\phi^{-1}(z)\phi(z)=\phi(z)\left[\phi_{m}^{-1}(z)+\sum_{i=m+1}^{\infty}\varphi_{i}z^{i}\right]
\]
for $\smlabs z\leq1$, it follows that
\[
1-\phi(z)\sum_{i=m+1}^{\infty}\varphi_{i}z^{i}=\phi(z)\phi_{m}^{-1}(z)=\left(1-\sum_{i=1}^{k-1}\phi_{i}z^{i}\right)\sum_{i=0}^{m}\varphi_{i}z^{i}\eqdef1-\sum_{i=1}^{m+k-1}\vartheta_{m,i}z^{i},
\]
using that $\varphi_{0}=1$. Matching coefficients on the left and
right hand sides, we obtain $\vartheta_{m,i}=0$ for all $i\in\{1,\ldots,m\}$;
while for $i\in\{m+1,\ldots,m+k-1\}$, $\vartheta_{m,i}=\sum_{j=0}^{i-(m+1)}\phi_{j}\varphi_{i-j}$.
Taking $d_{m,i}\defeq\vartheta_{m,m+i}=\sum_{j=0}^{i-1}\phi_{j}\varphi_{m+i-j}$
for $i\in\{1,\ldots,k-1\}$, and noting that
\[
\smlabs{d_{m,i}}\leq\sum_{j=i}^{k-1}\smlabs{\phi_{j}}\smlabs{\varphi_{m+i-j}}\leq C\gamma_{\phi}^{m}\sum_{j=1}^{k-1}\smlabs{\phi_{j}}
\]
yields the result.
\end{proof}
\begin{lem}
\label{lem:ARkterms}Suppose Assumptions \ref{ass:INIT}--\ref{ass:JSR}
hold. Then for each $s\in\{0,\ldots,k-1\}$,
\begin{enumerate}
\item \label{enu:numerator} $T^{-1/2}\sum_{t=1}^{T}(u_{t}-y_{t}^{-})\indist\phi(1)[Y_{\theta_{\phi}}(1)-c_{\phi}\int Y_{\theta_{\phi}}(r)-b_{0}]-a$;
\item \label{enu:ytminsq}$\sum_{t=1}^{T}(y_{t}^{-})^{2}=o_{p}(T)$;
\item \label{enu:crossdelta}$T^{-1}\sum_{t=1}^{T}\Delta y_{t}\Delta y_{t-s}\inprob\sigma^{2}\sum_{n=0}^{\infty}\varphi_{n}\varphi_{n+s}$;
and
\item \label{enu:leveldelta}$\sum_{t=1}^{T}y_{t}\Delta y_{t-s}=O_{p}(T)$.
\end{enumerate}
\end{lem}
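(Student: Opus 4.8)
The plan is to establish the four claims in sequence, each mirroring and extending the corresponding step of the AR(1) argument in Lemma~\ref{lem:AR1aux}, and relying on the ADF factorisation of $B(z)$, the quasi-difference bounds of Lemma~\ref{lem:qdnegl}, and the weak limit of Theorem~\ref{thm:ytARk}. For \enuref{numerator}, I would start from \eqref{eq:ARklagpoly} and \eqref{eq:lag_operators} to write $u_t - y_t^{-} = B(L)y_t - \alpha = (1-\beta)y_{t-1} + \phi(L)\Delta y_t - \alpha$. Summing over $t$ and telescoping the differences gives
\[
\sum_{t=1}^T (u_t - y_t^{-}) = (1-\beta)\sum_{t=1}^T y_{t-1} + (y_T - y_0) - \sum_{i=1}^{k-1}\phi_i(y_{T-i} - y_{-i}) - T\alpha.
\]
Scaling by $T^{-1/2}$: since $T(\beta - 1)\to c$ and $T^{-3/2}\sum_{t=1}^T y_{t-1}\indist\int_0^1 Y_{\theta_\phi}(r)\diff r$, the first term converges to $-c\int_0^1 Y_{\theta_\phi}$; since $T^{-1/2}y_{T-i}\indist Y_{\theta_\phi}(1)$ (using continuity of the limit at $r=1$ and $(T-i)/T\to1$), $T^{-1/2}y_{-i}\inprob b_0$, and $1 - \sum_{i=1}^{k-1}\phi_i = \phi(1)$, the telescoped block converges to $\phi(1)(Y_{\theta_\phi}(1) - b_0)$; and $T^{1/2}\alpha\to a$. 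Joint convergence is secured by applying the CMT to the single weak limit of Theorem~\ref{thm:ytARk}, and substituting $c = \phi(1)c_\phi$ delivers the stated expression. Claim \enuref{ytminsq} then follows exactly as Lemma~\ref{lem:AR1aux}\enuref{AR1aux:oT}: from \eqref{eq:wdef}, $\smlabs{y_t^{-}}\le\smlabs{w_t}$ (as $w_t>0$ forces $y_t^{-}=0$, while $w_t\le0$ gives $y_t^{-}=(1-\delta_t)w_t$), so Lemma~\ref{lem:qdnegl} and a maximal inequality yield $\max_{t\le T}\smlabs{y_t^{-}} = o_p(T^{1/2})$, while $\sum_{t=1}^T y_t^{-} = \sum u_t - \sum(u_t - y_t^{-}) = O_p(T^{1/2})$ by the CLT and \enuref{numerator}; hence $\sum_{t=1}^T (y_t^{-})^2\le\max_t\smlabs{y_t^{-}}\sum_t\smlabs{y_t^{-}} = o_p(T)$.

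Claim \enuref{crossdelta} is the crux. The plan is to show that $\Delta y_t$ agrees, to second order, with the stationary linear process $\zeta_t\defeq\phi(L)^{-1}u_t = \sum_{n\ge0}\varphi_n u_{t-n}$, whose lag-$s$ autocovariance is precisely $\sigma^2\sum_{n\ge0}\varphi_n\varphi_{n+s}$. From \eqref{eq:arkwithytminus} one has $\phi(L)\Delta y_t = u_t + g_t$, where $g_t\defeq\alpha + (\beta-1)y_{t-1} - y_t^{-}$. Applying the partial inverse of Lemma~\ref{lem:partialinverse} gives, for each $m$,
\[
\Delta y_t = \sum_{n=0}^m \varphi_n u_{t-n} + \sum_{n=0}^m \varphi_n g_{t-n} + \sum_{i=1}^{k-1} d_{m,i}\Delta y_{t-m-i},
\]
with $\smlabs{d_{m,i}}\le C\gamma_\phi^m$. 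The small-term block is negligible: $\sum_{t=1}^T g_t^2 = o_p(T)$, because $\sum\alpha^2 = O(1)$, $(\beta-1)^2\sum y_{t-1}^2 = O(T^{-2})O_p(T^2)$ by Theorem~\ref{thm:ytARk}, and $\sum(y_t^{-})^2 = o_p(T)$ by \enuref{ytminsq}; Young's inequality for convolutions (using $\sum_n\smlabs{\varphi_n}<\infty$) then gives $T^{-1}\sum_t(\sum_{n\le m}\varphi_n g_{t-n})^2 = o_p(1)$. The remainder block contributes $O_p(\gamma_\phi^{2m})$ to $T^{-1}\sum_t(\cdot)^2$ by the uniform moment bound of Lemma~\ref{lem:qdnegl}. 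A double-limit argument then completes the proof: for fixed $m$, a law of large numbers for the truncated linear process gives $T^{-1}\sum_t(\sum_{n\le m}\varphi_n u_{t-n})(\sum_{n\le m}\varphi_n u_{t-s-n})\inprob\sigma^2\sum_{n=0}^{m-s}\varphi_n\varphi_{n+s}$, which tends to $\sigma^2\sum_{n\ge0}\varphi_n\varphi_{n+s}$ as $m\to\infty$; Cauchy--Schwarz combines the three blocks, and letting $T\to\infty$ then $m\to\infty$ yields the claim. The main obstacle is precisely this step: controlling the filtered small terms $\sum_{n\le m}\varphi_n g_{t-n}$ uniformly and justifying the interchange of the $T$- and $m$-limits, for which the geometric decay of $\{\varphi_n\}$ and $\{d_{m,i}\}$ and the moment bounds of Lemma~\ref{lem:qdnegl} are essential.

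Finally, for \enuref{leveldelta} I would avoid the crude Cauchy--Schwarz bound (which only delivers $O_p(T^{3/2})$) and instead exploit the algebraic identity $y_t\Delta y_t = \tfrac12[y_t^2 - y_{t-1}^2 + (\Delta y_t)^2]$. Summing telescopes to $\sum_{t=1}^T y_t\Delta y_t = \tfrac12[y_T^2 - y_0^2 + \sum_{t=1}^T(\Delta y_t)^2] = O_p(T)$, since $y_T^2, y_0^2 = O_p(T)$ by Theorem~\ref{thm:ytARk} and $\sum(\Delta y_t)^2 = O_p(T)$ by \enuref{crossdelta} with $s=0$. For general $s\ge1$, write $y_t = y_{t-s} + \sum_{j=0}^{s-1}\Delta y_{t-j}$, so that $\sum_{t=1}^T y_t\Delta y_{t-s} = \sum_{t=1}^T y_{t-s}\Delta y_{t-s} + \sum_{j=0}^{s-1}\sum_{t=1}^T\Delta y_{t-j}\Delta y_{t-s}$; the first sum is $O_p(T)$ by the $s=0$ identity (after an index shift), and each of the finitely many cross-product sums is $O_p(T)$ by \enuref{crossdelta}. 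This yields $\sum_{t=1}^T y_t\Delta y_{t-s} = O_p(T)$, as required.
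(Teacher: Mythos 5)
Your proposal is correct, and for parts \enuref{numerator}, \enuref{ytminsq} and \enuref{leveldelta} it is essentially the paper's own argument: the same telescoping of $\sum_{t}\phi(L)\Delta y_{t}$ after summing \eqref{eq:deltayform}, the same bound $\smlabs{y_{t}^{-}}\le\smlabs{w_{t}}$ combined with part \enuref{numerator} to get $\sum_{t}(y_{t}^{-})^{2}=o_{p}(T)$, and the same polarisation identity for \enuref{leveldelta} (your variant, which telescopes $\sum_{t}y_{t}\Delta y_{t}$ directly and then shifts indices, if anything matches the stated indexing $\sum_{t}y_{t}\Delta y_{t-s}$ more cleanly than the paper's, which works with $\sum_{t}y_{t-s}\Delta y_{t}$). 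The one genuine difference is in the crux part \enuref{crossdelta}. Both proofs rest on Lemma~\ref{lem:partialinverse} and the same three-way decomposition of $\Delta y_{t}$ into a truncated linear filter of $\{u_{t}\}$, a filter of the small terms $\alpha+(\beta-1)y_{t-1}-y_{t}^{-}$, and a geometrically small remainder; but you keep the truncation order $m$ fixed and close with a double limit ($T\to\infty$, then $m\to\infty$), using an elementary LLN for the $(m+s)$-dependent truncated process together with the uniform-in-$T$ bound $O_{p}(\gamma_{\phi}^{2m})$ on the remainder block, whose in-sample terms $\Delta y_{t-m-i}$ are controlled by Lemma~\ref{lem:qdnegl}. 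The paper instead takes the truncation order to be $t-1$, so that the remainder $d_{t-1}(L)\Delta y_{1}$ loads only on the $k-1$ initial differences with geometrically decaying weights (giving $\sum_{t}r_{3,t}^{2}=O_{p}(1)$ outright), and then applies the Phillips--Solo LLN to the full infinite-order process $\eta_{t}=\sum_{i\ge0}\varphi_{i}u_{t-i}$, so that no interchange of limits is ever required. What your route buys is elementarity (no appeal to an LLN for infinite-order linear processes); what it costs is exactly the bookkeeping you flag yourself: the interchange of the $T$- and $m$-limits, and the finitely many dates $t\le m$ at which your fixed-$m$ representation is unavailable and which must be absorbed using the uniform moment bounds of Lemma~\ref{lem:qdnegl}. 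The paper's $t$-dependent truncation makes the decomposition valid at every $t\ge1$ and allows a single passage to the limit. Both implementations are sound.
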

\begin{proof}
\textbf{\enuref{numerator}.} Applying the factorisation \eqref{eq:psidef}
to \eqref{eq:ARklagpoly}, we get
\begin{equation}
u_{t}-y_{t}^{-}=-\alpha-(\beta-1)y_{t-1}+\phi(L)\Delta y_{t}\label{eq:deltayform}
\end{equation}
and hence, recalling that $\phi(1)=1-\sum_{i=1}^{k-1}\phi_{i}$,
$\alpha=T^{1/2}a$, and $T(\beta-1)\goesto c$ under \ref{ass:DGP},
\begin{align*}
\frac{1}{T^{1/2}}\sum_{t=1}^{T}(u_{t}-y_{t}^{-}) & =-a-\frac{T(\beta-1)}{T^{3/2}}\sum_{t=1}^{T}y_{t-1}+\frac{1}{T^{1/2}}\left(\sum_{t=1}^{T}\Delta y_{t}-\sum_{i=1}^{k-1}\phi_{i}\sum_{t=1}^{T}\Delta y_{t-i}\right)\\
 & =-a-\frac{T(\beta-1)}{T^{3/2}}\sum_{t=1}^{T}y_{t-1}+\frac{y_{T}-y_{0}}{T^{1/2}}-\sum_{i=1}^{k-1}\phi_{i}\frac{y_{T-i}-y_{0-i}}{T^{1/2}}\\
 & \indist-a-c\int_{0}^{1}Y_{\theta_{\phi}}(r)\diff r+\phi(1)[Y_{\theta_{\phi}}(1)-b_{0}]\\
 & =\phi(1)\left[Y_{\theta_{\phi}}(1)-c_{\phi}\int_{0}^{1}Y_{\theta_{\phi}}(r)\diff r-b_{0}\right]-a,
\end{align*}
where convergence holds by Assumption \ref{ass:INIT}, Theorem \ref{thm:ytARk}
and the CMT, recalling that $c_{\phi}=\phi(1)^{-1}c$.

\textbf{\enuref{ytminsq}.} The argument is analogous to that used
to prove Lemma \ref{lem:AR1aux}\enuref{AR1aux:oT}. Rewriting the
factorisation \eqref{eq:psidef} as
\[
\beta z+(1-z)\sum_{i=1}^{k-1}\phi_{i}z^{i}=1-(1-\rho z)\left[1-\sum_{i=1}^{k-1}\psi_{i}z^{i}\right]=\rho z+\sum_{i=1}^{k-1}\psi_{i}z^{i}(1-\rho z)
\]
we have from \eqref{eq:ARklagpoly} that
\[
y_{t}^{-}=\left[\beta y_{t-1}+\sum_{i=1}^{k-1}\phi_{i}\Delta y_{t-i}+\alpha+u_{t}\right]_{-}=\left[\rho y_{t-1}+\sum_{i=1}^{k-1}\psi_{i}\Delta_{\rho}y_{t-i}+\alpha+u_{t}\right]_{-}=[\rho y_{t-1}+v_{t}]_{-},
\]
where $v_{t}\defeq\sum_{i=1}^{k-1}\psi_{i}\Delta_{\rho}y_{t-i}+\alpha+u_{t}$.
Since $\rho y_{t-1}\geq0$, it follows that
\begin{align*}
0\geq y_{t}^{-}=(\rho y_{t-1}+v_{t})\indic\{\rho y_{t-1}+v_{t}\leq0\} & \geq v_{t}\indic\{\rho y_{t-1}+v_{t}\leq0\}.
\end{align*}
By Lemma \ref{lem:qdnegl}, $\smlnorm{\Delta_{\rho}y_{t}}_{2+\delta_{u}}$
is bounded uniformly in $t$. Hence, under \assref{MOM}, so too
is
\begin{align*}
\smlnorm{v_{t}}_{2+\delta_{u}} & \leq\sum_{i=1}^{k-1}\smlabs{\psi_{i}}\smlnorm{\Delta_{\rho}y_{t-i}}_{2+\delta_{u}}+\smlabs{\alpha}+\smlnorm{u_{t}}_{2+\delta_{u}}.
\end{align*}
whence it follows that $\max_{1\leq t\leq T}\smlabs{y_{t}^{-}}\leq\max_{1\leq t\leq T}\smlabs{v_{t}}=o_{p}(T^{1/2})$.
Hence, using the result of part~\enuref{numerator},
\begin{align*}
\sum_{t=1}^{T}(y_{t}^{-})^{2} & \leq\max_{1\leq t\leq T}\smlabs{y_{t}^{-}}\sum_{t=1}^{T}\smlabs{y_{t}^{-}}=-\max_{1\leq t\leq T}\smlabs{y_{t}^{-}}\sum_{t=1}^{T}y_{t}^{-}\\
 & =-\max_{1\leq t\leq T}\smlabs{y_{t}^{-}}\left(\sum_{t=1}^{T}u_{t}+O_{p}(T^{1/2})\right)=o_{p}(T).
\end{align*}

\textbf{\enuref{crossdelta}.} Let $s\in\{0,\ldots,k-1\}$. By Lemma
\ref{lem:partialinverse}, applying $\phi_{t-1}^{-1}(L)=\sum_{i=0}^{t-1}\varphi_{i}L^{i}$
to both sides of \eqref{eq:deltayform} yields
\begin{align*}
\Delta y_{t}-d_{t-1}(L)\Delta y_{1}=\phi_{t-1}^{-1}(L)\phi(L)\Delta y_{t} & =\sum_{i=0}^{t-1}\varphi_{i}L^{i}[(\beta-1)y_{t-1}+\alpha+u_{t}-y_{t}^{-}]
\end{align*}
and hence for $t\in\{1,\ldots,T\}$,
\begin{align}
\Delta y_{t} & =\sum_{i=0}^{t-1}\varphi_{i}(\alpha+u_{t-i})+(\beta-1)\sum_{i=0}^{t-1}\varphi_{i}y_{t-1-i}-\sum_{i=0}^{t-1}\varphi_{i}y_{t-i}^{-}+d_{t-1}(L)\Delta y_{1}\nonumber \\
 & \eqdef w_{t}+r_{1,t}+r_{2,t}+r_{3,t}.\label{eq:delydecomp}
\end{align}
Decompose
\[
w_{t}=\sum_{i=0}^{t-1}\varphi_{i}(\alpha+u_{t-i})=\sum_{i=0}^{\infty}\varphi_{i}u_{t-i}-\sum_{i=t}^{\infty}\varphi_{i}u_{t-i}+\frac{a}{T^{1/2}}\sum_{i=0}^{t-1}\varphi_{i}\eqdef\eta_{t}+r_{4,t}+r_{5,t}.
\]
The sequence $\{\eta_{t}\}$ is a stationary linear process whose coefficients
decay exponentially; hence by \citet[Thm.~3.7 and Rem.~3.9]{PS92AS},
\[
\frac{1}{T}\sum_{t=s+1}^{T}\eta_{t}\eta_{t-s}\inprob\sigma^{2}\sum_{n=0}^{\infty}\varphi_{n}\varphi_{n+s}.
\]
Since for all $i$, $|\varphi_{i}|<C\gamma_{\phi}^{i}$, $\gamma_{\phi}\in(0,1)$,
we have
\begin{align*}
\expect\left[\frac{1}{T}\sum_{t=1}^{T}r_{4,t}^{2}\right] & =\frac{\sigma^{2}}{T}\sum_{t=1}^{T}\sum_{i=t}^{\infty}\varphi_{i}^{2}=O(T^{-1}), & \frac{1}{T}\sum_{t=1}^{T}r_{5,t}^{2} & =\frac{a^{2}}{T^{2}}\sum_{t=1}^{T}\left(\sum_{i=0}^{t-1}\varphi_{i}\right)^{2}=O(T^{-1}).
\end{align*}
It follows by the CS inequality that $\frac{1}{T}\sum_{t=s+1}^{T}(\eta_{t}r_{\ell,t-s}+\eta_{t-s}r_{\ell,t})=o_{p}(1)$
for $\ell\in\{4,5\}$ and $\frac{1}{T}\sum_{t=s+1}^{T}r_{\ell,t}r_{\ell',t-s}=o_{p}(1)$
for $\ell,\ell'\in\{4,5\}$, and hence
\begin{equation}
\frac{1}{T}\sum_{t=s+1}^{T}w_{t}w_{t-s}=\frac{1}{T}\sum_{t=s+1}^{T}\eta_{t}\eta_{t-s}+o_{p}(1)\inprob\sigma^{2}\sum_{n=0}^{\infty}\varphi_{n}\varphi_{n+s}\label{eq:xprodlim}
\end{equation}
for each $s\in\{0,\ldots,k-1\}$. Thus, if we can show that
\begin{align}
\sum_{t=1}^{T}r_{1,t}^{2} & =O_{p}(1) & \sum_{t=1}^{T}r_{2,t}^{2} & =o_{p}(T) & \sum_{t=1}^{T}r_{3,t}^{2} & =O_{p}(1).\label{eq:rems1}
\end{align}
it will follow that
\begin{align*}
\frac{1}{T}\sum_{t=1}^{T}\Delta y_{t}\Delta y_{t-s} & =_{(1)}\frac{1}{T}\sum_{t=s+1}^{T}\Delta y_{t}\Delta y_{t-s}+o_{p}(1)\\
 & =_{(2)}\frac{1}{T}\sum_{t=s+1}^{T}w_{t}w_{t-s}+o_{p}(1)\inprob_{(3)}\sigma^{2}\sum_{n=0}^{\infty}\varphi_{n}\varphi_{n+s}.
\end{align*}
where $=_{(1)}$ holds by Lemma \ref{lem:qdnegl}, $=_{(2)}$ by \eqref{eq:delydecomp}--\eqref{eq:rems1}
and the CS inequality, and $\inprob_{(3)}$ by \eqref{eq:xprodlim}.

It remains to prove \eqref{eq:rems1}. Since $\beta-1=O(T^{-1})$,
there exists a $C<\infty$ such that
\[
\smlabs{r_{1,t}}\leq\frac{C}{T}\sum_{i=0}^{t-1}\smlabs{\varphi_{i}}\smlabs{y_{t-1-i}}\leq\left(\sum_{i=0}^{\infty}\smlabs{\varphi_{i}}\right)\frac{C}{\sqrt{T}}\max_{0\leq t\leq T-1}\left|y_{t}/\sqrt{T}\right|=O_{p}(T^{-1/2})
\]
uniformly in $t$ by Theorem \ref{thm:ytARk}; the first part of \eqref{eq:rems1}
follows immediately. Next,
\begin{align*}
\sum_{t=1}^{T}r_{2,t}^{2} & =\sum_{t=1}^{T}\left(\sum_{i=0}^{t-1}\varphi_{i}y_{t-i}^{-}\right)^{2}=\sum_{t=1}^{T}\sum_{i=0}^{t-1}\sum_{j=0}^{t-1}\varphi_{i}\varphi_{j}y_{t-i}^{-}y_{t-j}^{-}\\
 & =\sum_{i=0}^{T}\sum_{j=0}^{T}\varphi_{i}\varphi_{j}\sum_{t=\max\{i,j\}+1}^{T}y_{t-i}^{-}y_{t-j}^{-}=o_{p}(T)
\end{align*}
by $\sum_{i=0}^{\infty}\smlabs{\varphi_{i}}<\infty$, the result of
part~\enuref{ytminsq}, and the CS inequality. Finally, by Lemma
\ref{lem:partialinverse},
\[
\sum_{t=1}^{T}r_{3,t}^{2}=\sum_{t=1}^{T}\left(\sum_{i=1}^{k-1}d_{t-1,i}\Delta y_{1-i}\right)^{2}\leq C^{2}\left(\sum_{i=1}^{k-1}\smlabs{\Delta y_{1-i}}\right)^{2}\sum_{t=1}^{T}\gamma_{\phi}^{2(t-1)}=O_{p}(1).
\]

\textbf{\enuref{leveldelta}.} We first note that since $ab = \frac{1}{2}((a+b)^2 - a^2 - b^2)$,
\[
\sum_{t=1}^{T}y_{t-1}\Delta y_{t}=\frac{1}{2}\sum_{t=1}^{T}(y_t^2 - y_{t-1}^2 - (\Delta y_t)^2) =\frac{1}{2}\left\{ y_{T}^{2}-y_{0}^{2}-\sum_{t=1}^{T}(\Delta y_{t})^{2}\right\}
\]
which is $O_{p}(T)$ by Theorem \ref{thm:ytARk} and part~\enuref{crossdelta}
of this lemma. This gives the result when $s=1$. To obtain the result
for general $s\in\{0,\ldots,k-1\}$, we simply note that
\[
\sum_{t=1}^{T}y_{t}\Delta y_{t}=\sum_{t=1}^{T}(\Delta y_{t})^{2}+\sum_{t=1}^{T}y_{t-1}\Delta y_{t}
\]
and
\[
\sum_{t=1}^{T}y_{t-s}\Delta y_{t}=\sum_{t=1}^{T}\left(y_{t}-\sum_{r=0}^{s-1}\Delta y_{t-r}\right)\Delta y_{t}=\sum_{t=1}^{T}y_{t}\Delta y_{t}-\sum_{r=0}^{s-1}\sum_{t=1}^{T}\Delta y_{t}\Delta y_{t-r}
\]
when $s\geq1$; the result then follows by a further appeal to part~\enuref{crossdelta}
of this lemma.
\end{proof}
\begin{proof}[Proof of Theorem \ref{thm:olsARk}]
Let $\vec y_{t}\defeq(y_{t},y_{t-1},\ldots,y_{t-k+2})^{\trans}$ and
define
\begin{align*}
\mathcal{M}_{T} & \defeq\sum_{t=1}^{T}\begin{bmatrix}1 & y_{t-1} & \Delta\vec y_{t-1}^{\trans}\\
y_{t-1} & y_{t-1}^{2} & y_{t-1}\Delta\vec y_{t-1}^{\trans}\\
\Delta\vec y_{t-1} & y_{t-1}\Delta\vec y_{t-1} & \Delta\vec y_{t-1}\Delta\vec y_{t-1}^{\trans}
\end{bmatrix} & m_{T} & \defeq\sum_{t=1}^{T}\begin{bmatrix}1\\
y_{t-1}\\
\Delta\vec y_{t-1}
\end{bmatrix}(u_{t}-y_{t}^{-}).
\end{align*}
Then by rewriting \eqref{eq:arkwithytminus} as
\begin{align*}
y_{t} &= \alpha+\beta y_{t-1}+ \vec{\phi}^{\trans} \Delta \vec{y}_{t-1} + u_t - y_{t}^{-},
\end{align*}
the centred and rescaled OLS estimators $\hat{\mu}_{T}^{\trans}\defeq(\hat{\alpha}_{T},\hat{\beta}_{T},\hat{\vec{\phi}}_{T}^{\trans})$
of $\mu^{\trans}=(\alpha,\beta,\vec{\phi}^{\trans})$ are equal to
\begin{align}
\begin{bmatrix}T^{1/2}(\hat{\alpha}_{T}-\alpha)\\
T(\hat{\beta}_{T}-\beta)\\
\hat{\vec{\phi}}_{T}-\vec{\phi}
\end{bmatrix} & =D_{2,T}(\hat{\mu}_{T}-\mu)=(D_{1,T}^{-1}\mathcal{M}_{T}D_{2,T}^{-1})^{-1}D_{1,T}^{-1}m_{T}\label{eq:OLS}
\end{align}
where $D_{1,T}\defeq\diag\{T^{1/2},T,I_{k-1}T\}$, $D_{2,T}\defeq\diag\{T^{1/2},T,I_{k-1}\}$,
\begin{align}
D_{1,T}^{-1}\mathcal{M}_{T}D_{2,T}^{-1} & =\begin{bmatrix}1 & T^{-3/2}\sum_{t=1}^{T}y_{t-1} & T^{-1/2}\sum_{t=1}^{T}\Delta\vec y_{t-1}^{\trans}\\
T^{-3/2}\sum_{t=1}^{T}y_{t-1} & T^{-2}\sum_{t=1}^{T}y_{t-1}^{2} & T^{-1}\sum_{t=1}^{T}y_{t-1}\Delta\vec y_{t-1}^{\trans}\\
T^{-3/2}\sum_{t=1}^{T}\Delta\vec y_{t-1} & T^{-2}\sum_{t=1}^{T}y_{t-1}\Delta\vec y_{t-1} & T^{-1}\sum_{t=1}^{T}\Delta\vec y_{t-1}\Delta\vec y_{t-1}^{\trans}
\end{bmatrix}\label{eq:DMD}
\end{align}
and
\begin{equation}
D_{1,T}^{-1}m_{T}=\begin{bmatrix}T^{-1/2}\sum_{t=1}^{T}(u_{t}-y_{t}^{-})\\
T^{-1}\sum_{t=1}^{T}y_{t-1}(u_{t}-y_{t}^{-})\\
T^{-1}\sum_{t=1}^{T}\Delta\vec y_{t-1}(u_{t}-y_{t}^{-})
\end{bmatrix}.\label{eq:Dm}
\end{equation}

It remains to determine the weak limits of the elements of \eqref{eq:DMD}
and \eqref{eq:Dm}. We consider \eqref{eq:DMD} first. By Theorem
\ref{thm:ytARk} and the CMT,
\begin{equation}
\mathcal{Y}_{T}\defeq\begin{bmatrix}1 & T^{-3/2}\sum_{t=1}^{T}y_{t-1}\\
T^{-3/2}\sum_{t=1}^{T}y_{t-1} & T^{-2}\sum_{t=1}^{T}y_{t-1}^{2}
\end{bmatrix}\indist\begin{bmatrix}1 & \int Y_{\theta_{\phi}}(r)\diff r\\
\int Y_{\theta_{\phi}}(r)\diff r & \int Y_{\theta_{\phi}}^{2}(r)\diff r
\end{bmatrix}\eqdef\mathcal{Y}_{\theta_{\phi}}.\label{eq:YT}
\end{equation}
By Theorem \ref{thm:ytARk} and Lemma \ref{lem:ARkterms}\enuref{leveldelta},
\begin{equation}
\Xi_{T}\defeq\begin{bmatrix}T^{-1/2}\sum_{t=1}^{T}\Delta\vec y_{t-1}^\trans\\
T^{-1}\sum_{t=1}^{T}y_{t-1}\Delta\vec y_{t-1}^{\trans}
\end{bmatrix}=\begin{bmatrix}T^{-1/2}(\vec y_{T-1}-\vec y_{-1})^\trans\\
T^{-1}\sum_{t=1}^{T}y_{t-1}\Delta\vec y_{t-1}^{\trans}
\end{bmatrix}=O_{p}(1).\label{eq:XIT}
\end{equation}
By Lemma \ref{lem:ARkterms}\enuref{crossdelta},
\begin{equation}
\Omega_{T}\defeq T^{-1}\sum_{t=1}^{T}\Delta\vec y_{t-1}\Delta\vec y_{t-1}^{\trans}\inprob\Omega\label{eq:OMEGAT}
\end{equation}
where $\Omega_{ij}\defeq\sigma^{2}\sum_{n=0}^{\infty}\varphi_{n}\varphi_{n+\smlabs{i-j}}$.
It follows by the partitioned matrix inversion formula and the continuity
of matrix inversion that
\begin{equation}
(D_{1,T}^{-1}\mathcal{M}_{T}D_{2,T}^{-1})^{-1}=\begin{bmatrix}\mathcal{Y}_{T} & \Xi_{T}\\
o_{p}(1) & \Omega_{T}
\end{bmatrix}^{-1}=\begin{bmatrix}\mathcal{Y}_{T}^{-1} & -\mathcal{Y}_{T}^{-1}\Xi_{T}\Omega_{T}^{-1}\\
0 & \Omega_{T}^{-1}
\end{bmatrix}+o_{p}(1).\label{eq:MTstd}
\end{equation}

We turn next to \eqref{eq:Dm}. By Lemma \ref{lem:ARkterms}\enuref{numerator},
\begin{align}
\mathcal{Z}_{T}^{(\alpha)}\defeq\frac{1}{T^{1/2}}\sum_{t=1}^{T}(u_{t}-y_{t}^{-}) & \to\phi(1)\left[Y_{\theta_{\phi}}(1)-c_{\phi}\int_{0}^{1}Y_{\theta_{\phi}}(r)\diff r-b_{0}\right]-a\defeq\mathcal{Z}_{\theta_{\phi}}^{(\alpha)}.\label{eq:denom1}
\end{align}
Next, since only one of $y_{t}$ and $y_{t}^{-}$ can be nonzero,
$y_{t-1}y_{t}^{-}=-\Delta y_{t}y_{t}^{-}$, and hence
\begin{align}
\mathcal{Z}_{T}^{(\beta)}\defeq\frac{1}{T}\sum_{t=1}^{T}y_{t-1}(u_{t}-y_{t}^{-}) & =\frac{1}{T}\sum_{t=1}^{T}y_{t-1}u_{t}+\frac{1}{T}\sum_{t=1}^{T}\Delta y_{t}y_{t}^{-}\nonumber \\
 & =_{(1)}\frac{1}{T}\sum_{t=1}^{T}y_{t-1}u_{t}+o_{p}(1)\indist_{(2)}\sigma\int_{0}^{1}Y_{\theta_{\phi}}(r)\diff W(r)\defeq\mathcal{Z}_{\theta_{\phi}}^{(\beta)}\label{eq:denom2}
\end{align}
where $\indist_{(2)}$ holds by Theorem \ref{thm:ytARk} and \citet[Theorem 2.1]{ito_convergence},
and $=_{(1)}$ since
\begin{equation}
\abs{\frac{1}{T}\sum_{t=1}^{T}\Delta y_{t}y_{t}^{-}}^{2}\leq\frac{1}{T}\sum_{t=1}^{T}(\Delta y_{t})^{2}\frac{1}{T}\sum_{t=1}^{T}(y_{t}^{-})^{2}=o_{p}(1)\label{eq:dely-yminus}
\end{equation}
by the CS inequality and Lemma \ref{lem:ARkterms}\enuref{ytminsq}--\enuref{crossdelta}.
Finally, for $s\in\{1,\ldots,k-1\}$,
\begin{align*}
\frac{1}{T}\sum_{t=1}^{T}\Delta y_{t-s}(u_{t}-y_{t}^{-}) & =\frac{1}{T}\sum_{t=1}^{T}\Delta y_{t-s}u_{t}-\frac{1}{T}\sum_{t=1}^{T}\Delta y_{t-s}y_{t}^{-}=\frac{1}{T}\sum_{t=1}^{T}\Delta y_{t-s}u_{t}+o_{p}(1),
\end{align*}
by the same argument as which yielded \eqref{eq:dely-yminus}. Further,
by Lemma \ref{lem:qdnegl},
\[
\expect\left(\frac{1}{T}\sum_{t=1}^{T}\Delta y_{t-s}u_{t}\right)^{2}=\frac{\sigma^{2}}{T^{2}}\sum_{t=1}^{T}\expect(\Delta y_{t-s})^{2}=O(T^{-1}).
\]
Hence
\begin{equation}
\frac{1}{T}\sum_{t=1}^{T}\Delta y_{t-s}(u_{t}-y_{t}^{-})=O_{p}(T^{-1/2})+o_{p}(1)=o_{p}(1).\label{eq:denom3}
\end{equation}

Letting $\mathcal{Z}_{T}\defeq(\mathcal{Z}_{T}^{(\alpha)},\mathcal{Z}_{T}^{(\beta)})^{\trans}$
and $\mathcal{Z}_{\theta_{\phi}}\defeq(\mathcal{Z}_{\theta_{\phi}}^{(\alpha)},\mathcal{Z}_{\theta_{\phi}}^{(\beta)})^{\trans}$,
it follows from \eqref{eq:Dm}, \eqref{eq:denom1}, \eqref{eq:denom2}
and \eqref{eq:denom3} that
\[
D_{1,T}^{-1}m_{T}=\begin{bmatrix}\mathcal{Z}_{T}\\
o_{p}(1)
\end{bmatrix}\indist\begin{bmatrix}\mathcal{Z}_{\theta_{\phi}}\\
0
\end{bmatrix}.
\]
Therefore, recalling \eqref{eq:OLS}, and using \eqref{eq:YT}--\eqref{eq:MTstd},
we obtain
\begin{align*}
D_{2,T}(\hat{\mu}_{T}-\mu) & =\left(\begin{bmatrix}\mathcal{Y}_{T}^{-1} & -\mathcal{Y}_{T}^{-1}\Xi_{T}\Omega_{T}^{-1}\\
0 & \Omega_{T}^{-1}
\end{bmatrix}+o_{p}(1)\right)\begin{bmatrix}\mathcal{Z}_{T}\\
o_{p}(1)
\end{bmatrix}\\
 & =\begin{bmatrix}\mathcal{Y}_{T}^{-1}\mathcal{Z}_{T}\\
0
\end{bmatrix}+o_{p}(1)\indist\begin{bmatrix}\mathcal{Y}_{\theta_{\phi}}^{-1}\mathcal{Z}_{\theta_{\phi}}\\
0
\end{bmatrix}.\qedhere
\end{align*}
\end{proof}
\begin{proof}[Proof of Corollary \ref{thm:tstatARk} ($k\geq 2$)]
We first show that $\hat{\sigma}_{T}^{2}\inprob\sigma^{2}$. Adapting
the argument from the $k=1$ case, we have
\begin{align*}
\sum_{t=1}^{T}[\hat{u}_{t}^{2}-(u_{t}-y_{t}^{-})^{2}] & =\sum_{t=1}^{T}[(\alpha-\hat{\alpha}_{T})+(\beta-\hat{\beta}_{T})y_{t-1}+(\vec{\phi}-\hat{\vec{\phi}}_{T})^{\trans}\Delta\vec y_{t-1}][\hat{u}_{t}+(u_{t}-y_{t}^{-})]\\
 & =(\alpha-\hat{\alpha}_{T})\sum_{t=1}^{T}(u_{t}-y_{t}^{-})+(\beta-\hat{\beta}_{T})\sum_{t=1}^{T}y_{t-1}(u_{t}-y_{t}^{-})\\
 & \qquad\qquad+(\vec{\phi}-\hat{\vec{\phi}}_{T})^{\trans}\sum_{t=1}^{T}\Delta\vec y_{t-1}(u_{t}-y_{t}^{-})\\
 & =O_{p}(T^{-1/2})O_{p}(T^{1/2})+O_{P}(T^{-1})O_{p}(T)+o_{p}(1)o_{p}(T)=o_{p}(T)
\end{align*}
where the the orders of the sums follow from Lemma \ref{lem:ARkterms}\enuref{numerator},
\eqref{eq:denom2} and \eqref{eq:denom3}, and the rates of convergence
of the OLS estimators from Theorem \ref{thm:olsARk}. It follows that
\[
\frac{1}{T}\sum_{t=1}^{T}\hat{u}_{t}^{2}=\frac{1}{T}\sum_{t=1}^{T}(u_{t}-y_{t}^{-})^{2}+o_{p}(1)=\frac{1}{T}\sum_{t=1}^{T}u_{t}^{2}+o_{p}(1)\inprob\sigma^{2}
\]
by Lemma \ref{lem:ARkterms}\enuref{ytminsq}, the LLN and the CS
inequality.

To complete the proof, we note that since $\phi(1)^{-1}J_{\theta_{\phi}}(r)=Y_{\theta_{\phi}}(r)$,
\[
\begin{split}\mathcal{Y}_{\theta_{\phi}} & =\begin{bmatrix}1 & \int_{0}^{1}Y_{\theta_{\phi}}(r)\diff r\\
\int_{0}^{1}Y_{\theta_{\phi}}(r)\diff r & \int_{0}^{1}Y_{\theta_{\phi}}^{2}(r)\diff r
\end{bmatrix}\\
 & =\begin{bmatrix}1 & 0\\
0 & \phi(1)^{-1}
\end{bmatrix}\begin{bmatrix}1 & \int_{0}^{1}J_{\theta_{\phi}}(r)\diff r\\
\int_{0}^{1}J_{\theta_{\phi}}(r)\diff r & \int_{0}^{1}J_{\theta_{\phi}}^{2}(r)\diff r
\end{bmatrix}\begin{bmatrix}1 & 0\\
0 & \phi(1)^{-1}
\end{bmatrix}.
\end{split}
\]
It follows that the result of Theorem \ref{thm:olsARk} can be rewritten
as
\[
\begin{bmatrix}T^{1/2}(\hat{\alpha}_{T}-\alpha)\\
T(\hat{\beta}_{T}-\beta)
\end{bmatrix}\indist\begin{bmatrix}\mathfrak{a}_{\theta_{\phi}}\\
\phi(1)\mathfrak{b}_{\theta_{\phi}}
\end{bmatrix}.
\]
By \eqref{eq:YT} and \eqref{eq:MTstd}, the upper left $2\times2$
block of $D_{1,T}^{-1}\mathcal{M}_{T}D_{2,T}^{-1}$ converges to $\mathcal{Y}_{\theta_{\phi}}$.
Thus, $D_{2,T}\mathcal{M}_{T}^{-1}D_{1,T}$ converges to $\mathcal{Y}_{\theta_{\phi}}^{-1}$,
and so $T\mathcal{M}_{T}^{-1}(1,1)\indist\mathcal{J}_{\theta_{\phi}}^{-1}(1,1)$
and $T^{2}\mathcal{M}_{T}^{-1}(2,2)\indist\phi(1)^{2}\mathcal{J}_{\theta_{\phi}}^{-1}(2,2)$.
Hence, the result follows by the CMT.
\end{proof}

\section{Role of the joint spectral radius}\label{sec:appendix_JSR}

\begin{example}[Stationary roots of $\phi(z)$ are not sufficient for Theorem \ref{ytARk}]\label{ex:stat_not_suff}

\emph{Let $\beta=1$ and $$B(z)=(1-z)\phi(z),\quad\phi(z)=(1-z+0.9z^2)(1-1.3z+0.9z^2).$$ The roots of $\phi(z)$ are larger than $1$ in absolute value (roots are approximately $0.56\pm0.9 i$ and $0.7\pm 0.77i$). However, simulations in Figure \ref{fig:JSR_ex2} indicate that $\{\Delta y_{t}\}$ is not stochastically bounded and $\{ y_{t}\}$ grows exponentially, preventing the result of Theorem~\ref{ytARk} from holding.}
\end{example}
\begin{figure}[h]
\begin{subfigure}{.45\textwidth} \centering \includegraphics[width=1\linewidth]{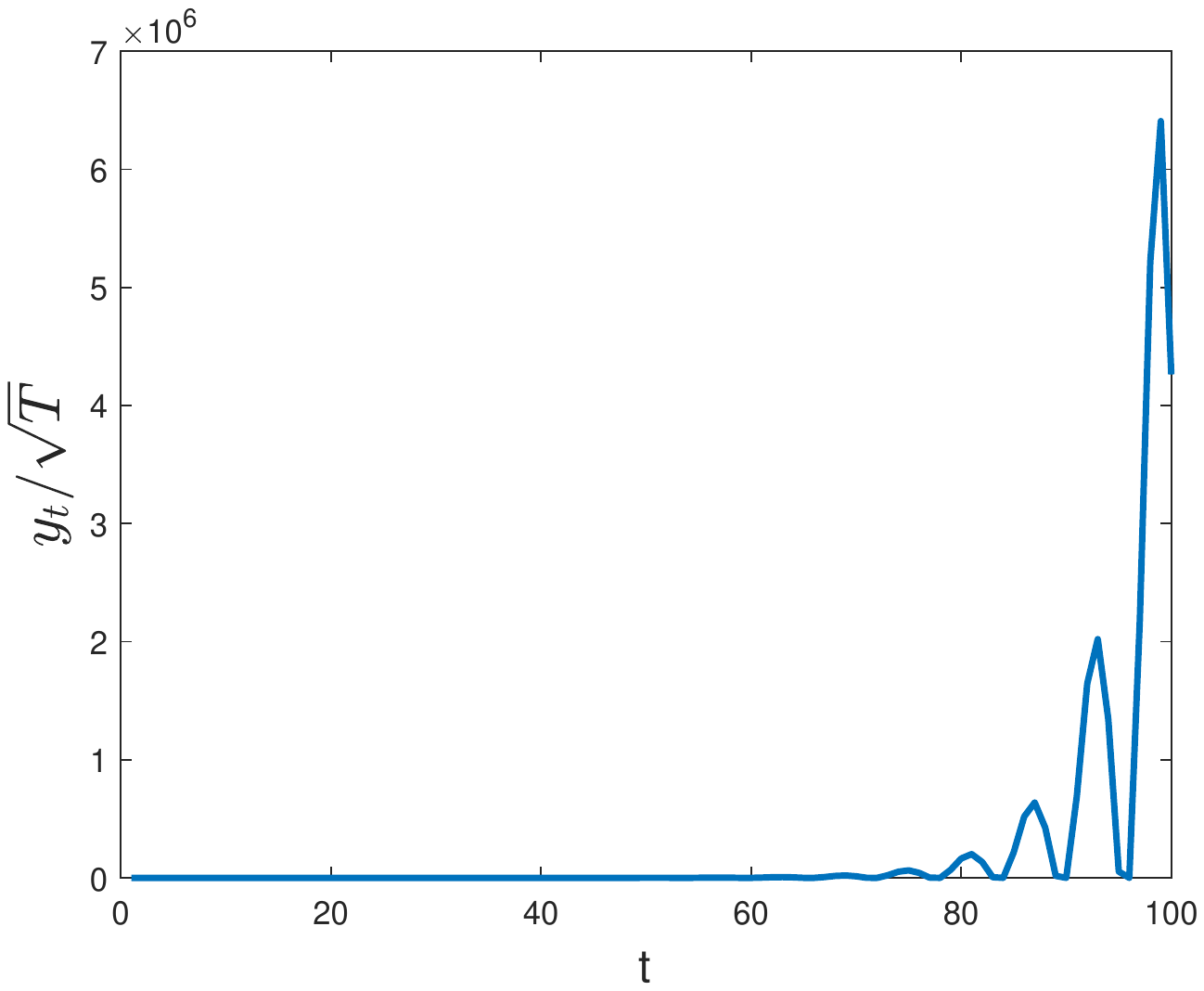}
\caption{The rescaled by $\sqrt{T}$ process explodes.}
\label{fig:JSR_ex2_y}
\end{subfigure}
\begin{subfigure}{.45\textwidth} \centering \includegraphics[width=1\linewidth]{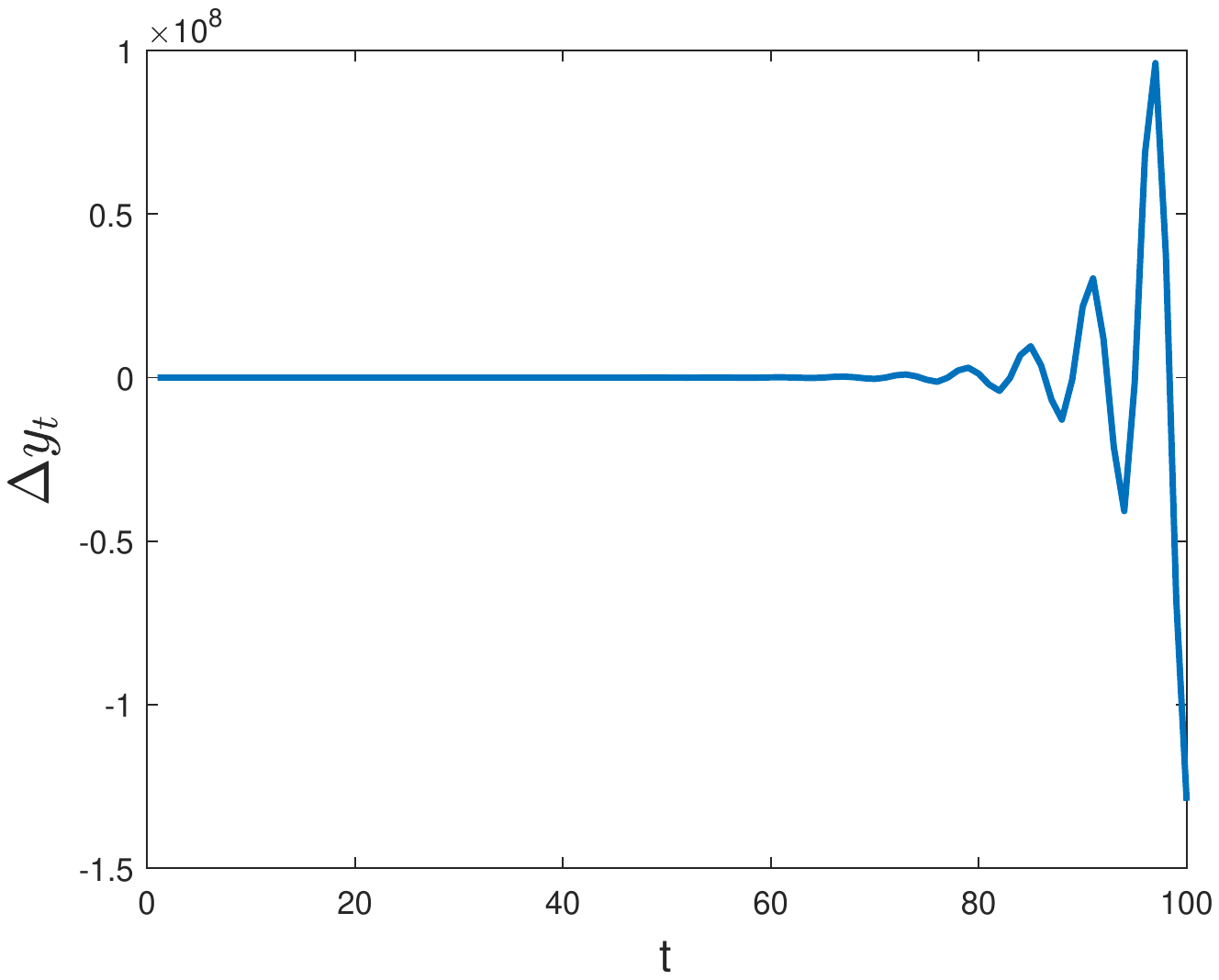}
\caption{First differences are not bounded.}
\label{fig:JSR_ex2_dy}
\end{subfigure}
\caption{Stationary roots of $\phi(z)$ are not sufficient for Theorem \ref{ytARk}. Data generating process corresponds to Example \ref{ex:stat_not_suff} with $T=1000$, $y_{0}=0$, $u_{t}\thicksim\text{i.i.d.}~\mathcal{N}(0,1)$.}
\label{fig:JSR_ex2}
\end{figure}

\begin{example}[Assumption \ref{ass:JSR} is not necessary for Theorem \ref{ytARk}]\label{ex:JSR_not_nec}

\emph{Let $$B(z)=(1-z)\phi(z),\quad\phi(z)=1-1.3z+0.8z^2.$$ That is, $k=3$, $\beta=1$, $\phi_{1}=1.3$ and $\phi_{2}=-0.8$. The roots of $\phi(z)$ are larger than $1$ in absolute value (roots are approximately $0.8\pm0.77i$). However, the largest (in modulus) eigenvalue of $F_{1}F_{1}F_{0}$ is $-1.04$, and so $\lambda_{\jsr}(\{F_{0},F_{1}\})\geq\smlabs{-1.04}^{1/3}>1$.
On the other hand, simulations in Figure \ref{fig:JSR_ex1} indicate that $\{\Delta y_{t}\}$ is stochastically bounded. Thus, Assumption \ref{ass:JSR} is not necessary for Theorem \ref{ytARk}.}
\end{example}
\begin{figure}[t]
\begin{subfigure}{.45\textwidth} \centering \includegraphics[width=1\linewidth]{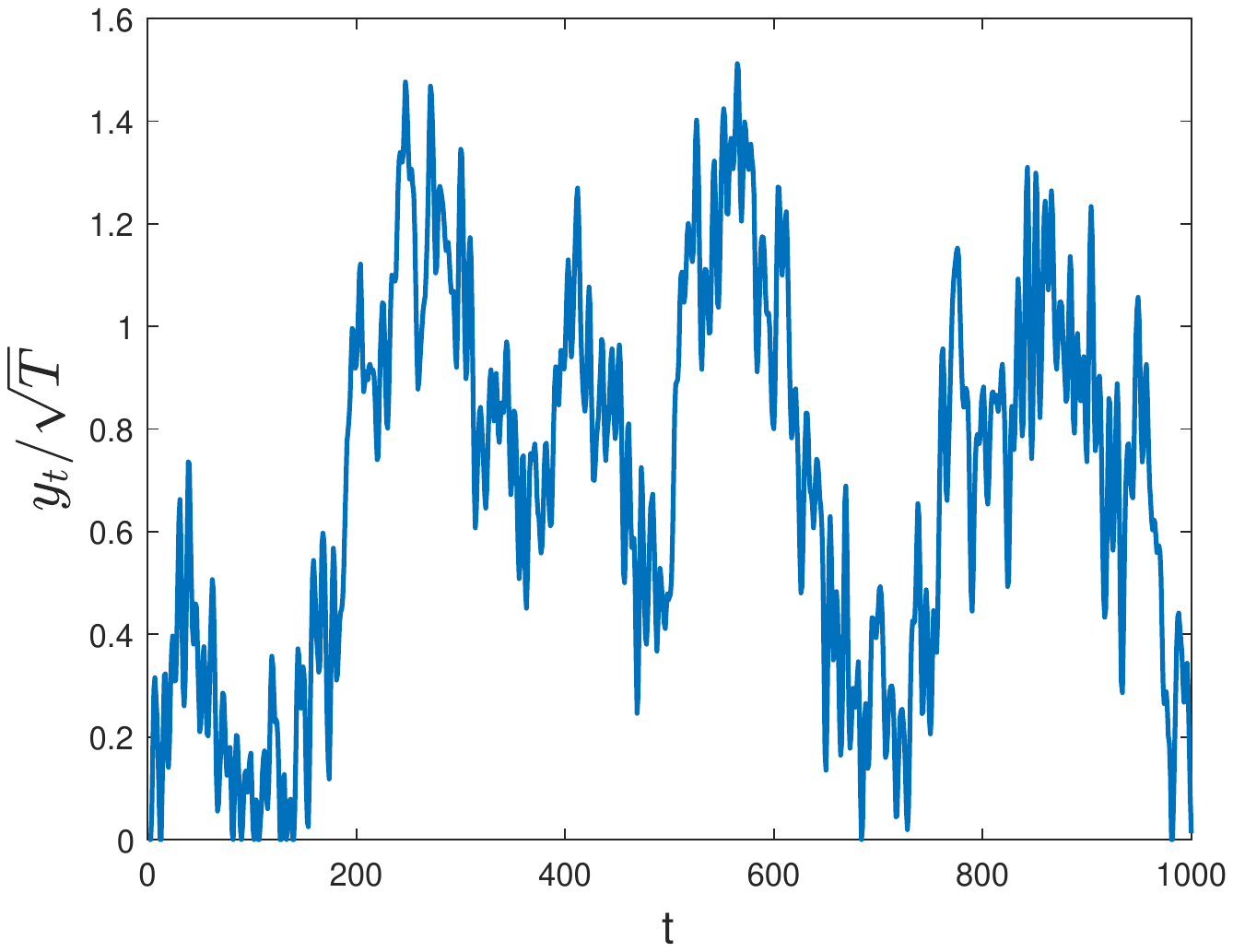}
\caption{The rescaled by $\sqrt{T}$ process.}
\label{fig:JSR_ex1_y}
\end{subfigure}
\begin{subfigure}{.45\textwidth} \centering \includegraphics[width=1\linewidth]{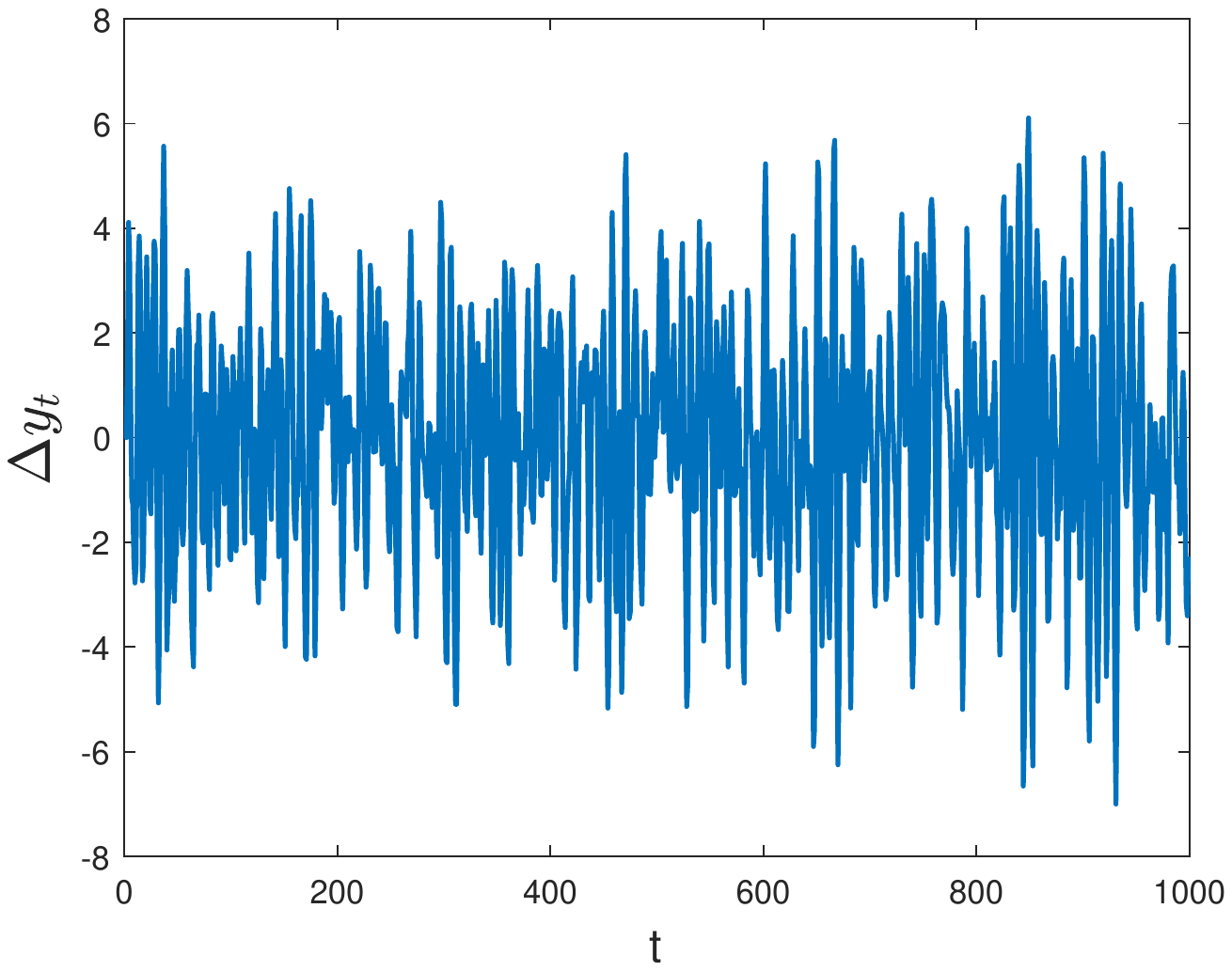}
\caption{First differences are bounded.}
\label{fig:JSR_ex1_dy}
\end{subfigure}
\caption{Assumption \ref{ass:JSR} is not necessary for Theorem \ref{ytARk}. Data generating process corresponds to Example \ref{ex:JSR_not_nec} with $T=1000$, $y_{0}=0$, $u_{t}\thicksim\text{i.i.d.}~\mathcal{N}(0,1)$.}
\label{fig:JSR_ex1}
\end{figure}

\begin{proof}[Proof of Lemma \ref{lem:JSR_holds}]
Let $\sum_{i=1}^{k-1}|\phi_i|=\Phi<1$ and $M\in\mathcal{A}^{n}=\{\prod_{j=1}^{n}A_{i}\mid A_{j}\in\{F_0,F_1\}\}$. Let us show that the spectral radius of $M$, $\lambda(M)$, is at most $\Phi^{\lfloor n/(k-1) \rfloor}$, where $\lfloor\cdot\rfloor$ is a floor function. Then,
$$\lambda_{\jsr}(\{F_0,F_1\})=\limsup_{n\goesto\infty}\sup_{M\in\mathcal{A}^{n}}\lambda(M)^{1/n}
\leq\limsup_{n\goesto\infty}\left(\Phi^{\left\lfloor\tfrac{n}{k-1}\right\rfloor}\right)^{1/n}=\Phi^{\tfrac1{k-1}}<1.$$

First, suppose $n=k-1$, so that $M=F_{\delta_{k-1}}\cdot\ldots\cdot F_{\delta_1},$ where $\delta_j\in\{0,1\},\,j=1,\ldots,k-1$. Let us show that for any vector $x=(x_1,\ldots,x_{k-1})^{\trans}\in\mathbb{R}^{k-1}$, $\|Mx\|_{\infty}\leq\Phi\|x\|_{\infty}$, where $\|x\|_{\infty}:=\max\limits_{i=1,\ldots,k-1}\{|x_i|\}$, i.e.~the maximum norm $\ell_{\infty}$. Let $x^s:=F_{\delta_s}\cdot\ldots\cdot F_{\delta_1}x$, $s=1,\ldots,k-1$.

We prove by induction that for any $s=1,\ldots,k-1$, $|x^s_i|\leq\Phi\|x\|_{\infty}$ for $i=1,\ldots,s$ and $|x^s_i|\leq\|x\|_{\infty}$ for $i>s$. To verify induction base, note that  since
\begin{equation}\label{eq:Fdelta_x}
F_{\delta}x=\left(\delta\phi_1x_1+\sum_{i=2}^{k-1}\phi_i x_i,\,\delta x_1,\,x_2,\ldots,x_{k-2}\right)^{\trans},
\end{equation}
$|x^1_1|\leq\Phi\|x\|_{\infty}$ and $|x^1_i|\leq|x_{i-1}|\leq\|x\|_{\infty}$ for $i>1$. To verify induction step, suppose that the claim holds for $s$ and let us prove it for $s+1$. Using $x^{s+1}=F_{\delta_{s+1}}x^s$ and \eqref{eq:Fdelta_x}, we get $|x^{s+1}_1|\leq\Phi\|x^s\|_{\infty}\leq\Phi\|x\|_{\infty}$, where the last inequality follows from induction hypothesis. Next, $|x^{s+1}_2|=|\delta x^{s}_1|\leq|x^{s}_1|\leq\Phi\|x\|_{\infty}$ and for $i>2$ we have $|x^{s+1}_i|=|x^{s}_{i-1}|$. Thus, for $i>s+1$ we have $|x^{s+1}_i|=|x^{s}_{i-1}|\leq\|x\|_{\infty}$, while for $2<i\leq s+1$, $|x^{s+1}_i|=|x^{s}_{i-1}|\leq\Phi\|x\|_{\infty}$.

It follows from the preceding that since $Mx=x^{k-1}$, all coordinates of $Mx$ are bounded by $\Phi\|x\|_{\infty}$ and $\|Mx\|_{\infty}\leq\Phi\|x\|_{\infty}$.

Now consider general $n$ and $M\in\mathcal{A}^{n}$, $M=F_{\delta_n}\cdot\ldots\cdot F_{\delta_1}$. We know that for any $x\in\mathbb{R}^{k-1}$,
$\|F_{\delta_{k-1}}\cdot\ldots\cdot F_{\delta_1}x\|_{\infty}\leq\Phi\|x\|_{\infty}$. Thus, $\|F_{\delta_{2(k-1)}}\cdot\ldots\cdot F_{\delta_1}x\|_{\infty}\leq\Phi\|F_{\delta_{k-1}}\cdot\ldots\cdot F_{\delta_1}x\|_{\infty}\leq\Phi^2\|x\|_{\infty}$ and letting $\tilde{n}=\left\lfloor\tfrac{n}{k-1}\right\rfloor$ by iterative back-substitution we get $\|F_{\delta_{\tilde{n}(k-1)}}\cdot\ldots\cdot F_{\delta_1}x\|_{\infty}\leq\Phi^{\tilde{n}}\|x\|_{\infty}$. Finally note from \eqref{eq:Fdelta_x} that applying $F_\delta$ cannot increase the maximum norm, so that
$$\|Mx\|_{\infty}=\|F_{\delta_n}\cdot\ldots\cdot F_{\delta_1}x\|_{\infty}\leq\|F_{\delta_{\tilde{n}(k-1)}}\cdot\ldots\cdot F_{\delta_1}x\|_{\infty}\leq\Phi^{\tilde{n}}\|x\|_{\infty}.$$

If $x$ is an eigenvector of $M$ with the corresponding eigenvalue $\lambda$, we must have $Mx=\lambda x$ and $\|Mx\|_{\infty}=|\lambda|\cdot\| x\|_{\infty}$. Since $\|Mx\|_{\infty}\leq\Phi^{\tilde{n}}\|x\|_{\infty}$, we must also have $|\lambda|\leq\Phi^{\tilde{n}}$. Therefore, the spectral radius of $M$, $\lambda(M)$, is at most $\Phi^{\left\lfloor\tfrac{n}{k-1}\right\rfloor}$, which completes the proof.
\end{proof}

\bibliographystyle{ecca}
\bibliography{LUR_tobit_biblio}

\end{document}